\long\def\comment#1{}
\long\def\comment#1{}
\newtheorem{theorem}{Theorem}
\newtheorem{algorithm}{Algorithm}[section]
\newtheorem{corollary}{Corollary}
\newtheorem{lemma}{Lemma}
\theoremstyle{definition}
\newtheorem{definition}{Definition}
\newtheorem{remark}{Remark}[section]
\numberwithin{remark}{section}
\newcommand{\be}{\begin{eqnarray}}
\newcommand{\ee}{\end{eqnarray}}
\newcommand{\NN}{\mathbf{N}}
\newcommand{\ba}{\begin{array}}
\newcommand{\ea}{\end{array}}
\newcommand{\bs}{\begin{align}\begin{split}\nonumber}
\newcommand{\bsnumber}{\begin{align}\begin{split}}
\newcommand{\es}{\end{split}\end{align}}
\renewcommand{\[}{\left[}
\renewcommand{\]}{\right]}
\renewcommand{\hat}{\widehat}
\renewcommand{\qed}{\hfill{\tiny \ensuremath{\blacksquare} }}%
\newcommand{\cc}{\mathbf{c}}
\newcommand{\Gn}{\mathbb{G}_n}
\newcommand{\Ep}{{\mathrm{E}}}
\newcommand{\barEp}{\bar \Ep}
\newcommand{\En}{{\mathbb{E}_n}}
\renewcommand{\Pr}{{\mathrm{P}}}
\def\RR{ {\mathbb{R}}}
\def\supp{{\rm support}}
\newcommand{\semin}[1]{\phi_{{\rm min}}(#1)}
\newcommand{\semax}[1]{\phi_{{\rm max}}(#1)}
\renewcommand{\hat}{\widehat}
\renewcommand{\leq}{\leqslant}
\renewcommand{\geq}{\geqslant}
\newcommand{\sign}{ {\rm sign}}
\newcommand{\hvpi}{{\hat \varphi_{i}}}
\newcommand{\vpi}{{\varphi_{i}}}
\newcommand{\diag}{{\rm diag}}
\newcommand{\G}{{G}}
\newcommand{\CR}{{\mathcal{CR}}}
\begin{document}

\title[Post-Selection Inference for GLMs]
{Post-Selection Inference for Generalized Linear Models with Many Controls}

\author{Alexandre Belloni, Victor Chernozhukov and Ying Wei}

\date{First version: November 2012,  this version \today.}

\begin{abstract}
This paper considers generalized linear models in the presence of many controls. We lay out a general methodology to estimate an effect of interest based on the construction of an instrument that immunize against model selection mistakes and apply it to the case of logistic binary choice model. More specifically we propose new methods for estimating and constructing confidence regions for a regression parameter of primary interest $\alpha_0$, a parameter in front of the regressor of interest, such as the treatment variable or a policy variable.  These methods allow to estimate $\alpha_0$ at the root-$n$ rate when the total number $p$ of other regressors, called controls, potentially  exceed the sample size $n$ using  sparsity assumptions.  The sparsity assumption means that there is a subset of $s<n$ controls which suffices to accurately approximate the nuisance part of the regression function. Importantly, the estimators and these resulting confidence regions are valid uniformly over $s$-sparse models satisfying $s^2\log^2 p = o(n)$ and other technical conditions. These procedures do not rely on traditional consistent model selection arguments for their validity. In fact, they are robust with respect to moderate model selection mistakes in variable selection. Under suitable conditions, the estimators are semi-parametrically efficient in the sense of attaining the semi-parametric efficiency bounds for the class of models in this paper.
~\\

\noindent Key words: uniformly valid inference, instruments, double selection, Neymanization, optimality, sparsity, model selection
\end{abstract}

\maketitle
%
%

\section{Introduction}

The literature on high-dimensional generalized linear models has experienced rapid development \cite{vdGeer,Unified2012}.  As in the case of linear mean regression models, a striking result of this literature is to achieve consistency when the total number of covariates $p$ is potentially much larger than the sample size $n$. The main underlying assumption for achieving consistency is sparsity, namely that the number of relevant controls is at most $s$, which is much smaller than $n$.  Much of the interest focuses on $\ell_1$-penalized estimators that achieve desirable theoretical and computational properties, at least when the log-likelihood functions are concave. The theoretical properties are analogous to those of the corresponding $\ell_1$-penalized least squares estimator for linear mean regression models called Lasso \cite{T1996,BickelRitovTsybakov2009}. Results include prediction error consistency, consistency of the parameter estimates in $\ell_k$-norms, variable selection consistency, and minimax-optimal rates.

Several papers have focused on high-dimensional logistic binary choice models, trying to exploit their structure in detail. $\ell_1$-penalized logistic regressions models were studied in \cite{Bunea2008}, \cite{Bach2010}, and \cite{Kwemou2012}. Group logistic regression were studied in \cite{GroupLogistic2008} and \cite{Kwemou2012} to exploit addition sparsity patterns. Ising models were considered in \cite{Ising2010} and connections with robust $1$-bit recovery were derived in \cite{PlanVershynin2012}. These works also derive rates of convergence for the coefficients, prediction error consistency, and variable selection consistency under various conditions.

This paper attacks the problem of estimation and inference on a regression coefficient of interest in generalized linear models allowing for the total number of covariates $p$ to be much larger than the sample size $n$. Specifically, we construct $\sqrt{n}$-consistent estimators and confidence regions for a parameter of interest $\alpha_0$, which measures the  impact of a regressor of interest -- typically a ``policy variable"-- on the regression function. Importantly, we show that the estimator is $\sqrt{n}$-consistent and the confidence regions achieve the required asymptotic coverage uniformly over many data-generating processes.  We discuss the model framework for generalized linear models and then provide estimators to the logistic regression case to illustrate the results.

It is important to note that our estimation and inferential results are valid  \textit{without assuming}  the conventional ``separation condition" -- namely, without assuming that all the non-zero coefficients are sufficiently separated from zero.  Although the  separation condition is commonly used and might be appealing in some technometric applications (e.g. signal processing), it is often unrealistic and not credible in econometric, biometric, and many other applications. Even if applicable, it might not lead to  accurate approximations of the finite sample behavior of estimation and inference procedures. Our procedures are robust to  violation of the separation condition, and, thus, are robust to moderate model selection mistakes which inevitably occur in many applications (mistakes are very likely to occur when some coefficients are at the range of $O(n^{-1/2})$ which is typically not distinguishable from zero).

Our work contributes to a growing literature that avoids imposing  separation conditions. In the context of instrumental regression, \cite{BellChernHans:Gauss} and \cite{BellChenChernHans:nonGauss} provide uniformly valid estimation and inference methods for instrumental variable models, using either  post-selection or $\ell_1$-regularization methods to estimate ``optimal instruments."  They provide a $\sqrt{n}$-consistent, semi-parametrically efficient estimator of the main low-dimensional structural parameter. In the context of the linear mean regression model, \cite{BelloniChernozhukovHansenVolume,BelloniChernozhukovHansen2011} proposed a ``double selection" approach to constructing uniformly valid estimation and inference methods, and \cite{c.h.zhang:s.zhang} used one-step corrections to $\ell_1$-regularized estimators. In either case a $\sqrt{n}$-consistent, semi-parametrically efficient estimator of the  low-dimensional regression parameter of interest is provided.  In the the case of  linear quantile regression models,
\cite{BelloniChernozhukovKato2013a,BelloniChernozhukovKato2013b} provide uniformly $\sqrt{n}$-consistent estimators and uniformly valid inference methods for least absolute deviations and quantile regressions.  In an independent and contemporaneous work, \cite{vandeGeerBuhlmannRitov2013}  propose an approach to inference in generalized linear models, based upon the one-step correction of $\ell_1$-penalized estimator, where the pieces of the corrections are estimated via (approximate) Lasso inversion of  the sample information matrix; they also provide theoretical analysis under high-level conditions.  The approach taken in the present paper is an independent proposal, and relies instead on either optimal instrument strategy or the double selection strategy which is related to Neyman's approach to dealing with nuisance parameters. 

The aforementioned works as well as the current approach deviate substantially from the traditional approach of performing inference based upon perfect model selection results. \cite{leeb:potscher:hodges,potscher:leeb:dpe,potscher,LeebPotscher2005} have shown that such traditional/naive inference approach is not robust to violations of the separation condition, which bounds the magnitude of the non-zero coefficients away from zero. The naive post selection estimators and inference based upon them break down in the sense of failing to achieve $\sqrt{n}$-consistency and asymptotic normality when the separation condition is violated.  We shall confirm the failure of such naive post selection procedures in Monte-Carlo experiments.  In sharp contrast our procedure, by construction, is robust to violation of such assumptions.
We shall demonstrate this via theoretical results as well as via Monte-Carlo experiments.  The theoretical results hold uniformly in the class of $s$-sparse models and can be shown also to hold over approximately sparse models, using arguments similar to those used for linear mean and quantile models in \cite{BelloniChernozhukovHansen2011} and \cite{BelloniChernozhukovKato2013a,BelloniChernozhukovKato2013b}.

We construct our estimators and confidence regions via three steps. The first step use post-model selection methods to estimate the nuisance part of the regression -- the part of the regression function associated to controls (i.e., non-main regressors).  The second step uses post-model selection to estimate an optimal instrument. The third step suitably combines these estimates to form estimating equations that are immunized against crude estimation of the nuisance functions. Solutions of these equations lead to our proposed estimators and confidence regions. The framework allows for different methods to be used on each step leading to different estimators for generalized linear models. For the case of logit link function, we propose one estimator based upon instrumental logistic regression with optimal instrument and another estimator based upon double selection logistic regression. We verify the uniform validity of these procedures and demonstrate their good properties in a wide variety of experiments. While both implementations perform well, the double selection procedure emerged as the clear winner in these experiments.  Our results and proofs reveal that many different estimators can be used as ingredient in the three steps of the algorithm, as long as a required sparsity and rates for estimating nuisance functions are achieved. For example, the first and second steps can be based not only on post-selection estimators but also on $\ell_1$-regularized estimators, while the third step can be alternatively approximated by a one-step correction from an initial value. Therefore several implementations having the same asymptotic properties are possible. We narrowed down our formal theoretical analysis to the set of procedures that exhibited the best performance in  Monte-Carlo experiments (for example, Lasso methods performed worse than post-Lasso methods for estimating the nuisance parts, and one-step corrections performed worse than the exact solution of the estimating equation). One of the main results is to establish $\sqrt{n}$-consistency and asymptotic normality of estimators for generalized linear modes under high-level conditions on nuisance parameters.

Our constructions of the final estimators and confidence regions  mainly make use of the post-model selection estimators in estimating the nuisance part of the regression function as well as the optimal instrument.  As mentioned earlier, we focus on using selection as a means of regularization (which is necessary when $p>n$), mainly because compared to other methods of regularization, such as $\ell_1$-penalized maximum likelihood, they performed best in  a wide set of experiments. In order to develop sharp results for these estimators we must control sparsity effectively. We therefore provide sparsity bounds for $\ell_1$-penalized logistic maximum likelihood estimators, which is used for selection, and also derive the rates of convergence for the post-model selection logistic maximum likelihood estimator. These results  are of independent interest.   In the estimation of optimal instruments, which we  use as an ingredient in building the optimal estimating equation  to create immunization property, we rely on post-selection least squares estimator with data dependent weights.  The presence of data-dependent weights creates several interesting technical challenges. Finally,  to obtain the asymptotic approximations to the estimators of regression coefficients of interest we rely on empirical process methods, using self-normalized maximal inequalities and entropy calculations that rely on the sparsity of the models selected via data-driven procedures. These proofs are of independent interests in other types of generalized linear models.

We organize the remainder of the paper as follows. In Section \ref{Sec:Model}, we present the framework for generalized linear models and the proposed estimators specialized to the logistic link function case. In Section \ref{Sec:Main} we provide the statements of our main results on the uniform validity of the estimators and confidence regions. We present primitive conditions for the logistic case and high-level conditions for generalized linear models. Section \ref{SecMC} contains a Monte-Carlo experiment. We present the proofs of these results in Appendix \ref{Sec:ProofMain}.  In Appendix \ref{Sec:AnalysisAux} we collect results on Lasso and Post-Lasso with estimated weights (Appendix \ref{Sec:EstLasso}) as well
as results on $\ell_1$-penalized Logistic regression and post model selection Logistic regression (Appendix \ref{Sec:Step1}). In Appendix \ref{Sec:AuxiliaryInequalities} we present auxiliary inequalities.

\subsection{Notation}
Denote by $(\Omega,\Pr)$ the underlying probability space.
The notation $\En[\cdot]$ denotes the
average over index $1 \leq i \leq n$, i.e., it simply abbreviates the notation $n^{-1} \sum_{i=1}^n[\cdot]$. For example, $\En[x_{ij}^{2}] = n^{-1} \sum_{i=1}^{n}x_{ij}^{2}$. Moreover, we use the notation $\barEp[\cdot]=\En[\Ep[\cdot]]$. For example, $\barEp[ v_{i}^{2} ] = n^{-1}\sum_{i=1}^n\Ep[ v_{i}^{2} ]$. For a function $f: \RR \times \RR \times \RR^{p} \to \RR$,  we write $\Gn (f) = n^{-1/2} \sum_{i=1}^{n} (f(y_{i},d_{i},x_{i}) - \Ep [ f(y_{i},d_{i},x_{i}) ])$.
We denote the ${l}_1$-norm as $\|\cdot\|_1$, ${l}_2$-norm as $\|\cdot\|$, $l_\infty$-norm as $\|\cdot\|_\infty$, and the ``${l}_0$-norm" as $\|\cdot\|_0$ to denote the number of non-zero components of a vector.   For a sequence $(t_{i})_{i=1}^{n}$, we denote $\| t_{i} \|_{2,n} = \sqrt{ \En [ t_{i}^{2} ]}$.
For example, for a vector $\delta \in \RR^{p}$, $\| x_{i}'\delta \|_{2,n} = \sqrt{ \En [(x_{i}'\delta)^{2}]}$ denotes the prediction norm of $\delta$.
Given a vector $\delta \in \RR^p$, and a set of
indices $T \subseteq \{1,\ldots,p\}$, we denote by $\delta_T \in \RR^p$ the vector such that $(\delta_{T})_{j} = \delta_j$ if $j\in T$ and $(\delta_{T})_{j}=0$ if $j \notin T$. The support of $\delta$ as $\supp(\delta) = \{ j \in \{1,...,p\}: \delta_j \neq 0\} $.
We use the notation $(a)_+ = \max\{a,0\}$, $a \vee b = \max\{ a, b\}$, and $a \wedge b = \min\{ a , b \}$. We also use the notation $a \lesssim b$ to denote $a \leqslant c b$ for some constant $c>0$ that does not depend on $n$; and $a\lesssim_P b$ to denote $a=O_P(b)$. We assume that the quantities such as $p$, $s$, $y_{i}, d_i, x_{i}, \beta_{0}, \theta_{0}, T$ and $T_{\theta_0}$ are all dependent on the sample size $n$, and allow for the case where $p=p_{n} \to \infty$ and $s=s_{n} \to \infty$ as $n \to \infty$. We omit the dependence of these quantities on $n$ for notational convenience.

\section{Generic Setup and Method}\label{Sec:Model}

Consider a generalized linear regression model, where the outcome of interest $y_i$ relates to a scalar main regressor $d_i$ (e.g. a
treatment or a policy variable) and $p$-dimensional controls $x_i$ through a link function $\G$, namely for $i=1,\ldots,n$
\begin{equation}\label{Eq:MainLogisticModel} \Ep[y_i\mid x_i, d_i] = \G( d_i \alpha_0 + x_i'\beta_0 ).\end{equation}
Here $\alpha_0$ is the main target parameter, and $x_i'\beta_0$ is the nuisance regression function. The vector $\beta_0$ is a high-dimensional parameter which is assumed to be sparse, namely $\|\beta_0\|_0 \leq s$.  We require $s$ to be small relative to $n$ in the sense that will
be specified below, in particular
$$
\frac{s^2 \log^2 (p\vee n)}{n} \to 0
$$
is required.  In many settings this condition allows for the estimation of the nuisance function at the rate of $o(n^{-1/4})$.

Let $\{(y_i,d_i,x_i) \ : \ i=1,\ldots,n\}$ be a random sample, independent across $i$, obeying the model (\ref{Eq:MainLogisticModel}) with $\|\beta_0\|_0\leq s$. We aim to perform statistical inference on the coefficient $\alpha_0$ that is robust to moderate model selection mistakes as those are unavoidable if coefficients are near zero. Our proposed methods rely (implicitly or explicitly) on an instrument $z_{0i}=z_0(d_i,x_i)$ such that:
 \begin{eqnarray}
 \label{Eq:Estimating}& \Ep[\{y_i-\G(d_i\alpha_0+x_i'\beta_0)\}z_{0i}]  & = 0, \\
\label{Eq:Estimating1}& \left.\frac{\partial}{\partial\alpha}\Ep[\{y_i-\G(d_i\alpha+x_i'\beta_0)\}z_{0i}] \right|_{\alpha=\alpha_0} & \neq 0, \\
\label{Eq:Estimating2} & \left.\frac{\partial}{\partial\beta}\Ep[\{y_i-\G(d_i\alpha_0+x_i'\beta)\}z_{0i}]\right|_{\beta=\beta_0} & =0.\end{eqnarray} The first and second relations provide an estimating equation for $\alpha_0$. Relation (\ref{Eq:Estimating2}) is key in our analysis and states
 that the estimating equation (\ref{Eq:Estimating}) is insensitive with respect to first order perturbations of the nuisance function $x_i'\beta_0$. We call this orthogonality condition ``immunity." Such immunization ideas can be traced to Neyman's approach to dealing with nuisance parameters, as we discuss in Section \ref{Sec:Neyman}. (Note also that because of (\ref{Eq:MainLogisticModel}), there is also immunity with respect to perturbations on $z_0$.)

  Our methods proceed in three steps:
  \begin{itemize}
\item[1.] The first step computes an estimate for the nuisance function $x_i'\beta_0$.
\item[2.] The second step estimates the instrument $z_{0i}$.
\item[3.] The third step combines these estimates to estimate the parameter of interest $\alpha_0$.
\end{itemize}  Estimation of nuisance
  functions $x_i'\beta_0$ and the instrument $z_{0i}$ has an asymptotically negligible effect, due to the ``immunization properties"
  of the estimating equations.  Several different choices for these procedures and for instruments are possible. Next we provide detailed recommendations for their choices.

In general, we construct a valid, optimal instrument based on the following decomposition for the weighted main regressor:
\begin{equation}\label{Eq:Decomposition}
f_i d_i = f_i x_i'\theta_0 + v_i, \ \ \ \mbox{with} \ \ \Ep\left[f_i v_i x_i\right]=0,
\end{equation}  where \begin{equation}\label{Eq:wi} f_i: = w_i/\sigma_i,  \  \  w_i := G'(d_i\alpha_0+x_i'\beta_0),  \  \ \sigma_i^2:= \text{Var}(y_i|d_i,x_i),
\end{equation}
where $G'(t) = \frac{\partial}{\partial t} \G(t)$. The optimal instrument is given by
\begin{equation}\label{def: optim} z_{0i}:=v_i/\sigma_i.
 \end{equation}
 Here too we shall impose a sparsity condition in (\ref{Eq:Decomposition}), namely that $\|\theta_0\|_0\leq s$.  The use of sparsity in the main equation and this auxiliary equation can be generalized to approximate sparsity, with all results in this paper extending to this case, see Remark \ref{RemarkAlternative}.

The weights $f_i = w_i/\sigma_i$'s are used to achieve the orthogonality condition  (\ref{Eq:Estimating2}):
 \begin{equation}
\left. \frac{\partial}{\partial\beta}\Ep[\{y_i-\G(d_i\alpha_0+x_i'\beta)\}z_{0i}]\right|_{\beta=\beta_0} =
\Ep[ w_iz_{0i} x_i] =  \Ep\left[f_i v_i x_i\right] = 0;
 \end{equation}
 and this condition immunizes the estimation of  the main parameter $\alpha_0$ against crude estimation of the nuisance function $x_i'\beta_0$, in particular via post-selection estimators.  The selection steps make unavoidable moderate model selection mistakes, which translate into vanishing estimation error, which has an asymptotic negligible effect on the estimator based on the sample analog of the equation (\ref{Eq:Estimating}).  The orthogonality condition (\ref{Eq:Estimating2}) is therefore a critical ingredient in achieving asymptotic uniform validity of the coverage of confidence regions.  Among all instruments that provide such immunization, the instrument given in (\ref{def: optim}) minimizes the asymptotic variance of the asymptotically normal and $\sqrt{n}$-consistent estimator based on the estimating equations (\ref{Eq:Estimating}).   Other valid (but sub-optimal) choices of instruments are discussed in Remark \ref{Comment:ValidInstrument}. We will establish results for generalized linear models under high-level conditions.



\subsection{Logistic Case and Specific Estimators}

Next we apply the above principle to the case of logistic regression and propose specific implementations of estimators. In this case the link function $G$ is given by the logistic link function $$\G(t) = \exp(t)/\{1+\exp(t)\},$$ and the following simplification occurs: $w_i$ in (\ref{Eq:wi}) equals the conditional variance of the outcome $\sigma_i^2$, namely $$w_i = \sigma_i^2=\G( d_i\alpha_0 + x_i'\beta_0 )\{1-\G(d_i\alpha_0+x_i'\beta_0)\}, \ \ \mbox{and} \ \  f_i = \sqrt{w_i},$$ so that the decomposition (\ref{Eq:Decomposition}) and the optimal instrument (\ref{def: optim}) become
\begin{equation}\label{Eq:Decomposition2}
\sqrt{w_i} d_i = \sqrt{w_i} x_i'\theta_0 + v_i, \ \ \ \ \Ep\left[\sqrt{w_i}v_i x_i\right]=0 \ \ \ \mbox{and} \ \ \  z_{0i}=v_i/\sigma_i = d_i-x_i'\theta_0.
\end{equation}
We describe two estimators  in Tables \ref{Table:Alg} and \ref{Table:Alg2}.
 In these tables we denote the (negative) log-likelihood function associated with the logistic link function as
\begin{equation} \Lambda(\alpha,\beta) = \En[\Lambda_i(\alpha,\beta)] = \En[\log\{1+\exp(d_i\alpha+x_i'\beta)\}-y_i(d_i\alpha+x_i'\beta)].
\end{equation}

Table \ref{Table:Alg} displays an estimator based on the optimal instrument. The estimation in Step 1 is based on post-selection logistic regression where the model is selected based on $\ell_1$-penalized logistic regression. Step 2 is based on a post-selection least squares with estimated weights constructed based on Step 1. Note that Step 2 is used to construct the optimal instrument. Step 3 uses an instrumental logistic regression, with estimates of nuisance functions (control function $x_i'\beta_0$ and the instrument $z_{0i}$) obtained in Steps 1 and 2. The use of post-selection estimators in the first two steps instead of penalized estimators was motivated by a better finite sample performance in our experiments.  We also provide two confidence regions for $\alpha_0$ in Table \ref{Table:Alg}. The direct confidence region $\CR_D$ is based on the asymptotic normality of the estimator $\check \alpha$. The indirect confidence region $\CR_I$ is based on the asymptotic $\chi^2(1)$ law of the statistic $nL_n(\alpha_0)$.

\begin{table}[ht]
\hrule
\begin{center}\textbf{Estimators and Confidence Regions based on Optimal Instrument}\end{center}
\begin{enumerate}
\item[{\bf Step 1}] Run Post-Lasso-Logistic of $y_i$ on  $d_i$ and $x_i$:
$$\begin{array}{rl}
 (\hat\alpha,\hat\beta) \in & \arg{\displaystyle\min_{\alpha,\beta}} \ \ \En[\Lambda_i(\alpha,\beta)] + \frac{\lambda_1}{n}\|(\alpha,\beta)\|_1\\
(\widetilde\alpha,\widetilde\beta)\in & \arg{\displaystyle\min_{\alpha,\beta}} \ \ \En[\Lambda_i(\alpha,\beta)] \ : \ \supp(\beta)\subseteq \supp(\hat\beta)\end{array}$$
For $i=1,\ldots,n$, keep the value $x_i'\widetilde \beta$ and weight $$\hat f_i := \hat w_i/\hat \sigma_i, \text{ where }
\hat w_i = \G'(d_i\widetilde\alpha+x_i'\widetilde\beta), \ \ \hat\sigma_i^2 = \widehat{\text{Var}}(y_i|d_i,x_i)= \G(d_i\widetilde\alpha+x_i'\widetilde\beta) \{1-\G(d_i\widetilde\alpha+x_i'\widetilde\beta)\} .$$
\item[{\bf Step 2}] Run  Post-Lasso-OLS of $\hat f_i d_i$ on $\hat f_i x_i$:
$$\begin{array}{rl}
\ \ \ \ \ \ \ \ \ \ \ \ \ \ \hat\theta \in & \arg{\displaystyle\min_{ \theta}} \ \ \En[  \hat f_i^2 (  d_i -  x_i'\theta )^2] + \frac{\lambda_2}{n}\|\widehat \Gamma\theta\|_1\\
 \widetilde\theta \in & \arg{\displaystyle\min_{ \theta}} \ \ \En[ \hat f^2_i ( d_i - x_i'\theta )^2] \ : \ \ \supp(\theta)\subseteq \supp(\hat\theta)\end{array} $$
Keep the residual $\hat v_i:=\hat f_i (d_i-x_i'\widetilde\theta)$ and instrument $\hat z_i:= \hat v_i /\hat \sigma_i$, $i=1,\ldots,n$.
\item[{\bf Step 3}] Run Instrumental Logistic Regression of $y_i - x_i'\widetilde \beta$ on $d_i$
using $\hat z_i$ as the instrument for $d_i$
$$
\check \alpha \in \arg \inf_{\alpha \in \mathcal{A}} L_n(\alpha), \ \ \ \mbox{where} \ \ L_n(\alpha)  = \frac{| \ \En  [ \ \{y_i - \G(d_i\alpha+x_i'\widetilde\beta)\}\hat z_i \ ] \ |^2}{\En[ \ \{y_i - \G(d_i\alpha+x_i'\widetilde\beta)\}^2\hat z_i^2 \ ]}
$$
where $\mathcal{A} = \{ \alpha \in \RR : |\alpha -  \widetilde \alpha|\leq  C/\log n\}$. Define the confidence regions with asymptotic coverage $1-\xi$
$$\begin{array}{l} \CR_D = \{ \alpha \in \RR \ : |\alpha-\check\alpha|\leq \hat\Sigma_n\Phi^{-1}(1-\xi/2)/\sqrt{n}\}\\
\CR_{I} = \{ \alpha \in \mathcal{A} : nL_n(\alpha) \leq (1-\xi){\rm -quantile \  of} \ \chi^2(1)\}.\end{array}$$
  \end{enumerate}
\hrule
\vspace{0.2cm}
\caption{{\small The algorithm has three steps: (1) initial estimation of the regression function via post-selection logistic regression, (2) estimation of instruments which are orthogonal to the weighted controls via a weighted post-selection least squares, and (3) estimation of $\alpha_0$ based on the nuisance estimates obtain in (1) and (2).  Without loss of generality We assume the normalization $\En[x_{ij}^2]=1$ and $\En[d_i^2]=1$, and penalty parameters $\lambda_1 = \frac{1.1}{2}\sqrt{n}\Phi^{-1}(1-0.05/\{n\vee p\log n\})$, $\lambda_2 = 1.1\sqrt{n}2\Phi^{-1}(1-0.05/\{n \vee p\log n\})$ and $\widehat \Gamma$ is defined in the appendix, see (\ref{choice of loadings2}). The estimator of the variance is given by $\hat \Sigma^2_n = \max\{\widehat \Sigma_{1n}^2, \widehat \Sigma_{2n}^2\}$ where $\widehat\Sigma_{1n}^2 = \{\En[\hat w_i d_i \hat z_i]\}^{-1}\En[\{y_i-G(d_i\check\alpha+x_i'\widetilde\beta)\}^2\hat z_i^2]\{\En[\hat w_i d_i\hat z_i]\}^{-1}$ and $\widehat\Sigma_{2n}^2 = \En[\hat v_i^2 ]$.}}\label{Table:Alg} \end{table}

Table \ref{Table:Alg2} describes a second estimator, which builds upon the idea of the double selection method proposed in \cite{BelloniChernozhukovHansen2011} for partial linear mean regression models. The method replaces Step 3 in Table \ref{Table:Alg} with a (weighted) logistic regression of the outcome on the main regressor as well as the union of controls  selected in two selection steps -- Steps 1 and 2. (Note that the algorithm is stated for any generalized linear model in which case Step 3 is a weighted regression where the weights are given by $\hat f_i/\hat \sigma_i$ which equals to $1$ in the case of a logistic link function.) This approach creates an optimal instrument implicitly.  In fact, inspection of the proof shows that the double selection estimator  can be seen as an infinitely iterated version of the previous method. We refer to Section \ref{Sec:Connections} for further connections and discussions.

\begin{table}[ht]
\hrule
\begin{center}\textbf{Estimators and Confidence Region based on Double Selection}\end{center}
\begin{enumerate}
\item[{\bf Step 1}] Run Post-Lasso-Logistic of $y_i$ on  $d_i$ and $x_i$:
$$\begin{array}{rl}
 (\hat\alpha,\hat\beta) \in & \arg{\displaystyle\min_{\alpha,\beta}} \ \ \En[\Lambda_i(\alpha,\beta)] + \frac{\lambda_1}{n}\|(\alpha,\beta)\|_1\\
(\widetilde\alpha,\widetilde\beta)\in & \arg{\displaystyle\min_{\alpha,\beta}} \ \ \En[\Lambda_i(\alpha,\beta)] \ : \ \supp(\beta)\subseteq \supp(\hat\beta)\end{array}$$
For $i=1,\ldots,n$, construct the weights $$\hat f_i := \hat w_i/\hat \sigma_i, \text{ where }
\hat w_i = \G'(d_i\widetilde\alpha+x_i'\widetilde\beta), \ \ \hat\sigma_i^2 = \widehat{\text{Var}}(y_i|d_i,x_i)= \G(d_i\widetilde\alpha+x_i'\widetilde\beta) \{1-\G(d_i\widetilde\alpha+x_i'\widetilde\beta)\} .$$
\item[{\bf Step 2}] Run  Lasso-OLS of $\hat f_i d_i$ on $\hat f_i x_i$:
$$\begin{array}{rl}
 \hat\theta \in & \arg{\displaystyle\min_{ \theta}} \ \ \En[ \hat f^2_i ( d_i - x_i'\theta )^2] + \frac{\lambda_2}{n}\|\widehat\Gamma\theta\|_1 \ \ \ \ \  \end{array} $$
\item[{\bf Step 3}] Run Post-Lasso-Logistic of $y_i$ on  $d_i$ and the covariates selected in Step 1 and 2:
$$\begin{array}{rl}
\ \ \ \ \ \ \ \ \ \ \  \ \ \ \ \ \ (\check\alpha,\check\beta)\in & \arg{\displaystyle\min_{\alpha,\beta}} \ \ \En[ \Lambda_i(\alpha,\beta) \hat f_i/\hat\sigma_i] \ : \ \supp(\beta)\subseteq \supp(\hat\beta) \cup \supp(\hat\theta)\end{array}$$
 Define the confidence region with asymptotic coverage $1-\xi$ as
$$\begin{array}{l} \CR_{DS} = \{ \alpha \in \RR \ : |\alpha-\check\alpha|\leq\hat\Sigma_n\Phi^{-1}(1-\xi/2)/\sqrt{n}\}.
\end{array}$$  \end{enumerate}
\hrule
\vspace{0.2cm}
\caption{{\small The double selection algorithm has three steps: (1) use $\ell_1$-penalized logistic regression to select
covariates; and  use post-selection logistic regression to estimate the weights to be used in the next step, (2)
select covariates based on the weighted post-selection least squares, where the dependent variable is the main regressor and the
independent variables are the rest of the regressors, and (3) run a Logistic regression of the outcome
 on the main regressors and the union of controls in steps (1) and (2). Without loss of generality we assume the normalization $\En[x_{ij}^2]=1$ and $\En[d_i^2]=1$, and penalty parameters $\lambda_1 = \frac{1.1}{2}\sqrt{n}\Phi^{-1}(1-0.05/\{n \vee p\log n\})$, $\lambda_2 = 1.1\sqrt{n}2\Phi^{-1}(1-0.05/\{n \vee p\log n\})$ and $\widehat \Gamma$ is defined in the appendix, see (\ref{choice of loadings2}). The estimator of the variance is given by $\hat \Sigma^2_n = \max\{\hat \Sigma^2_{1n},\hat \Sigma^2_{2n}\}$ where $\hat \Sigma^2_{1n}=\{\En[\check w_i d_i \hat z_i]\}^{-1}\En[\{y_i-G(d_i\check\alpha+x_i'\check\beta)\}^2\hat z_i^2]\{\En[\check w_i d_i\hat z_i]\}^{-1}$, $\Sigma^2_{2n}=\{ \En[\check w_i(d_i,\check x_{i}')'(d_i,\check x_i')]\}^{-1}_{11}$, $\check w_i = \G(d_i\check\alpha + x_i'\check\beta)\{1-\G(d_i\check\alpha+x_i'\check\beta)\}$ and $\check x_i = x_{i,\supp(\check\beta)}$.}}\label{Table:Alg2} \end{table}

\begin{remark}[Other Valid Instruments]\label{Comment:ValidInstrument} An instrument $z_0$ is valid if it has the orthogonality property $$\left. \frac{\partial}{\partial\beta}\Ep[\{y_i-\G(d_i\alpha_0+x_i'\beta)\}z_{0i}]\right|_{\beta=\beta_0} =
\Ep[ w_iz_{0i} x_i]=0$$ and is non-trivial, namely $\barEp[w_id_iz_{0i}] \neq 0$.   A valid, non-trivial instrument is optimal if it minimizes the asymptotic variance of the final estimator of $\alpha_0$.  The algorithm stated in Table \ref{Table:Alg} uses the optimal
instrument $z_{0i} := v_i/\sigma_i$. Estimation of this instrument requires that in Step 2 a Lasso method is applied in the weighted equation (\ref{Eq:Decomposition}).  Since the weights $w_i$'s in the resulting weighted Lasso problem are estimated, with estimation errors depending upon the response variable $d_i$, estimation of the optimal instrument creates interesting technical challenges in the analysis of Lasso or Post-Lasso that are dealt with in the Appendix.  Thus estimation of the optimal instruments poses an interesting problem in its own right. There are other valid instruments that we can rely on, but these instruments are not generally optimal. For example, a valid, yet sub-optimal choice of the instrument is $z_{0i} := (d_i-\Ep[d_i\mid x_i])/w_i$.    The estimation of this instrument is technically simpler, and follows easily from available results. Indeed, assuming $\Ep[d_i\mid x_i] = x_i'\theta_d$, with $\theta_d$ sparse or approximately sparse, we can estimate $z_{0i}$ by estimating $\theta_d$ via standard Lasso of $d_i$ on $x_i$, and estimating $w_i$ using the estimates of the $\ell_1$-penalized logistic regression as in Step 1. Note that since no estimated weights are used in Lasso estimation of $\theta_d$, standard results on the Lasso estimator deliver the required properties. \qed
\end{remark}

\begin{remark}[Alternative Implementations via Approximate Instrumental Regression]\label{RemarkAlternative}
 The instrumental logistic regression can be approximately implemented by a 1-Step estimator from the $\ell_1$-penalized logistic estimator $\hat\alpha$ of the form $ \check \alpha = \hat \alpha + (\En[\hat w_i d_i \hat z_i])^{-1} \En[\{y_i- \G(d_{i} \hat\alpha +x_i'\hat \beta)\}\hat z_i]$.  However, we prefer the exact implementations, since the they perform better in an extensive set of Monte-Carlo experiments. \qed
\end{remark}

\begin{remark}[Data Driven Choice of Penalty Parameters]\label{RemarkDataDriven}
The penalty parameters $\lambda_1$ and $\lambda_2$ as defined in the algorithms above are motivated by self-normalized moderate deviation theory. Other data-driven choices are possible but their theoretical validity is outside the scope of this paper. For example, cross validation typically underpenalize to reduce bias to obtain better estimates but tends to select a substantial larger number of variables. This suggests that cross validation to be more suitable for the algorithm based on optimal instrument than for the algorithm based on double selection. Another approach suggested in Section 4.2 of \cite{chernozhukov2013gaussian}  relies on new  Gaussian approximation results and can be implemented via a multiplier bootstrap procedure.\qed
\end{remark}

\section{Main Theoretical Results}\label{Sec:Main}

\subsection{Logistic Regression under Primitive Assumptions}


In this section, we list and discuss primitive conditions that allow us to derive our results in the case of logistic regression. These conditions ensure good properties of $\ell_1$-penalized methods and the associated post-selection estimators. Fix some sequences of constants, $\delta_n\to 0$, and $\Delta_n \to 0$, and constants $0 < c < C <\infty$.

{\bf Condition L.} { (i) Let $\{(y_i,d_i,x_i):i=,\ldots,n\}$ be independent random vectors that obey the model given by (\ref{Eq:MainLogisticModel}) and (\ref{Eq:Decomposition}) with $G$ being the logistic function. There exists $s=s_n$ such that $\|\beta_0\|_0+\|\theta_0\|_0\leq s$, $\|\beta_0\|+\|\theta_0\| \leq C$. (ii) The following moment conditions hold $\barEp[\{(d,x')\xi\}^4]\leq C \|\xi\|^4$, $\barEp[w_i\{(d,x')\xi\}^2]\geq c\|\xi\|^2$. We have that $\min_{j\leq p} \barEp[w_ix_{ij}^2v_i^2] \geq c>0$ and $\max_{j\leq p} \barEp[|\sqrt{w_i}x_{ij}v_i|^3]^{1/3}\log^{1/2}(p\vee n) \leq \delta_n n^{1/6}$. Furthermore, the conditional variance $\sigma_i^2$ satisfy $\min_{i\leq n} \sigma_i^2 \geq c>0$ with probability $1-\Delta_n$. (iii) For $K_q = \Ep[\max_{i\leq n} \|(d_i,z_{0i},x')\|_\infty^q]^{1/q}$, we have $K_1^2s^2\log^2(p\vee n) \leq \delta_n n$ and $K_4^4 s  \log (p\vee n) \log^{3}n\leq \delta_n n$.} 


Condition L(i) assumes independence across $i$ and the model described in Section \ref{Sec:Model} and sparsity conditions which makes estimation possible even if $p>n$. Condition L(ii) assumes the conditional variance is bounded away from zero and imposes mild moment conditions. Condition L(iii) imposes growth requirements on the triple $(s,p,n)$ as $n$ grows.  An important consequence of Condition L is to  imply that submatrices of the design matrix are well behaved even though the design matrix cannot have rank $p$ if $p>n$; see \cite{RudelsonVershynin2008,RudelsonZhou2011} for detailed discussion. This ensures that $\ell_1$-penalized estimators are well behaved with suitable choices of penalty parameters under the stated sparsity assumptions.  

Next we state the main inferential results of the paper. It concerns the (uniform) validity of the different  confidence regions for the coefficient $\alpha_0$ based on the optimal instrument and double selection algorithms. 

\begin{theorem}[Robust Estimation and Inference based on the Optimal IV Estimator]\label{theorem:inferenceAlg1}  Consider any triangular array of data $(y_i, d_i, x_i)_{i=1}^n$ that obeys  Condition L for all $n\geq 1$. Then, the estimator  $\check \alpha$ based on the optimal instrument, as defined in Table \ref{Table:Alg}, obeys as $n \to \infty$
$$
\Sigma_n^{-1} \sqrt{n} (\check \alpha - \alpha_0) = Z_n + o_\Pr(1),  \ \   Z_n \rightsquigarrow N(0,1),
$$
where
$$
Z_n := \frac{\Sigma_n}{\sqrt{n}}  \sum_{i=1}^n \{y_i - G(d_i \alpha_0 + x_i'\beta_0)\} z_{0i} \text{ and } \  \Sigma^2_n:=  \barEp[v_i^2]^{-1}.
$$
 Moreover,
$$ nL_n(\alpha_0) = Z_n^2 + o_\Pr(1), \  \ Z_n^2 \rightsquigarrow \chi^2(1). $$ Finally, $\Sigma_n^2$ can be replaced by either $\hat \Sigma^2_{1n} = \{\En[\hat w_i d_i \hat z_i]\}^{-1}\En[\{y_i-G(d_i\check\alpha+x_i'\widetilde\beta)\}^2\hat z_i^2]\{\En[\hat w_i d_i \hat z_i]\}^{-1}$ or by $\hat \Sigma_{2n}^2 = {\En[\hat v_i^2]}^{-1}$ without affecting the result, i.e. $\hat \Sigma_{1n}^2/\Sigma_n^2 = 1 + o_{\Pr}(1)$ and $\hat \Sigma_{2n}^2/\Sigma_n^2 = 1 + o_{\Pr}(1)$.
\end{theorem}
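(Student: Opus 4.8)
The plan is to treat this theorem as an application of a general ``Neyman-orthogonal estimating equations'' result, and the bulk of the work is verifying that the estimated nuisance functions (the control function $x_i'\widetilde\beta$ and the instrument $\hat z_i$) converge fast enough and stay sparse enough that their estimation errors wash out. Concretely, I would organize the argument as follows.

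\textbf{Step 1: Rates and sparsity for the first-stage estimators.} First I would invoke the results on $\ell_1$-penalized logistic regression and post-model-selection logistic regression (collected in the appendix referenced as Appendix \ref{Sec:Step1}) to show that, under Condition L, $\|\widetilde\beta-\beta_0\|_1 \lesssim_\Pr s\sqrt{\log p/n}$, $\|x_i'(\widetilde\beta-\beta_0)\|_{2,n}\lesssim_\Pr \sqrt{s\log p/n}$, and $\|\widetilde\beta\|_0 \lesssim_\Pr s$. Then, using the results on weighted Lasso and post-Lasso with \emph{estimated} weights $\hat f_i$ (Appendix \ref{Sec:EstLasso}), I would show the analogous bounds for $\widetilde\theta$: namely $\|\widehat f_i x_i'(\widetilde\theta-\theta_0)\|_{2,n}\lesssim_\Pr\sqrt{s\log p/n}$ and $\|\widetilde\theta\|_0\lesssim_\Pr s$, together with the fact that the weight estimation error $\max_i|\hat f_i-f_i|$ is controlled by the Step-1 rate. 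Here the key point is that the error in $\hat f_i$ feeds into the Step-2 problem but, because $\hat f_i$ depends on $d_i$, one needs the self-normalized/moment machinery rather than off-the-shelf Lasso bounds.

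\textbf{Step 2: Expansion of the estimating equation.} With $\hat z_i = \hat v_i/\sqrt{\hat\sigma_i}$ in hand, I would analyze $L_n(\alpha)$. Since $\check\alpha$ minimizes $L_n$ over $\mathcal{A}=\{|\alpha-\widetilde\alpha|\le C/\log n\}$, and since $\widetilde\alpha$ is itself consistent, with probability $\to 1$ the minimizer is interior and hence the numerator derivative vanishes (or the numerator itself is $o_\Pr(n^{-1/2})$). I would Taylor-expand $\En[\{y_i-G(d_i\check\alpha + x_i'\widetilde\beta)\}\hat z_i]$ around $(\alpha_0,\beta_0,z_{0i})$. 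The expansion has: (a) the leading linear term $\En[(y_i-G(d_i\alpha_0+x_i'\beta_0))z_{0i}]$, which by the CLT for independent (triangular-array) summands, with Lindeberg condition checked from the moment bounds in Condition L(ii), gives $Z_n\rightsquigarrow N(0,1)$; (b) the term $-\En[\widehat w_i d_i\hat z_i](\check\alpha-\alpha_0)$, whose coefficient I must show converges to $\barEp[w_i d_i z_{0i}] = \barEp[f_i v_i z_{0i}]=\barEp[v_i^2]=\Sigma_n^{-2}$; (c) the nuisance terms in $(\widetilde\beta-\beta_0)$, which by the orthogonality identity $\Ep[w_i z_{0i} x_i]=0$ have zero mean and whose remainder is quadratic in the nuisance rate — hence $\lesssim_\Pr s\log p/n = o(n^{-1/2})$ under Condition L(iv); and (d) the error from replacing $z_{0i}$ by $\hat z_i$, which I would bound by an empirical-process argument: decompose into a mean-zero term controlled by a self-normalized maximal inequality over the sparse class indexed by $(\widetilde\beta,\widetilde\theta)$ (using the entropy bound for sparse vectors and the $\|\widetilde\beta\|_0,\|\widetilde\theta\|_0\lesssim_\Pr s$ sparsity from Step 1), plus a drift term that is again quadratic-small. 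Collecting these gives $\Sigma_n^{-1}\sqrt n(\check\alpha-\alpha_0) = Z_n + o_\Pr(1)$.

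\textbf{Step 3: The $\chi^2$ statement and variance estimators.} For $nL_n(\alpha_0)$, I would note $nL_n(\alpha_0) = (\sqrt n\,\En[\{y_i-G(d_i\alpha_0+x_i'\widetilde\beta)\}\hat z_i])^2 / \En[\{y_i-G\}^2\hat z_i^2]$; the numerator equals $(\Sigma_n^{-1}Z_n + o_\Pr(1))^2$ by the same expansion (at fixed $\alpha_0$, so the $(\check\alpha-\alpha_0)$ term is absent), and the denominator converges in probability to $\barEp[\sigma_i^2 z_{0i}^2] = \barEp[v_i^2] = \Sigma_n^{-2}$, so the ratio is $Z_n^2 + o_\Pr(1)$. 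For consistency of $\hat\Sigma_{1n}^2$ and $\hat\Sigma_{2n}^2$, I would show each factor converges: $\En[\hat w_i d_i\hat z_i]\inp \barEp[w_i d_i z_{0i}] = \Sigma_n^{-2}$ and $\En[\{y_i-G(d_i\check\alpha+x_i'\widetilde\beta)\}^2\hat z_i^2]\inp \Sigma_n^{-2}$, so $\hat\Sigma_{1n}^2/\Sigma_n^2\inp 1$, and $\En[\hat v_i^2]\inp\barEp[v_i^2]=\Sigma_n^{-2}$ so $\hat\Sigma_{2n}^2/\Sigma_n^2\inp1$; each uses the rates from Step 1 plus a law-of-large-numbers with the moment bounds of Condition L(ii).

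\textbf{Main obstacle.} I expect the hard part to be item (d) of Step 2 together with the corresponding parts of Steps 1 and 3: controlling the effect of using the \emph{estimated} instrument $\hat z_i$, which carries the estimated weights $\hat f_i$ that themselves depend on the response $d_i$. This forces us to (i) establish Lasso/post-Lasso bounds with data-dependent weights — requiring a careful treatment of the restricted-eigenvalue and score-domination events when the loadings $\widehat\Gamma$ are random — and (ii) bound empirical-process terms like $\Gn(\{y_i-G(d_i\alpha_0+x_i'\beta)\}z(d_i,x_i;\beta,\theta))$ uniformly over the data-driven sparse sets, for which I would use self-normalized moderate-deviation/maximal inequalities combined with entropy counts for $s$-sparse vectors, exactly the ``empirical process methods'' flagged in the introduction. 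Everything else is a relatively routine Taylor expansion plus the orthogonality identity $\Ep[w_i z_{0i} x_i]=0$ that makes the first-order nuisance terms drop out.
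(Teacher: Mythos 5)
Your proposal is correct and follows essentially the same route as the paper: the paper simply packages your Step 2 and Step 3 Taylor expansion (with the orthogonality identity $\Ep[w_i z_{0i} x_i]=0$ killing the first-order nuisance term and the quadratic remainders bounded by products of the Step-1 rates) into a generic high-level lemma under a condition called ILOG, and then the proof of the theorem consists of verifying ILOG exactly as you outline — post-Lasso-logistic rates and sparsity for $\widetilde\beta$, weighted post-Lasso rates for $\widetilde\theta$ handling the $d_i$-dependent estimated weights, and self-normalized maximal inequalities over the data-driven sparse classes for the term involving $\hat z_i - z_{0i}$. The one detail worth noting is that the paper establishes $\En[\{y_i-G(d_i\check\alpha+x_i'\widetilde\beta)\}\hat z_i]=o_\Pr(n^{-1/2})$ by a sign-change argument for the numerator over $\mathcal{A}$ rather than a first-order condition, which is the second alternative you mention parenthetically.
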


Theorem \ref{theorem:inferenceAlg1} establishes that the IV estimator $\check \alpha$ is $\sqrt{n}$-consistent and asymptotically
normal. Under suitable conditions the large-sample variance coinciding with the semi-parametric efficiency bound for the partially linear logistic regression
model (see Section \ref{Sec:minimax} for an additional discussion).   The studentized estimator converges to the standard normal law,  and the criterion function that this estimator minimized, when evaluated at the true value, converges to the standard chi-squares law with one degree of freedom.  These results justify and imply the validity of the confidence regions $\CR_D$ and $\CR_I$ for $\alpha_0$ proposed in Table \ref{Table:Alg}.  We note that these results are achieved despite possible model selection mistakes in Steps 1 and 2.

The following result derives similar properties for the double selection estimator described in Table \ref{Table:Alg2}.

\begin{theorem}[Robust Estimation and Inference based on Double Selection]\label{theorem:inferenceAlg2}   Consider
any triangular array of data $(y_i, d_i, x_i)_{i=1}^n$ that obeys  Condition L for all $n\geq 1$. Then, the double selection estimator  $\check \alpha$ as defined in Table \ref{Table:Alg2} obeys as $n \to \infty$
$$
\Sigma_n^{-1} \sqrt{n} (\check \alpha - \alpha_0) = Z_n + o_\Pr(1),  \ \   Z_n \rightsquigarrow N(0,1),
$$
where
$$
Z_n := \frac{\Sigma_n}{\sqrt{n}}  \sum_{i=1}^n (y_i - G(d_i \alpha_0 + x_i'\beta_0)) z_{0i} \text{ and } \  \Sigma^2_n:=  \barEp[v_i^2]^{-1}.
$$
Moreover, $\Sigma_n^2$ can be replaced by  $\hat \Sigma^2_{1n} = \{\En[\check w_i d_i \hat z_i]\}^{-1}\En[\{y_i-G(d_i\check\alpha+x_i'\check\beta)\}^2\hat z_i^2]\{\En[\check w_i d_i \hat z_i]\}^{-1}$ or $\hat \Sigma^2_{2n}=\{\En[\check w_i (d_i,\check x_{i}')'(d_i,\check x_{i}')]\}^{-1}_{11}$ without affecting the result, i.e. $\hat\Sigma_{1n}^2/\Sigma_n^2=1+o_{\Pr}(1)$ and $\hat\Sigma_{2n}^2/\Sigma_n^2=1+o_{\Pr}(1)$, where $\check w_i= G(d_i\check\alpha+x_i'\check\beta)\{1-G(d_i\check\alpha+x_i'\check\beta)\}$ and $\check x_i = x_{i,\supp(\check\beta)}$.
\end{theorem}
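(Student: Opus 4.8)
\textbf{Proof proposal for Theorem \ref{theorem:inferenceAlg2}.} The plan is to reduce the post-double selection estimator to a perturbation of the instrumental-logistic estimating equation already analyzed for Theorem \ref{theorem:inferenceAlg1}, and then to verify the high-level conditions of the generic result in Appendix \ref{Sec:EstIVQRGeneric}. First I would write $\hat I := \supp(\hat\beta)\cup\supp(\hat\theta)$, $\check x_i := x_{i,\hat I}$ and $\check w_i := G'(d_i\check\alpha + x_i'\check\beta)$, so that $(\check\alpha,\check\beta)$ solves the restricted logistic score equations $\En[\{y_i - G(d_i\check\alpha + x_i'\check\beta)\}d_i]=0$ and $\En[\{y_i - G(d_i\check\alpha + x_i'\check\beta)\}\check x_i]=0$. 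Letting $\hat\gamma$ minimize $\En[\check w_i(d_i - \check x_i'\gamma)^2]$ and $\hat u_i := d_i - \check x_i'\hat\gamma$ be the $\check w_i$-weighted least-squares residual of $d_i$ on the selected controls, the two score equations combine (the $\check x_i$-equation annihilating the $\check\beta$-block) into the single equation $\En[\{y_i - G(d_i\check\alpha + x_i'\check\beta)\}\,\hat u_i]=0$. The conceptual point I would then make is that $\hat u_i$ is an \emph{implicit estimate of the optimal instrument}: since $\hat I\supseteq\supp(\hat\theta)$ and $\hat\theta$ approximates $\theta_0$, the selected controls already span the weighted projection of $d_i$ onto $x_i$ approximately, so $\hat u_i\approx d_i-x_i'\theta_0 = v_i/\sqrt{w_i}=z_{0i}$ -- the precise sense in which double selection is an ``infinitely iterated'' version of the optimal-IV procedure; likewise the explicit instrument $\hat z_i$ entering the variance formulas satisfies $\hat z_i\approx z_{0i}$.

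The second step collects rates, uniformly over Condition~L. From the analysis of $\ell_1$-penalized and post-selection logistic regression in Appendix \ref{Sec:Step1} I would take $|\hat I|\lesssim_{\Pr}s$, $|\check\alpha-\alpha_0| + \|x_i'(\check\beta-\beta_0)\|_{2,n}\lesssim_{\Pr}\sqrt{s\log(p\vee n)/n}$, and the implied $\|\check w_i - w_i\|_{2,n}\lesssim_{\Pr}\sqrt{s\log(p\vee n)/n}$; from the weighted-Lasso-with-estimated-weights analysis in Appendix \ref{Sec:EstLasso} I would take $\|\hat\theta\|_0\lesssim_{\Pr}s$ and $\|\sqrt{w_i}x_i'(\hat\theta-\theta_0)\|_{2,n}\lesssim_{\Pr}\sqrt{s\log(p\vee n)/n}$, from which $\|\hat u_i-z_{0i}\|_{2,n}\lesssim_{\Pr}\sqrt{s\log(p\vee n)/n}$ and the same rate for $\|\hat z_i-z_{0i}\|_{2,n}$. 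Condition~L(iv), in particular $K_x^2 s^2\log^3(p\vee n)\le n\delta_n$, then makes every one of these nuisance rates $o_{\Pr}(n^{-1/4})$, which is exactly what the subsequent product bounds require.

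The third step is the linearization of $\En[\{y_i - G(d_i\check\alpha + x_i'\check\beta)\}\hat u_i]=0$ about $(\alpha_0,\beta_0)$. I would isolate: (a) the $\alpha$-term $-\En[\check w_i d_i\hat u_i](\check\alpha-\alpha_0)$, noting $\En[\check w_i d_i\hat u_i]=\En[\check w_i\hat u_i^2]$ is bounded away from $0$ and $\infty$ by Condition~L(ii)--(iii) and converges to $\barEp[v_i^2]$; (b) the $\beta$-direction term, whose part supported on $\hat I$ equals $\En[\check w_i\check x_i'(\check\beta-\beta_{0,\hat I})\hat u_i]=0$ exactly by the orthogonality defining $\hat u_i$, and whose remaining pieces -- the omitted part $x_i'\beta_{0,\hat I^c}$ and the discrepancy between $\check w_i$ and the exact Taylor coefficient (absorbed into the higher-order remainder) -- are $o_{\Pr}(n^{-1/2})$ via the population orthogonality $\Ep[w_i z_{0i}x_i]=\Ep[\sqrt{w_i}v_i x_i]=0$ combined with the rates above and the smallness in prediction norm of the omitted coefficients under moderate selection mistakes; (c) the second-order Taylor remainder, of order $O_{\Pr}(s\log(p\vee n)/n)=o_{\Pr}(n^{-1/2})$; and (d) the stochastic linear term $\Gn(\{y_i-G(d_i\alpha_0+x_i'\beta_0)\}\hat u_i)$, which I would replace by $\Gn(\{y_i-G(d_i\alpha_0+x_i'\beta_0)\}z_{0i})+o_{\Pr}(1)$ using a self-normalized maximal inequality over the $s$-sparse class indexing $\hat u_i$, with entropy bounds from Appendix \ref{Sec:AuxiliaryInequalities}. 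Solving for $\check\alpha-\alpha_0$ and applying the Lindeberg CLT under Condition~L(ii) yields the asymptotic representation and $Z_n\rightsquigarrow N(0,1)$. For the variance claims I would use the partitioned-inverse identity $\hat\Sigma_{2n}^2=\{\En[\check w_i\hat u_i^2]\}^{-1}$ and the fact that $\En[\check w_i\hat u_i^2]$ and the three factors of $\hat\Sigma_{1n}^2$ all converge to $\barEp[v_i^2]$ once the above rates are plugged in and $\check w_i$ is bounded on the high-probability event, so $\hat\Sigma_{1n}^2/\Sigma_n^2 = \hat\Sigma_{2n}^2/\Sigma_n^2 = 1+o_{\Pr}(1)$.

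The hard part will be step (b): showing that the population orthogonality $\Ep[w_i z_{0i}x_i]=0$ really does neutralize the $\beta$-direction even though $\hat u_i$ is built from a \emph{data-dependent} selected set $\hat I$ and from \emph{estimated} weights whose errors are correlated with $d_i$ through Step~2. One must control the omitted parts $\beta_{0,\hat I^c}$ and $\theta_{0,\hat I^c}$ -- which, absent a separation condition, need not vanish coordinate-wise -- showing they enter only through quantities small in prediction norm, and one must handle the replacement of $w_i$ by the Step-1 plug-in $\check w_i$ inside the weighted projection; keeping all of this uniform over the class of $s$-sparse models is precisely what the sparsity bounds for $\ell_1$-penalized logistic regression and the weighted-Lasso-with-estimated-weights results of the Appendix are designed to deliver, and is what distinguishes this argument from a routine M-estimator expansion.
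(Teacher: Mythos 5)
Your proposal is correct and follows essentially the same route as the paper: reduce the post-double-selection first-order conditions to the instrumental estimating equation $\En[\{y_i-G(d_i\check\alpha+x_i'\check\beta)\}\hat z_i]=0$ for an implicitly constructed instrument, import the sparsity and prediction-norm rates from the appendix lemmas on $\ell_1$-penalized/post-selection logistic regression and weighted Lasso with estimated weights, verify the high-level conditions of the generic Lemma \ref{Thm:EstIVQRgeneric}, and handle $\hat\Sigma_{2n}^2$ via the partitioned-inverse identity $\{\En[\check w_i(d_i,\check x_i')'(d_i,\check x_i')]\}^{-1}_{11}=\{\En[\check w_i\hat u_i^2]\}^{-1}$. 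The only substantive variation is your choice of linear combination: you take the feasible $\check w_i$-weighted least-squares residual $\hat u_i=d_i-\check x_i'\hat\gamma$ as the implicit instrument, whereas the paper defines $\hat\theta^*\in\arg\min\{\|x_i'(\theta-\theta_0)\|_{2,n}:\supp(\theta)\subseteq\hat T^*\}$ and uses $\hat z_i=d_i-x_i'\hat\theta^*$; since $\supp(\hat\theta)\subseteq\hat T^*$, the paper's (infeasible, proof-device) choice makes the bound $\|\hat z_i-z_{0i}\|_{2,n}\le\|x_i'(\hat\theta-\theta_0)\|_{2,n}$ immediate, while your choice obliges you to additionally show that the $\check w_i$-weighted projection of $z_{0i}=v_i/\sqrt{w_i}$ onto the data-dependent set of selected controls is $o_{\Pr}(n^{-1/4})$ in prediction norm (doable from $|\hat I|\lesssim_{\Pr}s$, the sparse-eigenvalue bounds, control of $\check w_i-w_i$, and a self-normalized maximal inequality, but it is an extra step the paper's trick sidesteps). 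Your identification of the $\beta$-direction neutralization under data-dependent selection and estimated weights as the delicate point matches where the paper expends its effort (relations (\ref{EqGamma20})--(\ref{BoundGamma2third}) and the verification of ILOG(iii)).
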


Theorems \ref{theorem:inferenceAlg1} and \ref{theorem:inferenceAlg2} allow for the data-generating processes to change with $n$, in particular allowing sequences of regression models, with coefficients never perfectly distinguishable from zero, i.e. models where perfect model selection is not possible.  In turn, the results achieved in Theorems \ref{theorem:inferenceAlg1} and \ref{theorem:inferenceAlg2}  are uniformly valid over a large class of sparse models. 
In what follows, we formalize these assertions as corollaries.

 Let $\mathcal{Q}_{n}$ denote a collection of distributions $Q_n$ for the data $\{ (y_{i},d_{i}, x_i')' \}_{i=1}^{n}$ such that Condition L hold for the given $n$.  This is the collection of all sparse models where the stated above sparsity conditions, moment conditions, and growth conditions hold.
 This collection expressly permits models to have near zero coefficients, and thus does not impose
the separation conditions (which we believe to be unreasonable in many applications). For $Q_n \in \mathcal{Q}_n$, let the notation $\Pr_{Q_{n}}$ mean that under $\Pr_{Q_{n}}$, $\{ (y_{i},d_{i}, x_i')' \}_{i=1}^{n}$ is distributed according to $Q_{n}$.

\begin{corollary}[\textbf{Uniform $\sqrt{n}$-Rate of Consistency and Uniform Normality}]
\label{cor:Uniformity1}Let $\mathcal{Q}_{n}$ be the collection of all distributions of $\{ (y_{i},d_{i},x_i')' \}_{i=1}^{n}$ for which Condition L is  satisfied for the given $n \geq 1$. Then the estimator $\check \alpha$, based either on optimal instrument or double selection, is  $\sqrt{n}$-consistent and asymptotically normal uniformly
over $\mathcal{Q}_n$, namely
 $$
\lim_{n \to \infty} \sup_{Q_n \in \mathcal{Q}_n}\sup_{t\in \RR} |\Pr_{Q_n} ( \Sigma_n^{-1} \sqrt{n} (\check \alpha - \alpha_0)  \leq t) - \Pr( N(0,1) \leq t)| =0
  $$
  Moreover, the result continues to hold if $\Sigma_n^2$ is replaced by any of the estimators $\hat \Sigma_n^2$ specified in the statements of the preceding theorems.
  \end{corollary}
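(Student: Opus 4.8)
The plan is to deduce the corollary from Theorems \ref{theorem:inferenceAlg1} and \ref{theorem:inferenceAlg2} by a standard subsequence (contradiction) argument combined with P\'olya's theorem, exploiting the fact that those two theorems are already stated for an \emph{arbitrary} triangular array obeying Condition L rather than for one fixed data-generating process. Write $\Phi$ for the standard normal distribution function. The first, essentially bookkeeping, observation is that $\mathcal{Q}_n$ is by construction the set of all laws $Q_n$ of $\{(y_i,d_i,x_i')'\}_{i=1}^n$ for which Condition L holds at sample size $n$; hence any sequence $(Q_n)_{n\geq 1}$ with $Q_n\in\mathcal{Q}_n$ is exactly (the law of) a triangular array satisfying the hypotheses of Theorems \ref{theorem:inferenceAlg1}--\ref{theorem:inferenceAlg2}. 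Consequently, along any such sequence, $\Sigma_n^{-1}\sqrt{n}(\check\alpha-\alpha_0)=Z_n+o_{\Pr_{Q_n}}(1)$ with $Z_n\rightsquigarrow N(0,1)$, so by Slutsky's lemma $\Sigma_n^{-1}\sqrt{n}(\check\alpha-\alpha_0)\rightsquigarrow N(0,1)$ under $\Pr_{Q_n}$, for the optimal IV estimator (via Theorem \ref{theorem:inferenceAlg1}) and for the post-double selection estimator (via Theorem \ref{theorem:inferenceAlg2}).

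Next comes the contradiction step. Put $F_{n,Q_n}(t):=\Pr_{Q_n}(\Sigma_n^{-1}\sqrt{n}(\check\alpha-\alpha_0)\leq t)$. If the asserted uniform convergence failed, there would exist $\eps>0$, a subsequence $n_k\uparrow\infty$, and laws $Q_{n_k}\in\mathcal{Q}_{n_k}$ with $\sup_{t\in\RR}|F_{n_k,Q_{n_k}}(t)-\Phi(t)|>\eps$ for every $k$. Complete $(Q_{n_k})_k$ to a full sequence $(Q_n)_{n\geq 1}$ by picking, for each $n\notin\{n_k\}$, any element $Q_n\in\mathcal{Q}_n$ (the set being nonempty by hypothesis); the resulting sequence still satisfies Condition L for every $n$. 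Applying Theorem \ref{theorem:inferenceAlg1} (respectively \ref{theorem:inferenceAlg2}) to this sequence and restricting attention to the indices $n_k$ yields $\Sigma_{n_k}^{-1}\sqrt{n_k}(\check\alpha-\alpha_0)\rightsquigarrow N(0,1)$ under $\Pr_{Q_{n_k}}$. Since the limit law has a continuous distribution function, P\'olya's theorem upgrades this weak convergence to $\sup_{t}|F_{n_k,Q_{n_k}}(t)-\Phi(t)|\to 0$, contradicting the choice of the subsequence. Hence $\sup_{Q_n\in\mathcal{Q}_n}\sup_t|F_{n,Q_n}(t)-\Phi(t)|\to 0$.

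For the version with an estimated variance I would argue identically after one extra Slutsky step. Each estimator $\hat\Sigma_n^2$ appearing in Theorems \ref{theorem:inferenceAlg1} and \ref{theorem:inferenceAlg2} satisfies $\hat\Sigma_n^2/\Sigma_n^2=1+o_{\Pr_{Q_n}}(1)$ along every array obeying Condition L, so that $\hat\Sigma_n^{-1}\sqrt{n}(\check\alpha-\alpha_0)=(\Sigma_n/\hat\Sigma_n)\,\Sigma_n^{-1}\sqrt{n}(\check\alpha-\alpha_0)=(1+o_{\Pr}(1))(Z_n+o_{\Pr}(1))=Z_n+o_{\Pr}(1)$. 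The subsequence-plus-P\'olya argument above then applies verbatim with $\Sigma_n$ replaced by $\hat\Sigma_n$, which gives the last sentence of the corollary.

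The argument carries essentially no analytic content beyond the two theorems; the only point that genuinely needs care --- and hence the main obstacle --- is ensuring the legitimacy of the step ``bad subsequence $\Rightarrow$ triangular array violating the theorems'', i.e. that the conclusions of Theorems \ref{theorem:inferenceAlg1}--\ref{theorem:inferenceAlg2} are properties of arbitrary sequences of models drawn from $\mathcal{Q}_n$ (not of a single fixed model) and that $\mathcal{Q}_n\neq\emptyset$ so that the subsequence can be embedded into a full sequence. Both are already built into the phrasing of Condition L and of the theorems, so beyond this verification only routine bookkeeping remains.
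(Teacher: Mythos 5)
Your proposal is correct and is exactly the intended derivation: the paper states Theorems \ref{theorem:inferenceAlg1} and \ref{theorem:inferenceAlg2} for arbitrary triangular arrays obeying Condition L precisely so that Corollary \ref{cor:Uniformity1} follows by the standard subsequence/contradiction argument plus P\'olya's theorem (with a Slutsky step for the estimated variance), and the paper offers no separate proof beyond this observation. Your added care about embedding the bad subsequence into a full sequence and about $\mathcal{Q}_n\neq\emptyset$ is the right bookkeeping and does not change the substance.
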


\begin{corollary}[\textbf{Uniformly Valid Confidence Regions}]
\label{cor:Uniformity2}Let $\mathcal{Q}_{n}$ be the collection of all distributions of $\{ (y_{i},d_{i},x_i')' \}_{i=1}^{n}$ for which Condition L is  satisfied for the given $n \geq 1$. Then confidence regions  $\CR \in \{ \CR_D, \CR_I, \CR_{DS}\}$ are asymptotically valid uniformly in $n$,  namely
 $$
\lim_{n \to \infty} \sup_{\xi \in (0,1)} \sup_{Q_n \in \mathcal{Q}_n} |\Pr_{Q_n} ( \alpha_0 \in \CR ) - (1- \xi)| =0
  $$
  \end{corollary}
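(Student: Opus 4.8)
The plan is to obtain Corollary~\ref{cor:Uniformity2} (and, along the way, Corollary~\ref{cor:Uniformity1}) from Theorems~\ref{theorem:inferenceAlg1} and~\ref{theorem:inferenceAlg2} by the standard subsequence device that promotes pointwise-in-$n$ asymptotics, valid for \emph{arbitrary} triangular arrays obeying Condition~L, into statements uniform over $\mathcal{Q}_n$. First I would record the elementary fact that for any array of reals $a_n(Q)$ indexed by $Q\in\mathcal{Q}_n$ one has $\lim_n\sup_{Q\in\mathcal{Q}_n}|a_n(Q)|=0$ provided $a_n(Q_n)\to 0$ along every sequence $(Q_n)$ with $Q_n\in\mathcal{Q}_n$: if the supremum did not vanish, one could extract $n_k$ and $Q_{n_k}\in\mathcal{Q}_{n_k}$ along which $|a_{n_k}(Q_{n_k})|$ stays bounded away from zero, and then stitch these into a full sequence by assigning an arbitrary $Q_n\in\mathcal{Q}_n$ (a nonempty set) to the remaining indices, contradicting the hypothesis. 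Consequently it suffices to show, along every such sequence,
$$ \Pr_{Q_n}(\alpha_0\in\CR)\ \to\ 1-\xi \qquad\text{for each } \CR\in\{\CR_D,\CR_I,\CR_{DS}\}, $$
where $\xi$ is the nominal level appearing in the definition of $\CR$ (so that $\zeta=\xi$).

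Next I would fix one such sequence $(Q_n)$ and observe that the associated data $(y_i,d_i,x_i)_{i=1}^n$ form a triangular array obeying Condition~L for every $n$ — this is exactly how $\mathcal{Q}_n$ is defined — so Theorems~\ref{theorem:inferenceAlg1} and~\ref{theorem:inferenceAlg2} apply verbatim along it. For $\CR_D$ one has $\{\alpha_0\in\CR_D\}=\{\sqrt{n}|\check\alpha-\alpha_0|/\hat\Sigma_n\le\Phi^{-1}(1-\xi/2)\}$; Theorem~\ref{theorem:inferenceAlg1} provides $\Sigma_n^{-1}\sqrt n(\check\alpha-\alpha_0)=Z_n+o_\Pr(1)\rightsquigarrow N(0,1)$ as well as $\hat\Sigma_{1n}^2/\Sigma_n^2=1+o_\Pr(1)$ and $\hat\Sigma_{2n}^2/\Sigma_n^2=1+o_\Pr(1)$, so the table's $\hat\Sigma_n^2=\max\{\hat\Sigma_{1n}^2,\hat\Sigma_{2n}^2\}$ also satisfies $\hat\Sigma_n^2/\Sigma_n^2=1+o_\Pr(1)$, and by Slutsky $\sqrt n(\check\alpha-\alpha_0)/\hat\Sigma_n\rightsquigarrow N(0,1)$. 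Continuity of the standard normal c.d.f. then yields $\Pr_{Q_n}(\alpha_0\in\CR_D)\to\Pr(|N(0,1)|\le\Phi^{-1}(1-\xi/2))=1-\xi$. The argument for $\CR_{DS}$ is word-for-word the same, invoking Theorem~\ref{theorem:inferenceAlg2} with its $\check\alpha$ and $\hat\Sigma_n$.

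For $\CR_I$ I would write $\{\alpha_0\in\CR_I\}=\{\alpha_0\in\mathcal{A}\}\cap\{nL_n(\alpha_0)\le q\}$ with $q$ the $(1-\xi)$-quantile of $\chi^2(1)$. The rate bound for the post-Lasso-logistic estimator established in Appendix~\ref{Sec:Step1} gives $|\widetilde\alpha-\alpha_0|\lesssim_\Pr\sqrt{s\log(p\vee n)/n}=o(1/\log n)$, since $K_x^2 s^2\log^3(p\vee n)\le n\delta_n$ in Condition~L(iv) (together with $K_x\ge 1$) forces $s\log^3(p\vee n)/n\to 0$; hence $\Pr_{Q_n}(\alpha_0\in\mathcal{A})\to 1$. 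Combining this with $nL_n(\alpha_0)=Z_n^2+o_\Pr(1)\rightsquigarrow\chi^2(1)$ from Theorem~\ref{theorem:inferenceAlg1} and the continuity of the $\chi^2(1)$ c.d.f. at $q$ gives $\Pr_{Q_n}(\alpha_0\in\CR_I)=\Pr_{Q_n}(nL_n(\alpha_0)\le q)+o(1)\to 1-\xi$. Feeding the three displayed limits into the elementary fact of the first paragraph then proves Corollary~\ref{cor:Uniformity2}; Corollary~\ref{cor:Uniformity1} follows identically, with Polya's theorem (the limiting c.d.f. is continuous) upgrading convergence in distribution to the uniform-in-$t$ conclusion.

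As to where the difficulty lies: essentially all analytic content is already contained in Theorems~\ref{theorem:inferenceAlg1}--\ref{theorem:inferenceAlg2}, and this corollary introduces no new probabilistic estimate. The two points requiring care are purely organizational — that the property ``obeys Condition~L for all $n$'' is preserved under the subsequence stitching (immediate, as Condition~L is imposed index by index) and that $\alpha_0$ lands inside the localized parameter set $\mathcal{A}$ with probability tending to one. The latter is the only mildly delicate moment, and it reduces to checking that the localization radius $C/\log n$ dominates the post-selection logistic estimation error of $\widetilde\alpha$, which is precisely what the $s^2\log^3(p\vee n)/n\to 0$ portion of Condition~L(iv) secures.
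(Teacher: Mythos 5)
Your proposal is correct and follows exactly the route the paper intends: the paper gives no separate proof of this corollary but derives it from Theorems \ref{theorem:inferenceAlg1}--\ref{theorem:inferenceAlg2} (stated for arbitrary triangular arrays obeying Condition L) via precisely the subsequence-stitching argument you spell out, with Slutsky and continuity of the limit laws handling $\CR_D$ and $\CR_{DS}$. Your extra care about $\Pr_{Q_n}(\alpha_0\in\mathcal{A})\to 1$ for $\CR_I$ is the right point to check and is already secured in the paper's proof of Theorem \ref{theorem:inferenceAlg1}, where $\mathcal{A}\supseteq\{\alpha:|\alpha-\alpha_0|\le n^{-1/2}/\delta_n\}$ is established with probability tending to one under $s\log(p\vee n)\log^2 n\le\delta_n n$.
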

All of  the results are new under $s \to \infty$ and $p \to \infty$ asymptotics, and
 they are new even under the fixed $s$ and $p$ asymptotics. These results motivates interesting questions on the construction of  confidence regions for many parameters of interest that are simultaneously valid. We refer to \cite{BelloniChernozhukovKato2013a}, \cite{BCFH2013program}, and \cite{BCCW2015} where simultaneous confidence regions are proposed in a variety of settings.

\begin{remark}[\textbf{Generalization to Approximately Sparse Models}] The results can also be shown to hold, with identical conclusions,
 in the class of approximately sparse models, following the analysis of the partially linear mean regression model in \cite{BelloniChernozhukovHansenVolume,BelloniChernozhukovHansen2011}.
For example, if the model satisfies \begin{eqnarray}
 & \Ep[y_i \mid d_i,x_i ]  =  G(\alpha_0 d_i + x_i'\beta_0 +r_{yi}),  &  \\
 &  f_id_i  =  f_i x_i'\theta_0 +r_{di} + v_i, \ \  & \Ep[f_i v_ix_i] =0,
  \end{eqnarray}
 where $\|\beta_0\|_0\leq s$, $\|\theta_0\|_0\leq s$, and the approximation errors $r_{yi}$ and $r_{di}$ are such that
 \begin{equation}  \sqrt{\barEp[r^2_{yi}]} \leq  C \sqrt{ s/n } ,  \ \ \ \sqrt{\barEp[r^2_{di}]} \leq  C \sqrt{ s/n }, \ \ \mbox{and} \ \ |\barEp[f_i v_ir_{yi}]| \leq \delta_n n^{-1/2}
 \end{equation}
 We can show that the results in Theorems \ref{theorem:inferenceAlg1} and \ref{theorem:inferenceAlg2} and Corollaries \ref{cor:Uniformity1} and \ref{cor:Uniformity2} continue to hold for this approximately sparse model. This means that the results are robust with respect to moderate violations
 of the sparsity assumption. \qed
\end{remark}



\subsection{Generalized Linear Models under High-Level Assumptions}

In this section we establish $\sqrt{n}$-consistency and asymptotic normality for an estimator $\check \alpha$ of $\alpha_0$ associated with a generalized linear model based on high-level conditions. These high-level conditions cover a variety of different estimators including the estimators described in Tables \ref{Table:Alg} and \ref{Table:Alg2}. In what follows note that the estimated instrument $\hat z_i=\hat z_i(d_i,x_i)$ and the expectations below are evaluated at the given estimates. 

{\bf Condition IR.} (i) The data $\{y_i,d_i,x_i\}$ independent across $i=1,\ldots,n$, satisfies (\ref{Eq:MainLogisticModel}), $\sigma_i^2 = \text{Var}(y_i\mid d_i,x_i)$, $w_i = G'(d_i\alpha_0+x_i'\beta_0)$, and the link function $G$ such that $\sup_{t\in \RR}|G(t)|\leq C$, $\sup_{t\in \RR}|G'(t)|\leq C$ and $\sup_{t\in \RR}|G''(t)|\leq C$. (ii) The following moment conditions hold $\Ep[ w_iz_{0i}x_i]=0$, $ |\Ep[w_id_iz_{0i}]|\geq c > 0$, $\barEp[\sigma_i^2z_{0i}^2]\geq c > 0$, $\barEp[ z_{0i}^2d_i^2 ] \leq C$, $\barEp[\sigma_i^3z_{0i}^3]\leq C$, and  $\barEp[(x_i'\xi)^4] \leq  C$ for all $\|\xi\|=1$. (iii) For some sequences $\delta_n\to 0$ and $\Delta_n\to 0$ with probability at least $1-\Delta_n$, the estimates $(\check \alpha,\hat \beta,\hat z)$ satisfy
\begin{equation}\label{Eq:HLfirstestimated}
\begin{array}{c}   \|\hat\beta-\beta_0\| \leq \delta_n n^{-1/4},\  \left.\barEp[(\tilde z_i - z_{0i})^2]\right|_{\tilde z = \hat z} \leq \delta_n^2, \ \   \ \|\hat\beta-\beta_0\|\cdot \{\left.\barEp[(\tilde z_i - z_{0i})^2]\right|_{\tilde z = \hat z}\}^{1/2} \leq  \delta_n n^{-1/2}, \end{array}
\end{equation}

\vspace{-0.5cm}

\begin{equation}\label{Eq:HLestimated}
\displaystyle   \sup_{\alpha: |\alpha-\alpha_0|\leq \delta_n}\left|(\En-\barEp)\left[\{y_i-G(d_i\alpha+x_i'\hat\beta)\}\hat z_i - \{y_i-G(d_i\alpha+x_i'\beta_0)\}z_{0i}\right]\right| \leq \delta_n \ n^{-1/2}\\
\end{equation}

\vspace{-0.5cm}


\begin{equation}\label{Eq:HLhatalpha}
|\check \alpha-\alpha_0|\leq \delta_n \ \ \ \ \mbox{and} \ \ \ \left| \ \En[\{y_i-G(d_i\check\alpha+x_i'\hat\beta)\}\hat z_i] \ \right|\leq \delta_n\ n^{-1/2}.
\end{equation}
(iv) with probability $1-\Delta_n$ we have $|\hat w_i|\leq C$, $\|\hat w_i-w_i\|_{2,n}\leq \delta_n$, $\|d_i(\hat w_i-w_i)\|_{2,n}\leq \delta_n$, $\|d_i\hat z_i\|_{2,n}\leq C$, $\|\hat z_i - z_{0i}\|_{2,n}\leq \delta_n$,  and $\|z_{0i}x_i'(\hat\beta-\beta_0)\|_{2,n}\leq \delta_n$.

This set of high-level conditions allow us to cover several generalized models of interest. In particular Condition L and the choices of post-selection methods described in the previous section imply Condition IR.
Next we formally state our main result for generalized linear models.

\begin{theorem}\label{Thm:EstIVQRgeneric}
Under Condition IR(i,ii,iii) we have
$$ \{\barEp[\sigma_i^2z_{0i}^2]\}^{-1/2}\barEp[w_id_iz_{0i}]\sqrt{n}(\check\alpha-\alpha_0) = \frac{ \{\barEp[\sigma_i^2z_{0i}^2]\}^{-1/2}}{\sqrt{n}}\sum_{i=1}^n\{y_i-G(d_i\alpha_0+x_i'\beta_0)\}z_{0i} + o_\Pr(1) $$
$$\mbox{and} \ \ \ \{\barEp[w_id_iz_{0i}]^{-1}\barEp[\sigma_i^2z_{0i}^2]\barEp[w_id_iz_{0i}]^{-1}\}^{-1/2}\sqrt{n}(\check \alpha-\alpha_0)\rightsquigarrow N(0,1).$$
Moreover, if Condition IR(iv) also holds, we have $$nL_n(\alpha_0) \rightsquigarrow \chi^2(1)$$ and the variance estimator is consistent, namely $$\En[\hat w_id_i\hat z_i]^{-1}\En[\{y_i-\G(d_i\check\alpha+x_i'\hat\beta)\}^2\hat z_i^2]\En[\hat w_i d_i \hat z_i]^{-1}=\barEp[w_id_iz_{0i}]^{-1}\barEp[\sigma_i^2z_{0i}^2]\barEp[w_id_iz_{0i}]^{-1}+o_\Pr(1).$$
\end{theorem}

It is important to note that Theorem \ref{Thm:EstIVQRgeneric} is applicable to various estimation methods and we believe it will be of interest even in settings not based on sparsity assumptions.

\section{Monte Carlo}\label{SecMC}

Here we provide a simulation study of the finite sample properties of the proposed estimators and confidence intervals. We compare their performance with the naive post selection estimator, which is defined by applying  the logistic regression performed on the model selected by the $\ell_1$-penalized logistic regression.

Our simulations are based on the model:
$$ \Ep[y\mid d, x] = \G( d\alpha_0 + x'\{c_y\nu_y\} ), \  \ \ \  d = x'\{c_d\nu_d\} + \tilde v,  $$
where  the coefficient vectors $\nu_y$ and $\nu_d$ are set to $$\begin{array}{l}\nu_{y} = (1,1/2,1/3,1/4,1/5,0,0,0,0,0,1,1/2,1/3,1/4,1/5,0,0,\ldots,0)',\\
\nu_{d} = (1,1/2,1/3,1/4,1/5,1/6,1/7,1/8,1/9,1/10,0,0,\ldots,0)',\end{array}$$ $x = (1,z')'$ consists of an intercept and covariates $z \sim N(0,\Theta)$, and the error $\tilde v$ is i.i.d. as $N(0,1)$.  The dimension $p$ of the covariates $x$ is $250$, and the sample size $n$ is $200$. In this setting  the coefficients feature a declining pattern, with the smallest non-zero coefficients being hard to differentiate from zero for the given sample size.  Therefore, we expect that the $\ell_1$-based model selectors will be making selection mistakes on variables with the smaller coefficients. (Additional simulations are provided in the Supplementary Material where we also consider an approximately sparse model for which all 250 coefficients are non-zero.  Those experiments demonstrate that the results are robust with respect to moderate deviations away from exactly sparse models.)

The regressors are correlated with covariance $\Theta_{ij} = \rho^{|i-j|}$ and $\rho = 0.5$. The coefficient $c_d$ is used to control the $R^2$, denoted $R^2_d$, in the equation relating main regressor to the controls, and $c_y$ is used to control the $R^2$, denoted $R^d_y$, for the regression equation: $\tilde y - d\alpha_0 = x'\{c_y\nu_y\}+\epsilon$, where $\epsilon$ is logistic noise with unit variance. In the simulations below we will use different values of $\alpha_0$, $c_y$ and $c_d$, which induce different data-generating processes (dgps). For every replication, we draw new errors $v_{i}$'s and controls $x_i$'s. The regression functions $x'(c_y\nu_y)$ and $x'(c_d\nu_d)$ are sparse. As we vary the coefficients $c_y$ and $c_d$, we induce different amounts of ``signal" strength, making it easier or harder for the Lasso-type methods to detect the controls with non-zero coefficients.

In Figure \ref{Fig:SimFirst} we consider a dgp with $\alpha_0=0.2$ and $R^2_d = R^2_y=0.75$, induced by setting $c_d = 1$ and $c_y = 0.75$.  We performed 5000 Monte-Carlo simulations. Figure \ref{Fig:SimFirst} summarizes the performance and displays the distribution of the following estimators, which are centered by the true value $\alpha_0$ and studentized by their standard deviation:
\begin{itemize}
\item[1.]  Naive post selection estimator -- estimator of $\alpha_0$ based on logistic regression after the naive selection using $\ell_1$-penalized logistic regression;
 \item[2.] Optimal instrument estimator -- estimator of $\alpha_0$ based on the instrumental logistic regression with the optimal instrument, as defined in Table \ref{Table:Alg};
  \item[3.] Double selection estimator -- estimator of $\alpha_0$ based on the logistic  regression after double selection, as defined in Table \ref{Table:Alg2}.
 \end{itemize}
The optimal IV estimator and the double selection estimator have distribution approximately centered at the true value, with distribution agreeing closely with the standard normal distribution.  They have low biases, low  root mean squared errors, and confidence regions have rejection rates close to the nominal level of .05. This good performance is well aligned with our theoretical results that we have developed in the previous section.  In sharp contrast, the distribution of the naive post selection estimator seems to deviate substantially from the normal distribution. This estimator exhibits large bias and high root mean squared error compared to the former procedures. This occurs because in this dgp, perfect selection is not achieved, and the  resulting ``moderate" selection mistakes create a large omitted variable bias.    Thus, if we use naive post selection estimator with the standard normal distribution for constructing confidence intervals or performing hypothesis testing, we shall end up with rather misleading inference.  This poor performance is well aligned with theoretical predictions of \cite{leeb:potscher:hodges,LeebPotscher2006,LeebPotscher2005} in the context of linear models.


\begin{figure}
  \begin{minipage}[t]{0.49\linewidth}
  \vspace{0pt}
   \includegraphics[width=\textwidth]{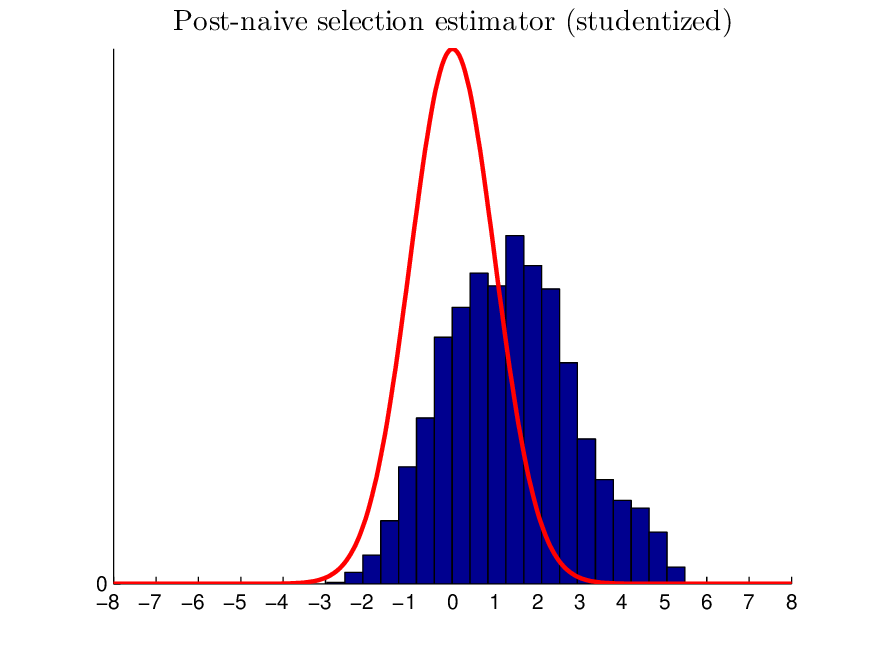}\\
  \end{minipage}
  \begin{minipage}[t]{0.49\linewidth}
  \vspace{1.75cm}

\hspace{-0.75cm}
{\footnotesize \begin{tabular}{l|cccc}
\hline
 estimator        &  bias &  variance & rmse & rp(0.05)\\
 \hline
naive post selection    &  .173  &  .041  &  .267 & .350 \\
optimal IV              &  .038  &  .036  &  .193 & .043 \\
double selection   &  .024  &  .039  &  .199 & .051  \\
\hline
\end{tabular}}
\end{minipage}
\includegraphics[width=0.49\textwidth]{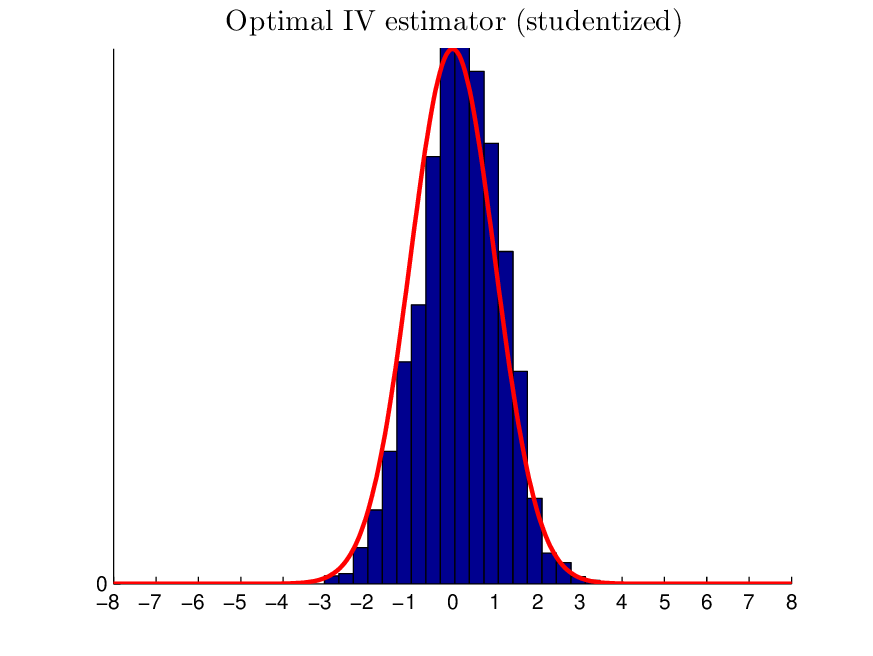}
\includegraphics[width=0.49\textwidth]{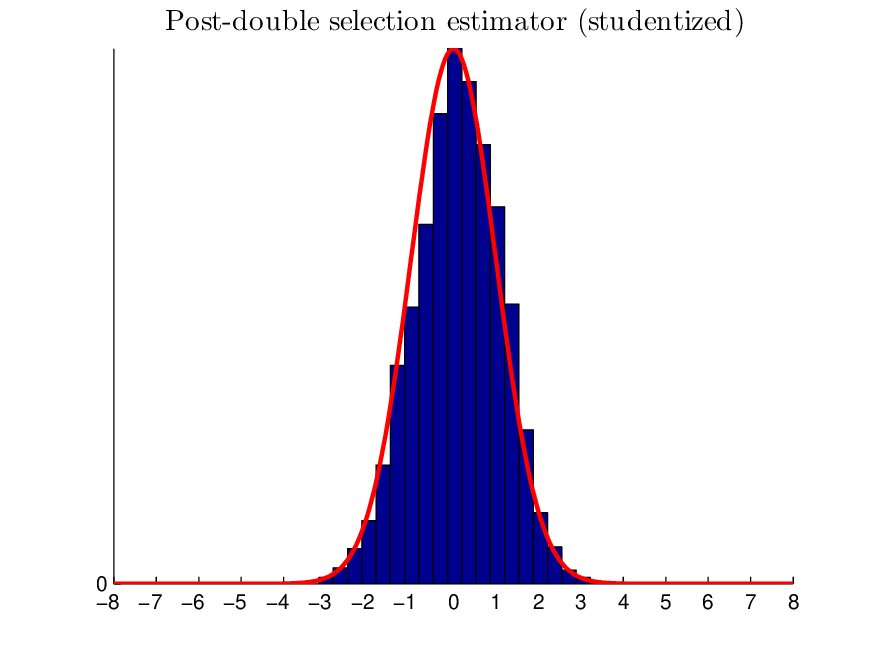}
\caption{\footnotesize The top right panel display bias, variance, RMSE, and rejection frequency for a $.05$-level test. The plots display the distribution of the naive post selection estimator (top left panel) and the two proposed estimators: optimal instruments  (bottom left panel) and double selection (bottom right). 
}\label{Fig:SimFirst}
\end{figure}




We now examine the performance more systematically by varying
\begin{equation}\label{model space}
(R^2_d, R^d_y) \in \{0, .1, .2, .3, .4, .5, .6, .7, .8, .9\}^2 \text { and } \alpha_0 \in \{0, .25, .5\}.
 \end{equation} This gives us 300 different dgps. For each dgp we performed 1000 Monte-Carlo simulations. Figures \ref{Fig:alpha00}-\ref{Fig:alpha05} display the rejection frequencies of confidence regions with (nominal) significance level of .05 and Figure \ref{Fig:RMSE}  displays the root mean squared errors
of the estimators of $\alpha_0$.  The goal of this exercise is to  verify numerically how good the uniformity claims derived in Corollaries \ref{cor:Uniformity1} and \ref{cor:Uniformity2} are, and also confirm that the previous conclusions continue to hold across a wide set of dgp.

 In Figures \ref{Fig:alpha00}-\ref{Fig:alpha05} we consider the rejection (non-coverage) frequencies of confidence regions based on: naive post selection logistic estimator\footnote{This region is given by $\{ |\alpha-\widetilde\alpha|\leq \{\En[\hat w_i(d_i,x_{i\supp(\widetilde\beta)}')'(d_i,x_{i\supp(\widetilde\beta)}')]\}^{-1/2}_{11}\Phi^{-1}(1-\xi/2)/\sqrt{n}\}$ where $(\widetilde\alpha,\widetilde\beta)$ is the naive post selection logistic estimator.}, optimal IV ($\CR_D$ and $\CR_I$) and double selection ($\CR_{DS}$). These figures illustrate the uniformity properties of the confidence regions based on the discussed estimators. The ideal figure would be a flat surface with the rejection frequency of the true value equal to the nominal level of $.05$.    The confidence regions based on the naive post selection perform very poorly, and deviate strongly away from the ideal level of $.05$ throughout large parts of the model space (induced by (\ref{model space})). In contrast, the confidence regions based on optimal IV and double selection seem to be substantially closer to the ideal level, which is in line with our theoretical results in Section \ref{Sec:Main}. The double selection estimator seems to outperform the estimator based on the explicit construction of the optimal instrument (e.g., the rejection rates and the RMSE for the case with $\alpha=.5$, where optimal instrument procedure tends to perform noticeably worse.) 
   Thus, based on the theoretical results and on the Monte-Carlo results, we recommend the use of the double selection estimator over the optimal IV estimator and, certainly, over the naive post selection estimator.


%

\begin{figure}[ht]
\includegraphics[width=0.49\textwidth]{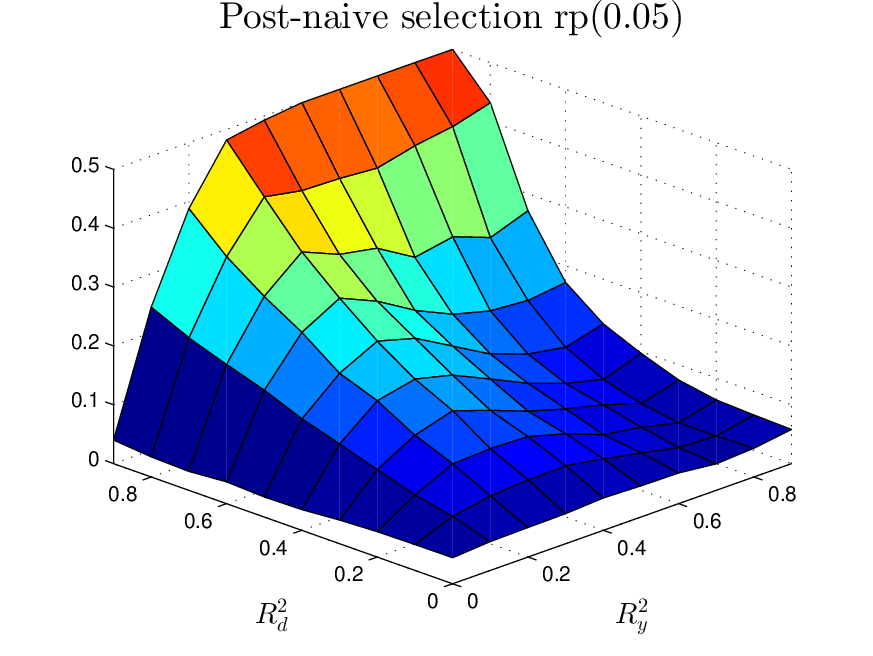}
\includegraphics[width=0.49\textwidth]{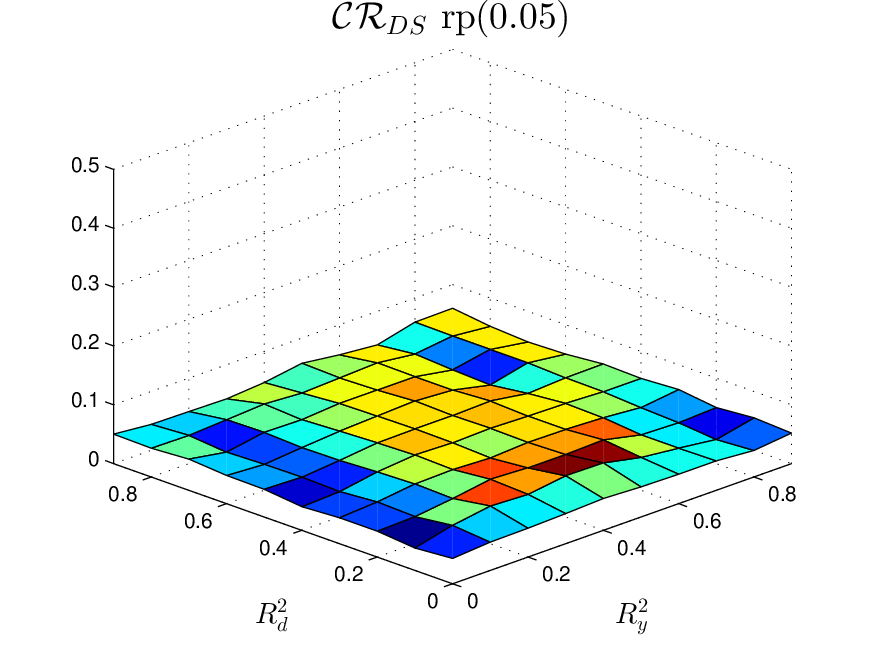}\\
\includegraphics[width=0.49\textwidth]{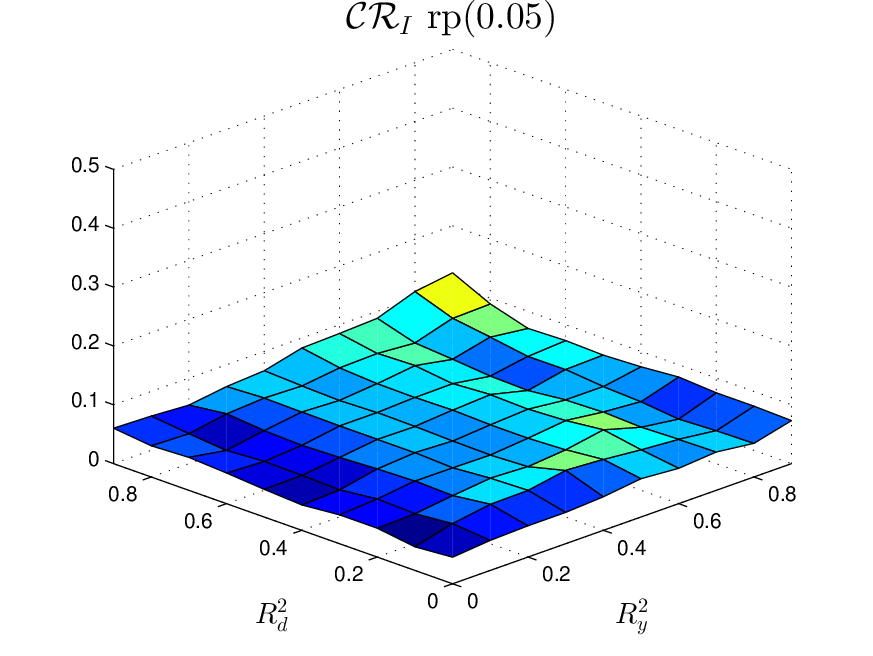}
\includegraphics[width=0.49\textwidth]{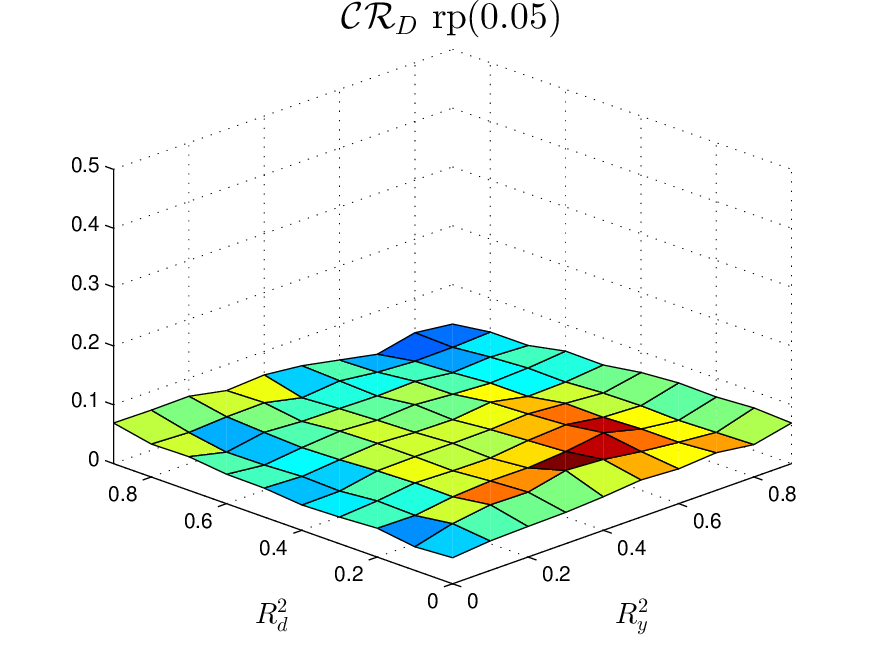}
\caption{\footnotesize The plots display the rejection frequencies at .05 level (rp(0.05)) of the confidence regions based on naive post selection, optimal instrument ($\CR_D$ and $\CR_I$) and double selection ($CR_{DS}$). There are a total of 100 different designs with $\alpha_0=0$. The results are based on 1000 replications for each design.}\label{Fig:alpha00}
\end{figure}

\begin{figure}[ht]
\includegraphics[width=0.49\textwidth]{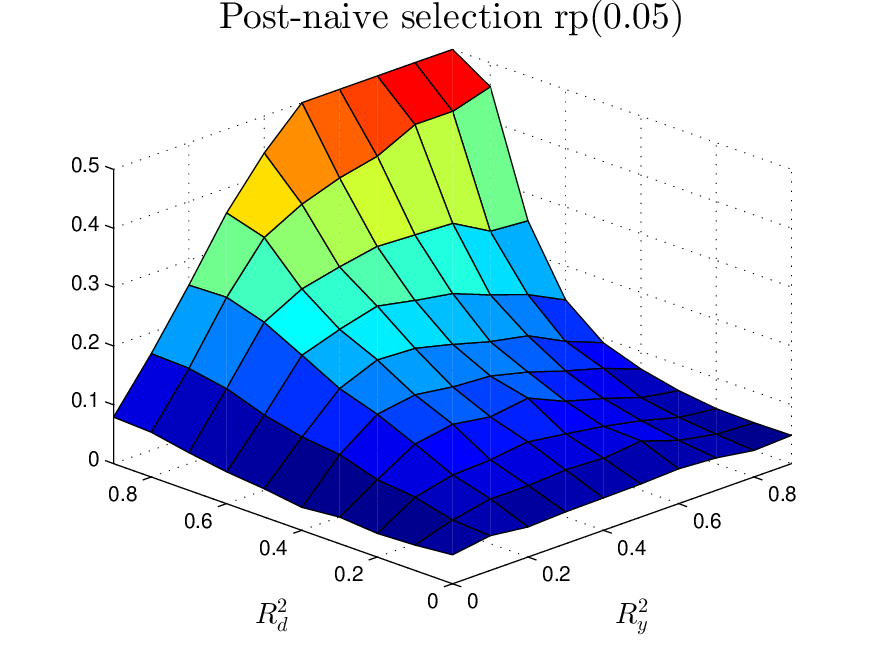}
\includegraphics[width=0.49\textwidth]{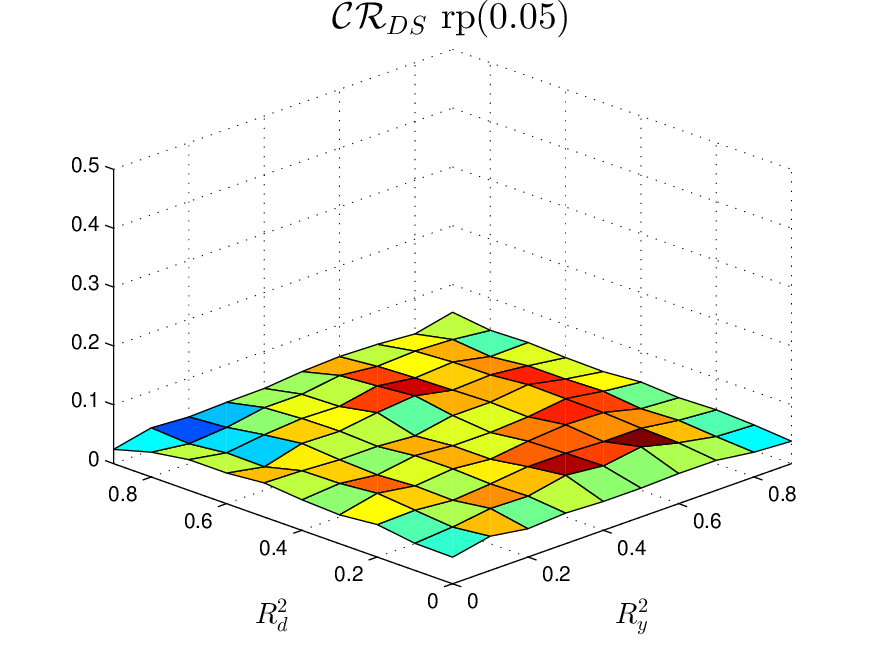}\\
\includegraphics[width=0.49\textwidth]{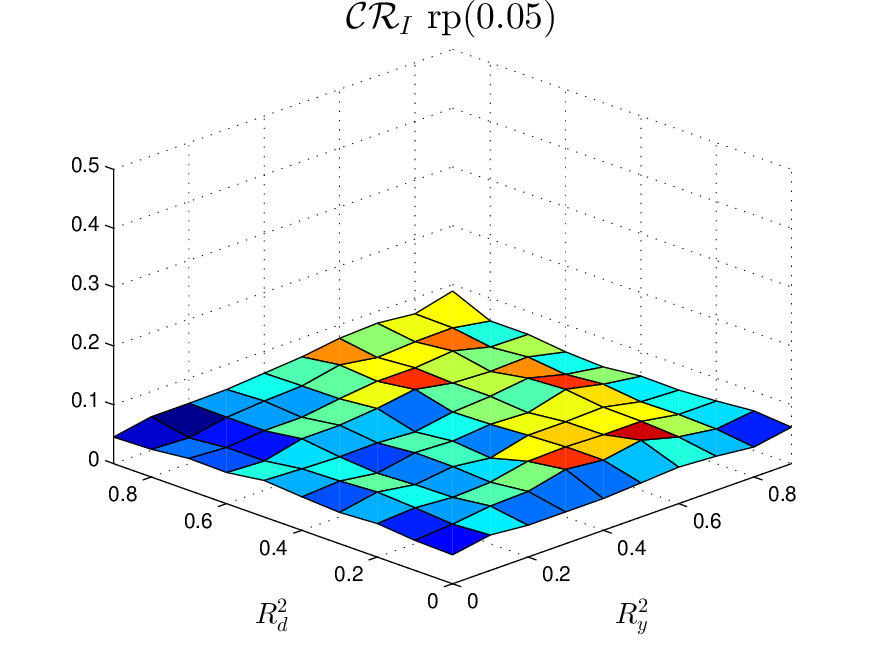}
\includegraphics[width=0.49\textwidth]{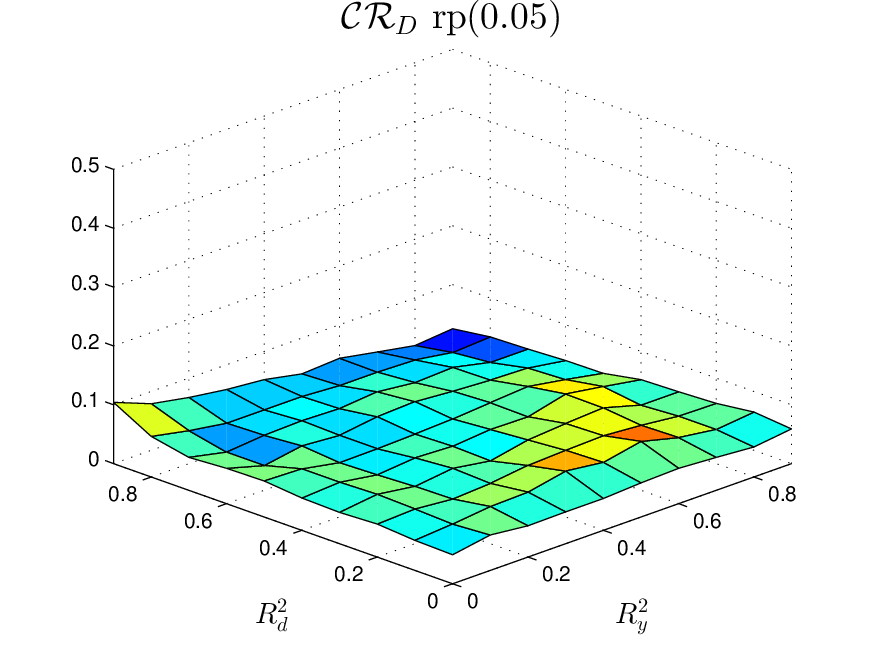}
\caption{\footnotesize The plots display the rejection frequencies at .05 level (rp(0.05)) of the confidence regions based on naive post selection, optimal instrument ($\CR_D$ and $\CR_I$) and double selection ($CR_{DS}$). There are a total of 100 different designs with $\alpha_0=0.25$. The results are based on 1000 replications for each design.}\label{Fig:alpha025}
\end{figure}

\begin{figure}[ht]
\includegraphics[width=0.49\textwidth]{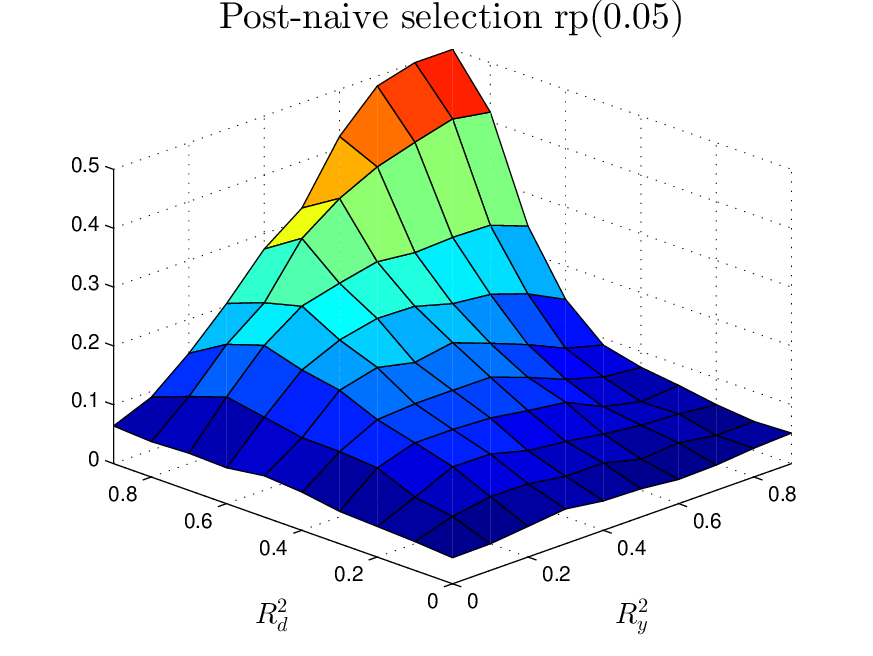}
\includegraphics[width=0.49\textwidth]{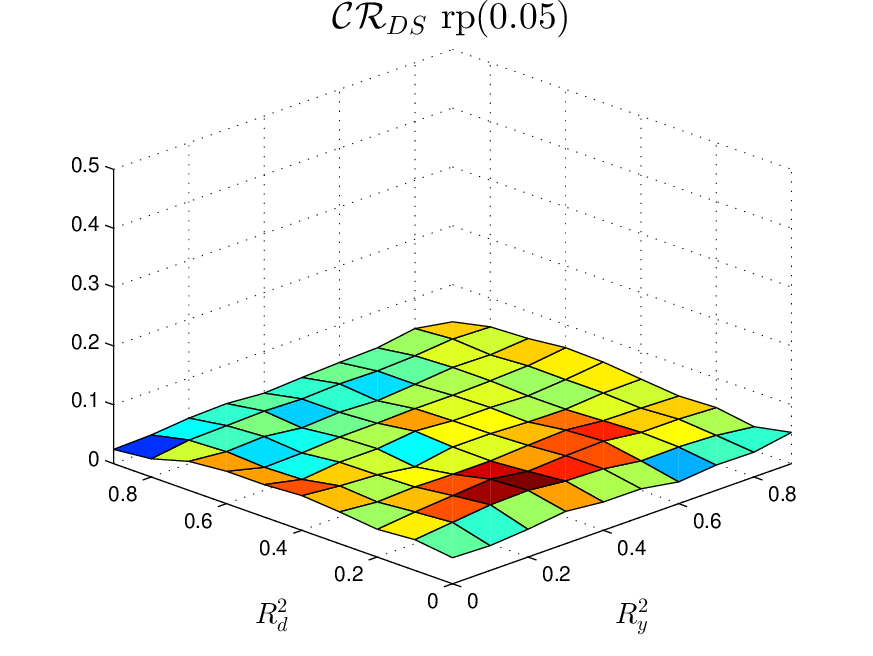}\\
\includegraphics[width=0.49\textwidth]{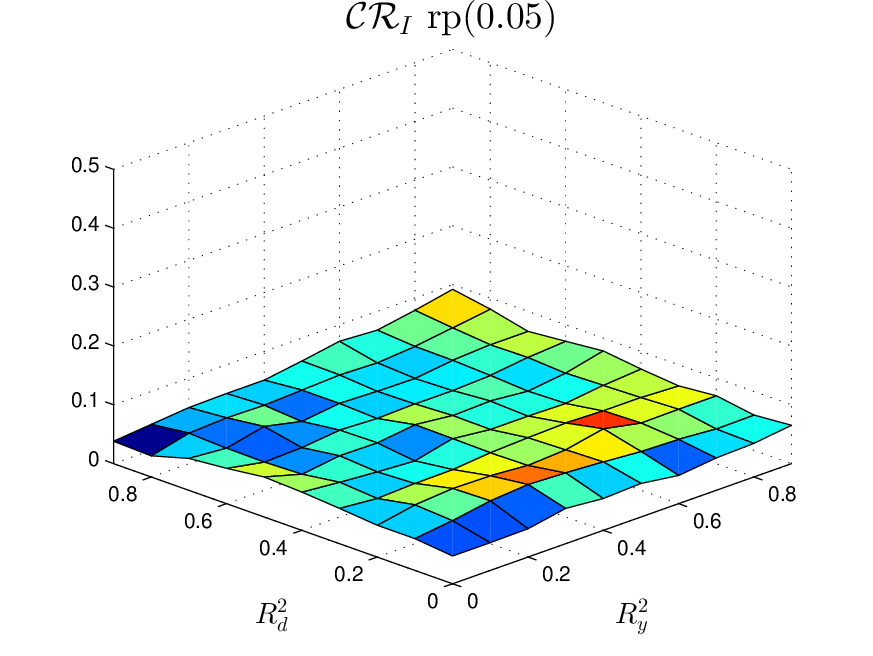}
\includegraphics[width=0.49\textwidth]{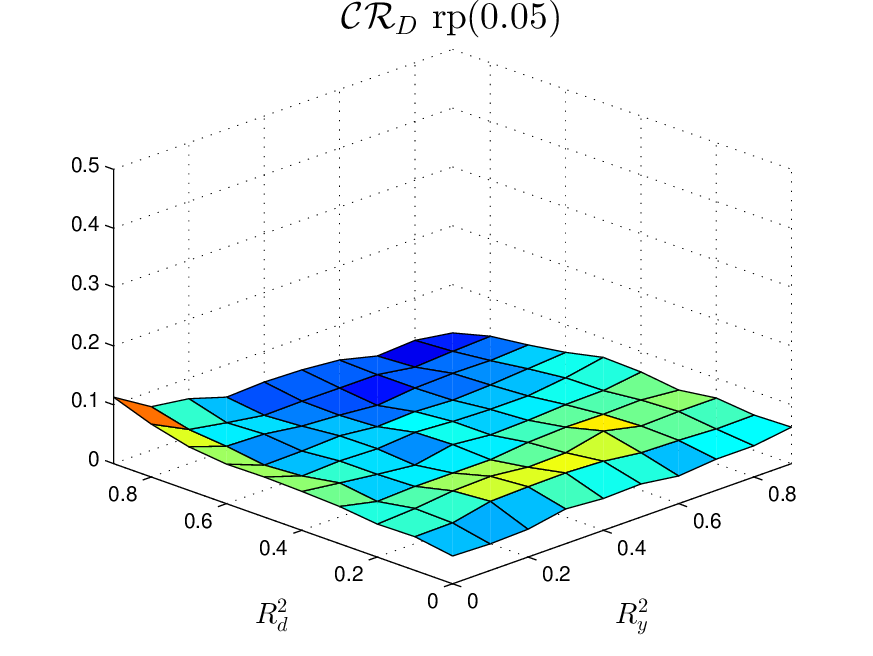}
\caption{\footnotesize The plots display the rejection frequencies at .05 level (rp(0.05)) of the confidence regions based on naive post selection, optimal instrument ($\CR_D$ and $\CR_I$) and double selection ($\CR_{DS}$). There are a total of 100 different designs with $\alpha_0=0.5$. The results are based on 1000 replications for each design.}\label{Fig:alpha05}
\end{figure}


\begin{figure}[ht]
\includegraphics[width=0.3\textwidth]{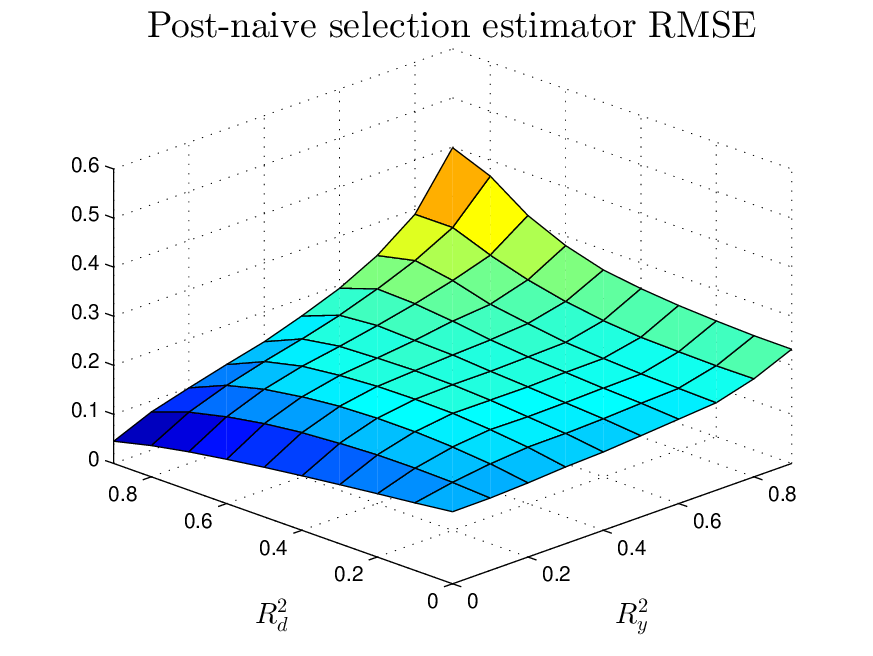}
\includegraphics[width=0.3\textwidth]{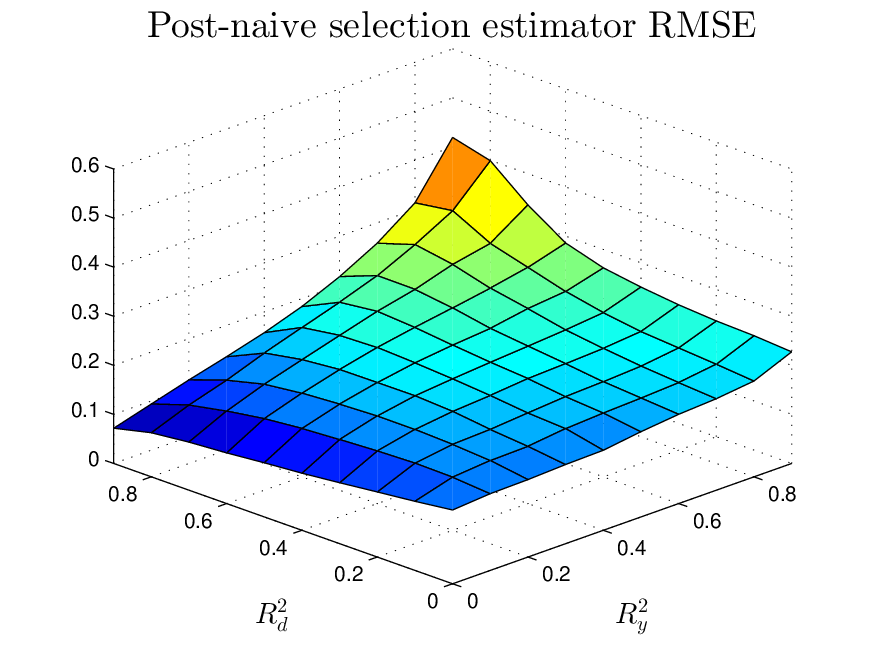}
\includegraphics[width=0.3\textwidth]{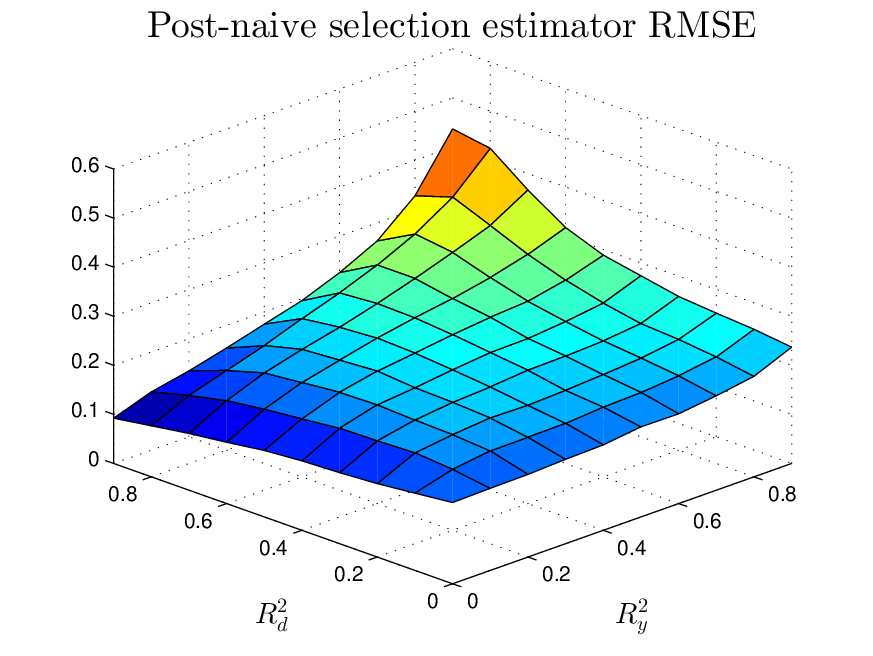}\\
\includegraphics[width=0.3\textwidth]{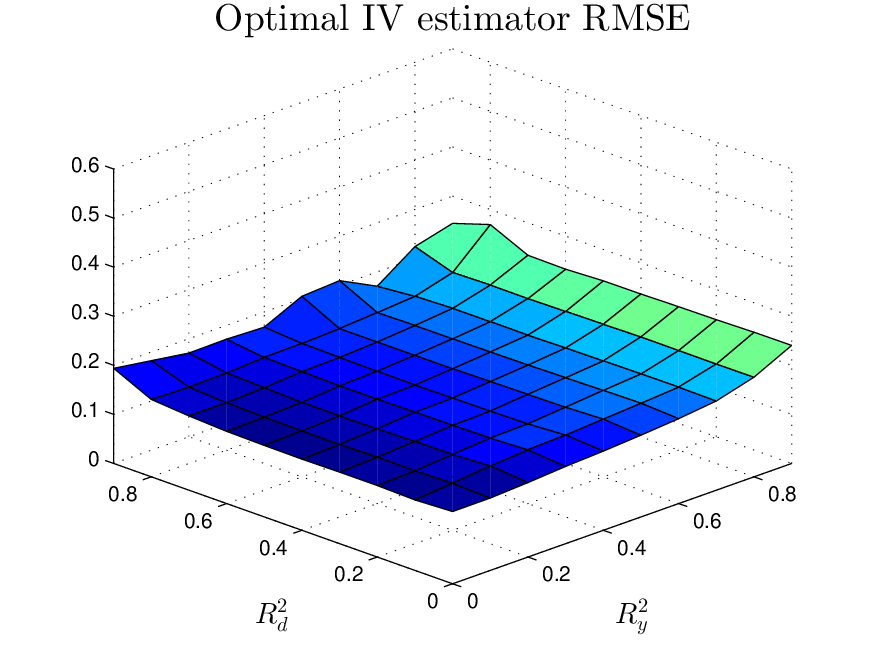}
\includegraphics[width=0.3\textwidth]{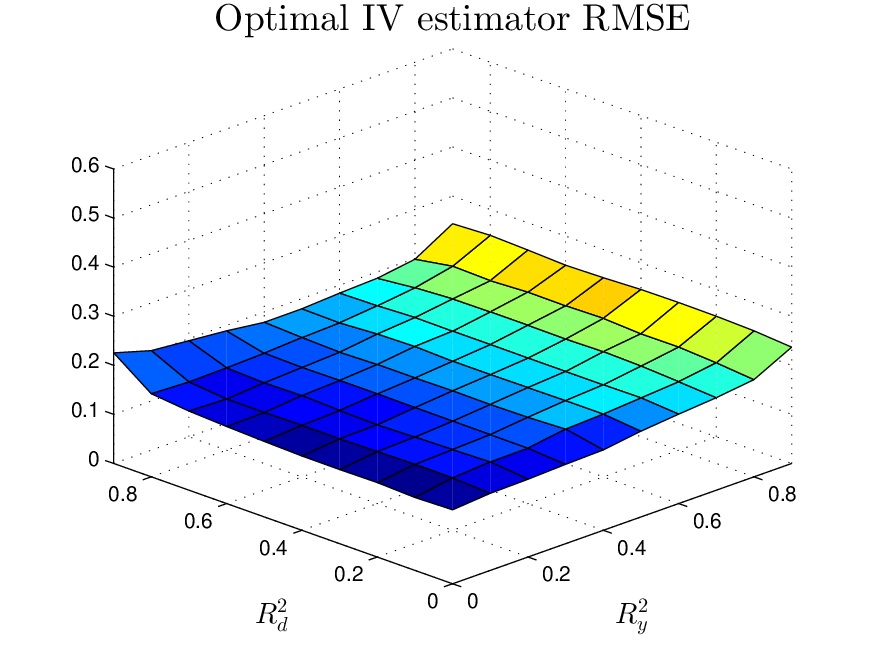}
\includegraphics[width=0.3\textwidth]{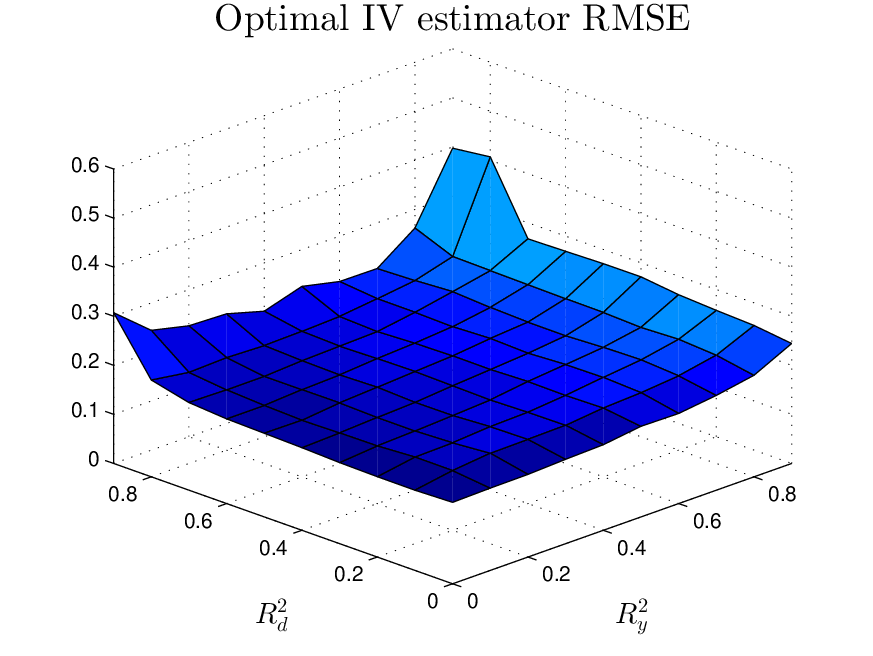}\\
\includegraphics[width=0.3\textwidth]{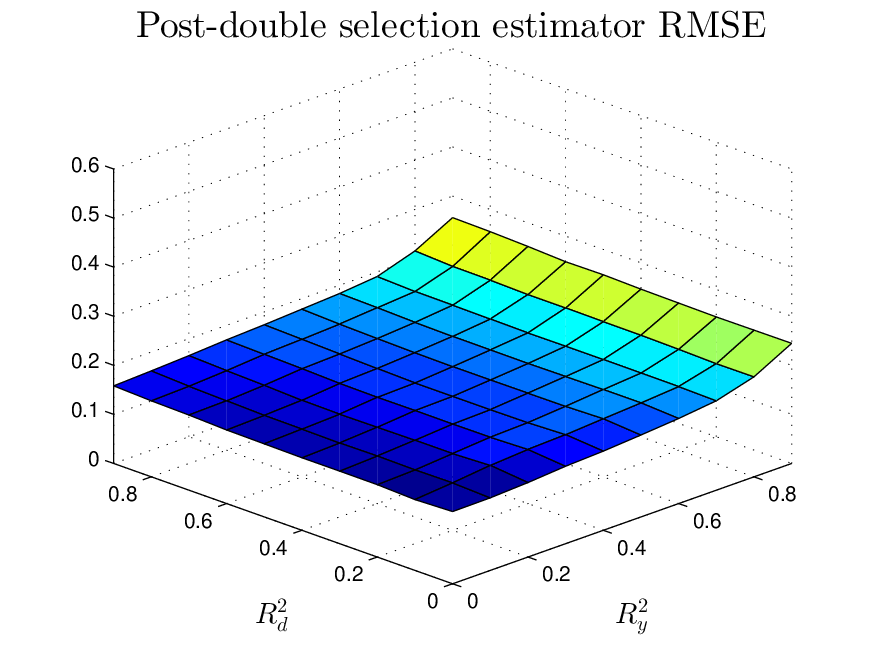}
\includegraphics[width=0.3\textwidth]{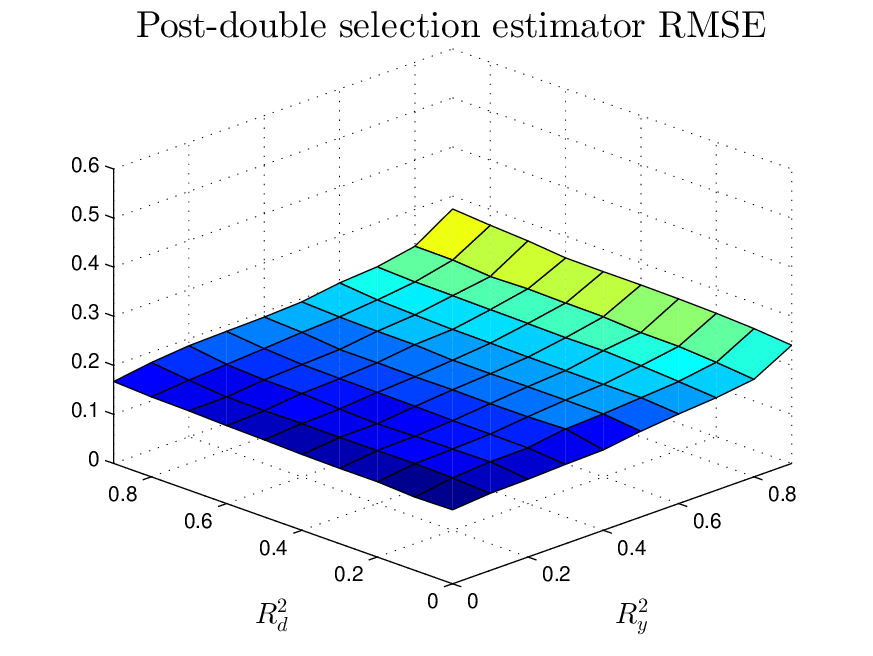}
\includegraphics[width=0.3\textwidth]{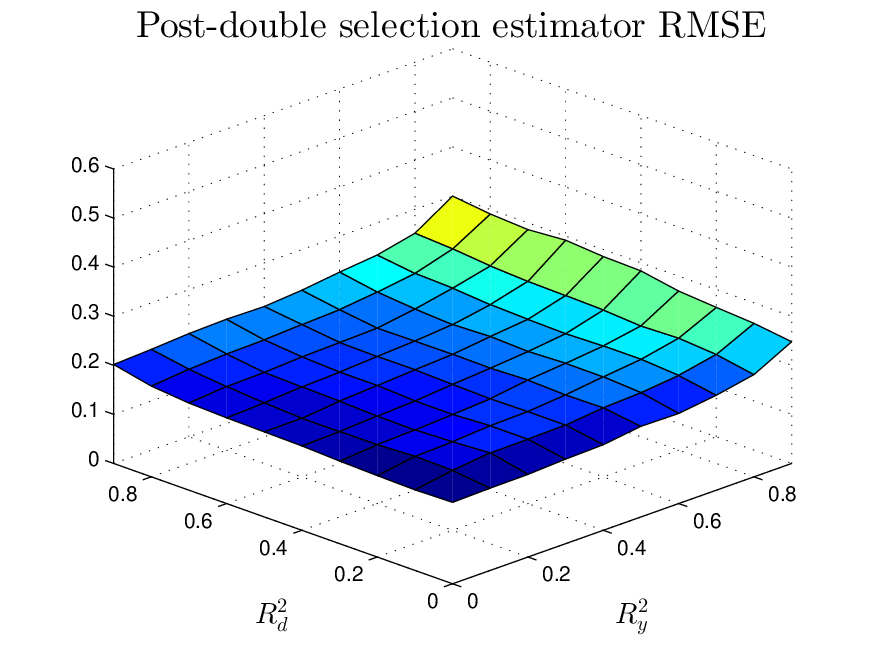}
\caption{\footnotesize The plots display the RMSE of the naive post selection estimator, optimal instrument estimator and double selection estimator. The left column refers to $\alpha_0=0$, the middle column refers to $\alpha_0=0.25$, and the right column refers to $\alpha_0=0.5$. There are a total of 100 different designs for each value of $\alpha_0$. The results are based on 1000 replications for each design.}\label{Fig:RMSE}
\end{figure}



\section{Discussion}\label{Sec:Disc}

\subsection{Relation between Double Selection and Optimal Instrument}\label{Sec:Connections}

In this section, we provide a more formal connection between the two proposed methods. It turns out that the construction of the double selection estimator implicitly approximates the optimal instrument $z_{0i}=v_i/\sqrt{w_i}$. This occurs because the model selection procedure in Step 2 associated with (\ref{Eq:Decomposition}) allows the estimator to achieve uniformity properties. To see that, using the notation in Table \ref{Table:Alg} where $\hat\beta$, $\hat\theta$ and $\widetilde \theta$ are defined, let $\hat T^* = \supp(\hat\beta)\cup \supp(\hat\theta)$ denote the variables selected in Step 1 and 2. By the first order conditions of the double selection logistic regression of Step 3 in Table \ref{Table:Alg2} we have
$$\En[ \{y_i-\G(d_i\check\alpha  + x_i'\check\beta)\} (d_i, \ x_{i\hat T^*}')']=0$$
which creates an orthogonal relation to any linear combination of $(d_i, \ x_{i\hat T^*}')'$. In particular, by taking the linear combination  $(d_i, \ x_{i\hat T^*}')(1,-\widetilde\theta')'=d_i - x_i'\widetilde\theta =\hat z_i$, we have
$$\En[ \{y_i-\G( d_i\check\alpha + x_i'\check\beta)\} \hat z_i ]=0.$$
Therefore the double selection estimator $\check\alpha$ minimizes
$$\widetilde L_n(\alpha) =  \frac{\| \En[\{ y_i - \G(d_i\alpha + x_i'\check\beta) \} \hat z_i ] \|^2}{\En[\{ y_i - \G(d_i\alpha  + x_i'\check\beta) \}^2 \hat z_i^2 ]}, $$
where $\hat z_i$ is the instrument of the optimal instrument estimator which was {\it implicitly} created. Thus, the double selection estimator can be seen as an iterated version of the method based on instruments where  $\widetilde \beta$ is replaced with $\check \beta$. Although their first order asymptotic properties coincide, in finite sample, the double selection method seems to obtain better estimates leading to a more robust performance.


\subsection{Relation to Neyman's $C(\alpha)$ test}\label{Sec:Neyman} Next we discuss connections between the proposed approach and Neyman's $C(\alpha)$ test  \cite{Neyman1959,Neyman1979} (here we draw on the discussion in \cite{BelloniChernozhukovKato2013a}). For the sake of exposition, we assume the instruments are known and i.i.d. observations. As stated in (\ref{Eq:Estimating}) and (\ref{Eq:Estimating2}) we rely on instruments satisfying the two equations:
$$
\Ep[\{y_i-\G(d_i\alpha_0+x_i'\beta_0)\} z_{0i}] =0 \ \ \mbox{and} \ \ \left.\frac{\partial}{\partial \beta} \Ep[\{y_i -\G(d_i\alpha_0+x_i'\beta) \} z_{0i}] \right|_{\beta=\beta_0} = \Ep[w_i z_{0i}x_i]=0.
$$
These  conditions allows us to construct regular, $\sqrt{n}$-consistent estimators of $\alpha_0$, despite the fact that nonregular, non $\sqrt{n}$-consistent estimator for $\beta_0$ are being used to cope with high-dimensionality. In particular, regularized or post model selection estimators can be used as estimators of $\beta_0$. Neyman's $C(\alpha)$ test was motivated by the same idea which motivates the use of the term ``Neymanization"  to describe such procedure.  Although there will be many instruments $z_{0i}$ that can achieve the property stated above, the choice $z_{0i}=v_i/\sigma_i$ proposed in Section \ref{Sec:Model} is optimal as it minimizes the asymptotic variance of the resulting estimators.

Generally, valid  (but not necessarily optimal) instruments can be constructed by generalizing the weighted equation (\ref{Eq:Decomposition}) to
 \begin{equation}\label{define m}
f_i d_i = f_i m_0(x_i) + \tilde v_i,   \  \  \Ep[f_i \tilde v_i x_i]  =0,
 \end{equation}
where $f_i=f(d_i,z_i)$ is a nonnegative weight, and setting the instrument as $z_{0i} := f_i\tilde v_i/w_i$. 
Because of the zero-mean condition in (\ref{define m}), and provided that $m_0 \in \mathcal{H}$, the function $m_0(x_i)$ in (\ref{define m}) is the solution of the following weighted least squares problem
\begin{equation}
\min_{h \in \mathcal{H}} \barEp\[ f_i^2 \{ d_i -  h(x_i)\}^2 \],
\end{equation}
where $\mathcal{H}$ denotes the set of measurable functions $h$ satisfying
$\Ep[f_i^2 h^2(x_i) ]< \infty$ for each $i$. In the current high-dimensional setting, it is assumed that  $m_0(x_i)$ can be written as
a sparse combination of the controls, namely $m_0(x_i)=x_i'\theta_0$ with $\|\theta_0\|_0 \leq s$,  so that
\begin{equation}\label{define_theta0_f}
f_i d_i = f_i x_i' \theta_0 + \tilde v_i,   \  \  \Ep[f_i \tilde v_i x_i]  =0.
\end{equation} This permits the use of Lasso or Post-Lasso to estimate $\theta_0$ which in turn can be used to construct an estimate of $z_{0i}$. Naturally, if the function $m_0$ satisfies different structured properties that could motivate different estimators (for example, we can use ridge estimators if the $m_0$ is ``dense" with respect to $x$).

Our technical results establish that, uniformly over $\{ \alpha :  \sqrt{n}|\alpha - \alpha_0| \leq C \}$,
\begin{equation}\label{adapt}
 \En[\{y_i -\G(d_i\alpha+x_i'\hat \beta)\} z_{0i}] - \En[\{y_i-\G(d_i\alpha+x_i'\beta_0)\} z_{0i}] =o_\Pr(n^{-1/2}),
\end{equation}
for the estimators $\hat\beta$ proposed in this work. This does not require $\hat \beta$ to converge at root-$n$ rate to $\beta_0$ (which generically is not achievable in the current setting) though we do impose the sparsity condition $s^2 \log^2 (p\vee n)/n \to 0$ to guarantee that  $\|\hat \beta- \beta \| = o_\Pr(n^{-1/4})$.  Equation (\ref{adapt}) implies that the empirical estimating equations behave as if $\beta_0$ was used instead of $\hat\beta$. Hence, for estimation,
we can use the instrumental logistic regression estimator, namely $\check \alpha$ as a minimizer of the statistic
$$
n L_n(\alpha) =  \|  \sqrt{n} \En[\{y_i - \G(d_i\alpha+x_i'\hat \beta)\} z_{0i}]\|^2 \ \ / \ \ \En[\{y_i - \G(d_i\alpha+x_i'\hat \beta)\}^2 z_{0i}^2].
$$

From (\ref{define_theta0_f}) we have that $\theta_0= \barEp[f^2_i x_i x_i' ]^{-}\barEp[f^2_i d_i x_i ]$,   where  $A^-$ denotes a generalized inverse of $A$. Letting $\hat\varepsilon_i(\alpha) = y_i - \G(d_i\alpha+x_i'\hat \beta)$ and
$$
z_{0i} = f_i\tilde v_i/w_i = (f_i^2/w_i) d_i-  (f_i^2/w_i) x_i'  \barEp[f^2_i x_i x_i' ]^{-}\barEp[f^2_i d_i x_i ],$$
 $n L_n(\alpha)$ can be rewritten as a (perhaps) familiar version of Neyman's $C(\alpha)$ statistic
$$
n L_n(\alpha) =  \frac{\|  \sqrt{n} \{ \En[ \hat\varepsilon_i(\alpha) (f_i^2/w_i) d_i] - \En[ \hat\varepsilon_i(\alpha)(f_i^2/w_i) x_i' ] \barEp[f^2_i x_i x_i' ]^{-}\barEp[f^2_i d_i x_i ]  \} \|^2}{\En[\hat\varepsilon_i^2(\alpha) z_{0i}^2]}.
$$
Thus, our IV estimator minimizes a Neyman's $C(\alpha)$ statistic for testing point hypotheses about $\alpha$.  Hence  our construction builds on the classical ideas of Neyman for dealing with (hard-to-estimate) nuisance parameters.

An estimator $\check \alpha$  that minimizes the criterion $n L_n$ up to a $o_\Pr(1)$ term satisfies
$$
\tilde \Sigma_n^{-1} \sqrt{n} (\check \alpha - \alpha_0) \rightsquigarrow  N(0,1),  \  \ \ \tilde\Sigma^2_n = \barEp[ w_i d_i z_{0i} ]^{-2}   \barEp[ \sigma_i^2z_{0i}^2].
$$
It is not difficult to check that using  $f_i = w_i/\sigma_i$ leads to the smallest possible value  $\barEp[ v_i^2]^{-1}$ of $\tilde\Sigma^2_n$.  Therefore, $z_{0i}=v_i/\sigma_i$ is the optimal instrument among all instruments that can be derived by the preceding approach.  Using the optimal instrument translates into more precise estimators, smaller confidence regions, and better power for testing  based on either $\check \alpha$ or $n L_n$.

\subsection{Relation to Minimax Efficiency for Logistic Model}\label{Sec:minimax}

Next we consider a connection to the (local) minimax efficiency analysis from the semiparametric literature, where we follow the discussion in \cite{BelloniChernozhukovKato2013a}. Our model is a special case of a partially linear logistic model, and \cite{kosorok:book} derives an efficient score function for the latter:
$$
S_i= \{y_i - \G(d_i\alpha_0+x_i'\beta_0)\} \{d_i -  m_0^*(x_i)\},
$$
where$$
m^*_0(x_i) =  \frac{\Ep[w_i d_i|x_i]}{\Ep[w_i |x_i]}.
$$
We note that $m_0^*(x_i)$ is $m_0(x_i)$ in (\ref{define m}) induced by the weight $f_i =\sqrt{w_i}$. Thus, the efficient score function can be reexpressed as:
$$
S_i = \{y_i - \G(d_i\alpha_0+x_i'\beta_0)\} v_i/\sqrt{w_i},
$$
where $v_i$ is defined via (\ref{define m}).  Using this score  leads to the same estimating equations as those constructed above using Neymanization (with an optimal instrument).  It follows
that the estimator based on the instrument $z_{0i}=v_i/\sqrt{w_i}$ is efficient in the local minimax sense (see Theorem
18.4 in \cite{kosorok:book}), and inference about $\alpha_0$
based on this estimator provides best minimax power against local alternatives (see Theorem  18.12 in \cite{kosorok:book}).

The preceding claim is formal provided that the least favorable submodels
are permitted as deviations within the overall set of potential models $\mathcal{Q}_n$ (defined similarly to Corollary \ref{cor:Uniformity1}).  Specifically, given a law $Q_n$, there should be a suitable neighborhood $\mathcal{Q}_n^{\delta}$ of $Q_n$ such that
$Q_n \in \mathcal{Q}_n^{\delta} \subset \mathcal{Q}_n$.   For that, we assume $m^*_0(x_i)=x_i'\theta_0$ and consider
a collection of models indexed by $t  = (t_1, t_2)$ satisfying:
\begin{eqnarray}
\Ep[y_i\mid d_i,x_i] & = & \G( d_i \{\alpha_0+ t_1\} + x_i'\{\beta_0 + t_2 \theta_0\}),    \  \  \  \|t\| \leq \delta, \\
\sqrt{w_i} d_i & =&   \sqrt{w_i} x_i'\theta_0 + v_i,  \ \ \Ep [\sqrt{w_i} v_i|x_i] =0,
\end{eqnarray}
where $\|\beta_0\|_0 + \| \theta_0\|_0 \leq s$ and Condition L as in Section \ref{Sec:Main} hold.  By construction, the model associated with $t=0$ generates precisely the model $Q_n$. As $t$ varies within a $\delta$-ball, we generate the set of models $\mathcal{Q}_n^{\delta}$ that contains the least favorable deviations, and which still belong to $\mathcal{Q}_n$. As shown in \cite{kosorok:book}, $S_i$ is the efficient score for such parametric submodel so we cannot have a better regular estimator than the estimator whose influence function is $\Sigma_n S_i$.  Because the set of models $\mathcal{Q}_n$ contains $ \mathcal{Q}_n^{\delta}$, all the formal conclusions about (local minimax) optimality of the proposed estimators hold from theorems cited above (using subsequence arguments to handle models changing with $n$). 

\appendix

\section{Proofs of Theorems}\label{Sec:ProofMain}

\begin{proof}[Proof of Theorem \ref{theorem:inferenceAlg1}]
We will verify Condition IR and the result follows by Theorem \ref{Thm:EstIVQRgeneric}. We will use the (optimal) instrument $z_{0i}=v_i/\sigma_i$ and recall that in the case of a logistic link $\sigma_i^2 = w_i$. The link function $G$ and its derivatives $|G'|$ and $|G''|$ are uniformly bounded by $1$. The condition of $\Ep[w_iz_{0i}x_i]=0$ holds by (\ref{Eq:Decomposition2}). Since $z_{0i}=d_i-x_i'\theta_0$ and $\Ep[w_iz_{0i}x_i]=0$ we have $\barEp[w_id_iz_{0i}]=\barEp[w_iz_{0i}^2]=\barEp[w_i(d_i-x_i'\theta_0)^2]\geq \tilde c\|(1,\theta_0')\|^2$. The fourth moment condition implies that $\barEp[z_{0i}^2d_i^2]\leq \{\barEp[z_{0i}^4]\barEp[d_i^4]\}^{1/2}\leq C$.
Similarly, $\barEp[\sigma_i^3z_{0i}^3] \leq \barEp[z_{0i}^3]\leq \{\barEp[z_{0i}^4]\}^{3/4}$. Thus Conditions IR(i) and IR(ii) hold.

For $\tilde x_i = (d_i,x_i')'$, $i=1,\ldots,n$, we denote the minimum and maximum $m$-sparse empirical eigenvalues as $$ \semin{m} := \min_{1\leq \|\delta\|_0 \leq m} \frac{\|\tilde x_i'\delta\|_{2,n}^2}{\|\delta\|^2} \ \ \ \mbox{and} \ \ \ \semax{m} := \max_{1\leq \|\delta\|_0 \leq m} \frac{\|\tilde x_i'\delta\|_{2,n}^2}{\|\delta\|^2}.$$
Under the condition $K_2^2 s \log(p\vee n) \log^3 n \leq \delta_n n$, by Lemma \ref{thm:RV34} we have that with probability $1-o(1)$ the sparse eigenvalues of order $k=s\ell_n$ are bounded away from zero and from above by a constant for some $\ell_n\to \infty$. Under the condition that $\min_{i\leq n} w_i \geq c>0$ stated in Condition L(ii), we have that $\kappa_\cc$ defined in (\ref{Def:RestrEig}) is bounded away from zero with probability $1-o(1)$ for $n$ sufficiently large, see \cite{BickelRitovTsybakov2009}.

Step 1 relies on Post-Lasso-Logistic. To apply Lemma \ref{Lemma:LassoLogisticRate} 
to obtain rates and sparsity bounds, we first verify the side condition $q_{\Delta_\cc} > 3(1+\frac{1}{c})\lambda\sqrt{s}/(n\kappa_\cc)$ where $\Delta_\cc=\{\delta : \|\delta_{T^c}\|_1\leq \cc \|\delta_T\|_1\}$, $\cc = (c+1)/(c-1)$ (see Appendix \ref{Sec:AnalysisAux}). Without loss of generality assume that $T$ contains the treatment $d$ in its support. Thus for $\tilde x_i = (d_i,x_i')'$, $\delta = (\delta_d,\delta_x')'$, we have
$$ \begin{array}{rl}
\inf_{\delta\in \Delta_\cc}\frac{\|\sqrt{w_i}\tilde x_i'\delta\|_{2,n}^{3}}{\En[w_i|\tilde x_i'\delta|^3]} & \geq \inf_{\delta\in \Delta_\cc}\frac{\|\sqrt{w_i}\tilde x_i'\delta\|_{2,n}^{2}\|\delta_T\|\kappa_\cc}{\max_{i\leq n}\|\tilde x_i\|_\infty\|\delta\|_1\|\sqrt{w_i}\tilde x_i'\delta\|_{2,n}^{2}}\\
& \geq \inf_{\delta\in \Delta_\cc}\frac{\|\delta_T\|\kappa_\cc}{\max_{i\leq n}\|\tilde x_i\|_\infty(1+\cc)\|\delta_T\|_1}\\
& \geq \frac{\kappa_\cc}{\max_{i\leq n}\|\tilde x_i\|_\infty(1+\cc)\sqrt{s}} \gtrsim_P \frac{1}{\sqrt{s}K_1}\\
\end{array}$$
by $K_1=\Ep[\max_{i\leq n}\|(d,x')\|_\infty]$ and Markov inequality. Moreover, since $\lambda \lesssim \sqrt{n\log (p\vee n)}$, and $K_1^2s^2\log(p\vee n) \leq \delta_n n$ we have
$$ \frac{\lambda\sqrt{s}}{n\kappa_\cc}\lesssim \frac{1}{\kappa_\cc}\sqrt{\frac{s\log(p\vee n)}{n}}\leq \frac{1}{\kappa_\cc}\frac{\delta_n^{1/2}}{K_1\sqrt{s}} \ll \frac{1}{\sqrt{s}K_1} \lesssim_P \inf_{\delta\in \Delta_\cc}\frac{\|\sqrt{w_i}\tilde x_i'\delta\|_{2,n}^{3}}{\En[w_i|\tilde x_i'\delta|^3]}$$
and the required condition $q_{\Delta_\cc} > 3(1+\frac{1}{c})\lambda\sqrt{s}/(n\kappa_\cc)$ holds with probability  $1-o(1)$ since $\delta_n \to 0$. By Lemma \ref{lemma:PenaltyHoeffding}, setting $\lambda = c\sqrt{2n\log(2(p+1)/\gamma)}$ we have $\lambda/n \geq c\|\nabla\Lambda(\alpha_0,\beta_0)\|_\infty$ with probability $1-\gamma$. Therefore, since $\min_{i\leq n} w_i \geq c$ and $k$-sparse eigenvalues for $k=Cs$ are bounded away from zero, Lemma \ref{Lemma:LassoLogisticRate} yields $\|(\hat\alpha,\hat\beta)-(\alpha_0,\beta_0)\|\lesssim \sqrt{s\log(p\vee n)/n}$, $\|(\hat\alpha,\hat\beta)-(\alpha_0,\beta_0)\|_1\lesssim s\sqrt{\log(p\vee n)/n}$ with probability $1-o(1)$. Furthermore, since $\max_{i\leq n}\|\tilde x_i\|_\infty \|(\hat\alpha,\hat\beta)-(\alpha_0,\beta_0)\|_1 \lesssim_P K_1 \|(\hat\alpha,\hat\beta)-(\alpha_0,\beta_0)\|_1 \lesssim \delta_n^{1/2}\leq 1$ for $n$ sufficiently large, we have $\|\hat\beta\|_0 \lesssim s$ and $\Lambda(\hat\alpha,\hat\beta)-\Lambda(\alpha_0,\beta_0) \lesssim s \log(p\vee n)/n$.

To apply Lemma \ref{Lemma:PostLassoLogisticRate} we need to verify the side condition \begin{equation}\label{VerifyConditionPostLog}q_{A_{\hat s + s}}/6 > \max\left\{ \sqrt{\hat s+s}\|\nabla \Lambda(\eta_0)\|_\infty /\sqrt{\semin{\hat s+s}},\ \ \sqrt{\max\{0,\Lambda(\widetilde \alpha,\widetilde \beta)-\Lambda(\alpha_0,\beta_0)\}}\right\}.\end{equation} Because of the sparsity obtained by Lemma \ref{Lemma:LassoLogisticRate}, it suffices to consider $\hat s \leq Cs$ for some constant $C$. Similarly to the previous argument, for  $\tilde x_i = (d_i,x_i')'$ and $\delta = (\delta_d,\delta_x')'$,
\begin{equation*}
\begin{array}{rl}
{\displaystyle \inf_{\|\delta\|_0\leq s+Cs}}\frac{\|\sqrt{w_i}\tilde x_i'\delta\|_{2,n}^{3}}{\En[w_i|\tilde x_i'\delta|^3]} & \geq {\displaystyle \inf_{\|\delta\|_0\leq s+Cs}}  \frac{\{\semin{s+Cs}\}^{1/2}\|\delta\|\min_{i\leq n} w_i^{1/2}\|\sqrt{w_i}\tilde x_i'\delta\|_{2,n}^{2}}{\max_{i\leq n}\|\tilde x_i\|_\infty\|\delta\|_1\|\sqrt{w_i}\tilde x_i'\delta\|_{2,n}^{2}}\\
& \geq \frac{\{\semin{s+Cs}\}^{1/2}\min_{i\leq n} w_i^{1/2}}{\max_{i\leq n}\|\tilde x_i\|_\infty\sqrt{s+Cs}} \gtrsim_P \frac{1}{K_1\sqrt{s}}\\
\end{array}
\end{equation*}
since the minimum sparse eigenvalue of order $s+Cs$ and $\min_{i\leq n} w_i^{1/2}$ are bounded away from zero. From Lemma \ref{lemma:PenaltyHoeffding} we have $\|\nabla\Lambda(\alpha_0,\beta_0)\|_\infty \lesssim \sqrt{\log(p\vee n)/n}$, and from the definition of the post-selection estimator $\Lambda(\widetilde\alpha,\widetilde\beta)\leq \Lambda(\hat\alpha,\hat\beta)$, so that $$\Lambda(\widetilde\alpha,\widetilde\beta)-\Lambda(\alpha_0,\beta_0) \leq \Lambda(\hat\alpha,\hat\beta)-\Lambda(\alpha_0,\beta_0) \lesssim s \log(p\vee n)/n$$ by Lemma \ref{Lemma:LassoLogisticRate}. Therefore, the right hand side of (\ref{VerifyConditionPostLog}) is bounded above  by $C'\sqrt{s\log(p\vee n)/n}$. Then again the condition $K_{1}^2s^2\log(p\vee n) \leq \delta_n n$ suffices for (\ref{VerifyConditionPostLog}) to hold with probability $1-o(1)$ for $n$ sufficiently large. Thus Lemma \ref{Lemma:PostLassoLogisticRate} yields $\|(\widetilde \alpha, \widetilde \beta) - (\alpha_0,\beta_0)\|\lesssim \sqrt{s\log(p\vee n)/n}$ with probability $1-o(1)$ (and $\|\widetilde\beta\|_0\leq \|\hat\beta\|_0 \lesssim s$ with probability $1-o(1)$).

Step 2 relies on Post-Lasso with estimated weights. Parts (i) and (ii) of  Condition WL are assumed by Conditions L and a suitable choice of the confidence level $\gamma=n^{-1/4}$ to satisfy the growth condition.  Condition WL(iii) follows from Lemma \ref{cor:LoadingConvergence} with $X_{ij}=\sqrt{w_i}x_{ij}v_i$ under the condition that $K_4^4\log p \leq \delta_n n$ (since $|v_i| \leq |z_{0i}|$).

To show Condition WL(iv) note that $\hat f_i = \hat w_i/\hat \sigma_i = \sqrt{\hat w_i}$, $ \hat w_i = G'(d_i\widetilde\alpha+x_i'\widetilde\beta) \leq 1$ and $w_i = G'(d_i\alpha_0+x_i'\beta_0)\leq 1$.
The first part follows from $\En[\hat f_id_i^2] \leq \En[d_i^2] \leq \barEp[d_i^2] + (\En-\barEp)[d_i^2] \lesssim C$ with probability $1-o(1)$ by Chebyshev's inequality and $\barEp[d_i^4]\leq C$.
Since $G'$ is 1-Lipschitz and $0\leq \hat w_i \leq 1$, $|\sqrt{a}-\sqrt{b}|\leq \sqrt{|a-b|}$, the second part of Condition WL(iv) follows from
$$\begin{array}{rl}
 \max_{j\leq p}\En[\{\sqrt{\hat w_i}-\sqrt{w_i}\}^2x_{ij}^2v_i^2] & \leq \max_{i\leq n}|\sqrt{\hat w_i}-\sqrt{w_i}|^2 \max_{j\leq p}\En[x_{ij}^2v_i^2] \\
 & \leq \max_{i\leq n}\|\tilde x_i\|_\infty \|(\widetilde\alpha,\widetilde\beta)-(\widetilde\alpha,\beta_0)\|_1 \max_{j\leq p}\En[x_{ij}^2v_i^2]\\
 &\lesssim_P K_1 \sqrt{s^2\log(p\vee n)/n} \max_{j\leq p}\En[x_{ij}^2v_i^2] \lesssim_P \delta_n^{1/2}
 \end{array}$$
since the rate of $(\widetilde\beta,\widetilde\alpha)$ established before, $K_1^2s^2\log(p\vee n) \leq \delta_n n$, and $\max_{j\leq p}\En[x_{ij}^2v_i^2] \leq \max_{j\leq p}|(\En-\barEp)[x_{ij}^2v_i^2]| + \max_{j\leq p}\barEp[x_{ij}^2v_i^2] \lesssim_P 1$. (Indeed by Lemma \ref{cor:LoadingConvergence} with $X_{ij}=x_{ij}v_i$ under the condition that $K_4^4\log p \leq \delta_n n$, $\barEp[x_{ij}^2v_i^2] \leq \{\barEp[x_{ij}^4]\barEp[(d_i-x_i'\theta_0)^4]\}^{1/2} \leq C$ by the fourth moment condition and $\|\theta_0\|\leq C$.)

To show that last requirement of Condition WL(iv),  $\hat f_i = \hat w_i/\hat \sigma_i = \sqrt{\hat w_i}$ yields $\hat c_f^2=\En[(\hat w_i - w_i)^2v_i^2/\sqrt{w_i}]$. By $G'$ begin 1-Lipschitz, $|\hat w_i-w_i|\leq |x_i'(\widetilde \beta-\beta_0)|+|d_i(\widetilde \alpha - \alpha_0)|$, and $\min_{i\leq n} w_i \geq c > 0$  with probability $1-\Delta_n$,  with the same probability we have
\begin{equation}\label{Eq:cw}\begin{array}{rl}
 \hat c_f^2 & = \En[ (\hat w_i - w_i)^2v_i^2/\sqrt{w_i} ] \leq \frac{2}{\sqrt{c}}\En[\{v_ix_i'(\widetilde \beta-\beta_0)\}^2]+ \frac{2}{\sqrt{c}}|\widetilde\alpha-\alpha_0|^2\En[d_i^2v_i^2] \\
 & \leq  \frac{2}{\sqrt{c}}(\En-\barEp)[\{v_ix_i'(\widetilde \beta-\beta_0)\}^2]+ \frac{2}{\sqrt{c}}\barEp[\{v_ix_i'(\widetilde \beta-\beta_0)\}^2]+ \frac{2}{\sqrt{c}}|\widetilde\alpha-\alpha_0|^2\{\En[d_i^4]\}^{1/2}\{\En[v_i^4]\}^{1/2} \\ \end{array}\end{equation}
Recall that $\|\widetilde\beta\|_0\lesssim s$, $\|\widetilde\beta-\beta_0\|\lesssim \sqrt{s\log p/n}$,  $|\widetilde\alpha-\alpha_0|\lesssim \sqrt{s\log p/n}$ with probability $1-o(1)$. We will apply Lemma \ref{thm:RV34} with $X_i=v_ix_i$. In that case, we have $K=\{\Ep[\max_{i\leq n}\|X_i\|_\infty^2]\}^{1/2} \leq \{\Ep[\max_{i\leq n} \|(v_i,x_i')\|_\infty^4]\}^{1/2} \leq K_4^2$, and $\barEp[(\delta'X_i)^2] = \barEp[v_i^2(x_i'\delta)^2]\leq \{\barEp[v_i^4]\barEp[(x_i'\delta)^4]\}^{1/2} \leq C \|\delta\|^2$ by the fourth moment condition and $|v_i|\leq |d-x_i'\theta_0|$. Therefore, since $\|\widetilde\beta-\beta_0\|_0 \leq 2Cs$ we have
$$\begin{array}{rl}
(\En-\barEp)[\{v_ix_i'(\widetilde \beta-\beta_0)\}^2] & \leq \|\widetilde \beta-\beta_0\|^2 \sup_{\|\delta\|_0\leq 2Cs, \|\delta\| =1} \left| \En\[ (\delta'X_i)^2 - \barEp[(\delta'X_i)^2] \]\right|\\
& \lesssim_P  \|\widetilde \beta-\beta_0\|^2 \left\{ \frac{K_4^4 s \log^{3}n \log (p\vee n)}{n} + \sqrt{\frac{K_4^4 s \log^{3}n \log (p\vee n)}{n}}\right\} \\
& \lesssim \frac{s \log (p\vee n)}{n} \delta_n^{1/2}\end{array}$$
under the assumed condition $K_4^4 s \log^{3}n \log (p\vee n) \leq \delta_n n$ and $\|\widetilde \beta-\beta_0\|^2 \lesssim s\log p / n$
with probability $1-o(1)$. Similarly,  with probability going to one, the last term in (\ref{Eq:cw}) satisfies with probability going to 1
$$ |\widetilde\alpha-\alpha_0|^2\{ |\En[d_i^2v_i^2]-\barEp[d_i^2v_i^2]|+\barEp[d_i^2v_i^2]\} \lesssim s \log (p\vee n) / n.$$
Therefore, $\hat c_f^2 \lesssim s \log (p\vee n) / n$ with probability $1-o(1)$ (which implies $n^2\hat c_f^2/\lambda^2 \lesssim s$). In turn, because the restricted eigenvalue and $\min_{i\leq n} \sqrt{\hat w_i/w_i}$ are bounded away from zero, the required condition is satisfied as $\max_{i\leq n} \|x_i\|_\infty^2 \{\frac{\lambda\sqrt{s}}{n}+\hat c_f\} \lesssim_P K_1^2 \sqrt{\frac{s\log(p\vee n)}{n}} \lesssim \delta_n^{1/2}$ under the assumption $K_1^4s \log(p\vee n) \leq \delta_n n$. Therefore, by Theorem \ref{Thm:RateEstimatedLasso},  we have $\|\widetilde\theta-\theta_0\| \lesssim \sqrt{s\log(p\vee n)/n}$ and $\|\widetilde \theta\|_0 \lesssim Cs$ with probability $1-o(1)$.


After establishing rates of convergence for $(\widetilde\alpha,\widetilde\beta)$ and $\widetilde\theta$ we proceed to verify Condition IR(iii). Note that $\check\alpha \in \mathcal{A} \subset \{ \alpha : |\alpha-\widetilde \alpha|\leq C \log^{-1} n\} \subset  \{ \alpha : |\alpha-\alpha_0|\leq C \log^{-1} n\}$ so that $|\check \alpha - \alpha_0|\leq C \log^{-1} n$. The choice of instrument is $z_{0i}=v_i/\sqrt{w_i}=d_i-x_i'\theta_0$ and $\hat z_i = d_i - x_i'\widetilde \theta$ so that
\begin{equation}\label{Eq:vi}\hat z_i -  z_{0i} = x_i'\{ \theta_0 - \widetilde\theta\}\end{equation}
The rates established above for $(\widetilde \alpha, \widetilde \beta, \widetilde \theta)$ imply (\ref{Eq:HLfirstestimated}) in Condition IR(ii) since, under $s^2\log^2(p\vee n)\leq \delta_n n$, we have $$\begin{array}{rl}
\|\widetilde \beta-\beta_0\|_{2,n} & \lesssim  \sqrt{\frac{s\log (p\vee n)}{n}} \lesssim \delta_n n^{-1/4}\\
\barEp[(\tilde z_i-z_{0i})^2]|_{\tilde z = \hat z} & = \barEp[ \{x_i'(\theta_0-\widetilde\theta)\}^2] \lesssim  \|\theta_0-\widetilde\theta\|^2\lesssim s\log(p\vee n)/n \lesssim \delta_n\\
\|\widetilde \beta-\beta_0\| \ \barEp[(\tilde z_i-z_{0i})^2]|_{\tilde z = \hat z}& \lesssim \sqrt{\frac{s\log (p\vee n)}{n}}\sqrt{\frac{s\log (p\vee n)}{n}}\leq n^{-1/2}\frac{s\log (p\vee n)}{n^{1/2}}\lesssim \delta_n n^{-1/2}\\
\end{array}$$
with probability $1-o(1)$.

Next we verify Condition IR(iv). By definition $|\hat w_i|\leq 1$ and since $G'$ is 1-Lipschitz, we have $\|\hat w_i - w_i\|_{2,n} \leq  \|d_i\|_{2,n}|\widetilde\alpha-\alpha_0|+\|x_i'(\widetilde \theta - \theta_0)\|_{2,n} \lesssim \sqrt{s\log (p\vee n)/n} \lesssim \delta_n^{1/2}$ with probability $1-o(1)$. Moreover, with probability $1-o(1)$
$$ \|d_i(\hat z_i-z_{0i})\|_{2,n}|\check\alpha-\alpha_0| \leq \max_{i\leq n}|d_i| \| x_i'(\theta_0-\widetilde\theta)\|_{2,n}|\check\alpha-\alpha_0|\lesssim K_1 \log n \sqrt{\frac{s \log(p\vee n)}{n}} \frac{1}{\log n} \lesssim \delta_n^{1/2}$$
since $\Pr(\max_{i\leq n}|d_i| > K_1\log n)\leq 1/\log n$, and the condition $K_1^2 s \log(p\vee n) \leq \delta_n n$ holds. Similarly, $\|z_{0i}x_i'(\widetilde\beta-\beta_0)\|_{2,n} \leq \max_{i\leq n}|z_{0i}|\|x_i'(\widetilde\beta-\beta_0)\|_{2,n} \lesssim K_1 \delta_{n}^{-1/4} \sqrt{s \log (p\vee n)/n}\lesssim \delta_n^{1/4}$ with probability $1-o(1)$.

Next we verify Condition IR(iii) part (\ref{Eq:HLestimated}). Let $\hvpi(\alpha) = y_i - G(d_i \alpha + x_i'\widetilde\beta)$, $\vpi(\alpha) = y_i - G(d_i \alpha + x_i'\beta_0)$. Note that
{\small \begin{equation}\label{Alg1Eq:I(ii)first} \sup_{\alpha \in \mathcal{A}}\left|(\En-\barEp)\left[\hvpi(\alpha)\hat z_i - \vpi(\alpha) z_{0i}\right]\right| \leq \sup_{\alpha \in \mathcal{A}}\left|(\En-\barEp)\left[\{\hvpi(\alpha)-\vpi(\alpha)\}(\hat z_i -  z_{0i})\right]\right|+\end{equation}
\begin{equation}\label{Alg1Eq:I(ii)seconda} \ \ \ \ \ \ \ \ \ \ \ \ \ \ \ \ \ \ \ \ \ \ \ \ \ \ \ \ \ \ \ \  + \sup_{\alpha \in \mathcal{A}}\left|(\En-\barEp)\left[\vpi(\alpha)(\hat z_i -  z_{0i})\right]\right|+\end{equation}
\begin{equation}\label{Alg1Eq:I(ii)second} \ \ \ \ \ \ \ \ \ \ \ \ \ \ \ \ \ \ \ \ \ \ \ \ \ \ \
\ \ \ \ \ \ \ \  + \sup_{\alpha \in \mathcal{A}}\left|(\En-\barEp)\left[\{\hvpi(\alpha) - \vpi(\alpha) \}z_{0i}\right]\right|. \end{equation}}
To bound (\ref{Alg1Eq:I(ii)first}), since $|\hvpi(\alpha)-\vpi(\alpha)|\leq |x_i'(\widetilde \beta - \beta_0)|$, $|\hat z_i -  z_{0i}|=|x_i'(\widetilde \theta-\theta_0)|$,  $|\hat w_i|\leq 1$, $\|\widetilde\beta\|_0+\|\widetilde\theta\|_0\lesssim s$, $\semax{2Cs}$ is uniformly bounded, we use Cauchy-Schwartz to obtain with probability $1-o(1)$ that
{\small $$\begin{array}{rl}
 \hspace{-1cm}{\rm (\ref{Alg1Eq:I(ii)first})} & \lesssim  \|\widetilde \beta-\beta_0\| \ \| \widetilde \theta - \theta_0\|  \lesssim s \log(p\vee n)/n \lesssim \delta_n n^{-1/2}\end{array}$$}
under the condition $s^2\log^2(p\vee n) \leq \delta_n n$.

To bound (\ref{Alg1Eq:I(ii)seconda}) we use that with probability $1-o(1)$, $\|\widetilde \theta - \theta_0\|_1 \leq Cs\sqrt{\log(p\vee n)/n}$ so that with the same probability
{\small $$\begin{array}{rl}
 {\rm (\ref{Alg1Eq:I(ii)seconda})} & \leq \sup_{\alpha \in \mathcal{A}}\left|(\En-\barEp)\left[\{\vpi(\alpha)-\vpi(\alpha_0)\}x_i'(\theta_0-\widetilde \theta)\right]\right| +  \left|(\En-\barEp)\left[\vpi(\alpha_0)x_i'(\theta_0-\widetilde \theta)\right]\right|\\
 & \lesssim_P {\displaystyle \sup_{\alpha \in \mathcal{A}, \|\delta\|_1 = C s \sqrt{\frac{\log(p\vee n)}{n}}}}\left|(\En-\barEp)\left[\{\vpi(\alpha)-\vpi(\alpha_0)\} x_i'\delta\right]\right| + {\displaystyle \sup_{ \|\delta\|_1\leq C s \sqrt{\frac{\log(p\vee n)}{n}}}} \left|(\En-\barEp)\left[\vpi(\alpha_0) x_i'\delta\right]\right|.\\\end{array}$$}
We will use Lemma \ref{Lemma:ProcessL1Control} for each term. To handle the first term let $W_{ij} = d_ix_{ij}$. Define $r_1=Cs\sqrt{s\log(p\vee n)/n}$, $\mathcal{T}_+ = \{ (\alpha - \alpha_0) \delta \in \RR^p : \alpha \geq \alpha_0, \alpha \in \mathcal{A}, \|\delta\|_1 = r_1 \}$ and $\mathcal{T}_- = \{ (\alpha - \alpha_0) \delta \in \RR^p : \alpha \leq \alpha_0, \alpha \in \mathcal{A}, \|\delta\|_1 = r_1 \}$. For $t\in \mathcal{T}_+$ we define $h_i^+(t) =\{\vpi(\alpha_0+\|t\|_1/r_1)-\vpi(\alpha_0)\}x_i'tr_1/\|t\|_1$. By construction we have $|h_i(t)^+|\leq |t'W_i|=|(\alpha-\alpha_0)d_ix_i'\delta|$.  Similarly, for $t\in \mathcal{T}_-$ we define $h_i^-(t) =\{\vpi(\alpha_0-\|t\|_1/r_1)-\vpi(\alpha_0)\}x_i'tr_1/\|t\|_1$. Note that $\|\mathcal{T}\|_1 \lesssim  s\sqrt{\log(p\vee n)/n}$. So we have
$$ \begin{array}{rl} {\displaystyle \sup_{\alpha \in \mathcal{A}, \|\delta\|_1 = C s \sqrt{\frac{\log(p\vee n)}{n}}}}\left|(\En-\barEp)\left[\{\vpi(\alpha)-\vpi(\alpha_0)\} x_i'\delta\right]\right|\\
  \leq {\displaystyle \sup_{t\in \mathcal{T}_+}}\left|(\En-\barEp)\left[h_i(t)^+\right]\right| + {\displaystyle \sup_{t\in \mathcal{T}_-}}\left|(\En-\barEp)\left[h_i^-(t)\right]\right| \\
 \lesssim  s\sqrt{\log(p\vee n)/n} \sqrt{\frac{\log(p\vee n)}{n}} \lesssim \delta_n^{1/2} n^{-1/2} \end{array}$$
under $s^2 \log^2(p\vee n) \leq \delta_n n$ and $K_4^4\log p \leq \delta_n n$ by Lemma \ref{Lemma:ProcessL1Control} with $K^2 = C\log (p\vee n)$ and $M \lesssim C$ as $\max_{j\leq p} \En[d_i^2x_{ij}^2]$ is bounded with probability $1-o(1)$ under  $K_4^4\log p \leq \delta_n n$ by Lemma \ref{cor:LoadingConvergence}. To bound the second term we again use Lemma \ref{Lemma:ProcessL1Control} with $t=\delta$ and $h_i(t)= \vpi(\alpha_0) x_i'\delta$ which satisfies $|h_i(t)| \leq |t'W_i| = |t'x_i|$. Thus we have with probability $1-o(1)$
{\small $$\begin{array}{rl}
{\rm (\ref{Alg1Eq:I(ii)seconda})} & \lesssim  s \sqrt{\frac{\log(p\vee n)}{n}} \sqrt{\frac{\log (p\vee n)}{n}} +   s \sqrt{\frac{\log(p\vee n)}{n}} \sqrt{\frac{\log (p\vee n)}{n}}\lesssim \delta_n^{1/2} n^{-1/2}\\\end{array}$$}

Next we proceed to bound (\ref{Alg1Eq:I(ii)second}). We will consider the class of functions which pertains to $\{\hvpi(\alpha) - \vpi(\alpha)\}z_{0i}$, namely for some $C$ suitably large
$$ \mathcal{F} = \{ \G(d_i\alpha + x_i'\beta)z_{0i} - \G(d_i\alpha + x_i'\beta_0)z_{0i} : \|\beta\|_0\leq Cs, \ \|\beta - \beta_0\| \leq C\sqrt{s\log p / n}\} $$
Let $h_i(t,\alpha) = \G( d_i\alpha + x_i'(\beta+t)) - \G(d_i\alpha + x_i'\beta_0)$ so that $|h_i(t,\alpha)|\leq |t'x_iz_{0i}|$. Therefore $\|\mathcal{T}\|_1 \lesssim s\sqrt{\log (p\vee n) / n }$ and note that
$ \max_{j\leq p}\En[x_{ij}^2z_{0i}^2] \lesssim C$ with probability $1-o(1)$ under $K_4^4\log p \leq \delta_n n$ and $ \max_{j\leq p}\barEp[x_{ij}^2z_{0i}^2]\leq \max_{j\leq p}\{\barEp[x_{ij}^4]\barEp[z_{0i}^4]\}^{1/2}\leq C$ by the fourth moment condition. By Lemma \ref{Lemma:ProcessL1Control}, with probability $1-o(1)$  we have
$$ (\ref{Alg1Eq:I(ii)second}) \lesssim \sqrt{\frac{\log (p\vee n)}{n}} \|\mathcal{T}\|_1 \lesssim \frac{s \log (p\vee n)}{n}\lesssim \delta_n^{1/2} n^{-1/2} $$
provided $s^2\log^2(p\vee n) \leq \delta_n n$.

Next we verify the requirement (\ref{Eq:HLhatalpha}) in Condition IR(iii). Note that $\mathcal{A}=\{ \alpha : |\alpha-\widetilde \alpha| \leq C \log^{-1} n \} \supseteq \{ \alpha : |\alpha-\alpha_0| \leq (C/2)\log^{-1}n\}$ for $n$ large enough since $|\widetilde \alpha-\alpha_0|\lesssim \sqrt{s\log (p\vee n)/n}$ with probability $1-o(1)$. Also, $\check\alpha \in \mathcal{A}$ implies that $|\check\alpha-\alpha_0|\lesssim \log^{-1}n$ with probability $1-o(1)$ for $n$ sufficiently large. We will show that $\En[\hvpi(\alpha)\hat z_i]$ changes sign over $\alpha \in \mathcal{A}$ with probability $1-o(1)$ which by continuity of $\hvpi(\cdot)$ implies that $\En[\hvpi(\check \alpha)\hat z_i]=0$ with probability $1-o(1)$. Note that for any $\alpha\in\mathcal{A}$
$$\begin{array}{rl}
 \displaystyle  \En[\hvpi(\alpha)\hat z_i]  & = \overbrace{(\En-\barEp)[\hvpi(\alpha)\hat z_i - \vpi(\alpha) z_{0i}]}^{(1)} + \overbrace{\barEp[\hvpi(\alpha)\hat z_i] - \barEp[\vpi(\alpha) z_{0i}]}^{(2)} + \\
 & + \underbrace{(\En-\barEp)[\vpi(\alpha) z_{0i}]}_{(3)} + \barEp[\vpi(\alpha) z_{0i}]. \\
 \end{array}$$

By Condition IR(iii) part (\ref{Eq:HLestimated}), we have $|(1)| \lesssim \delta_n^{1/2} n^{-1/2}$ with probability $1-o(1)$. With probability $1-o(1)$, by the expansion (\ref{EqBias}), $\barEp[\vpi(\alpha) z_{0i}]= -\barEp[w_id_iz_{0i}](\alpha-\alpha_0)+O(\delta_n|\alpha-\alpha_0|+\delta_nn^{-1/2})$ and $\barEp[\hvpi(\alpha) z_{0i}]= -\barEp[w_id_iz_{0i}](\alpha-\alpha_0)+O(\delta_n|\alpha-\alpha_0|+\delta_nn^{-1/2})$ so that we have $ |(2)| \lesssim \delta_n n^{-1/2} + \delta_n|\alpha-\alpha_0|$ with the same probability. Therefore, with probability $1-o(1)$ we have
 $$|(3)| \leq \sup_{\alpha \in \mathcal{A}}|(\En-\barEp)[\{\vpi(\alpha)-\vpi(\alpha_0)\} z_{0i}]| + |(\En-\barEp)[\vpi(\alpha_0) z_{0i}]| \lesssim \delta_n n^{-1/2}+ n^{-1/2}\log n$$
where the first term is bounded using Lemma \ref{Lemma:ProcessL1Control} and $\barEp[d_i^2z_{0i}^2] = O(1)$, and the second term by Chebyshev's inequality.

Therefore, since $\barEp[\vpi(\alpha) z_{0i}] = (\alpha-\alpha_0)\barEp[v_i^2]+ O(|\alpha-\alpha_0|^2)$, we have with probability $1-o(1)$ that \begin{equation}\label{Eq:Minimizer}\begin{array}{rl}
 \En[\vpi(\alpha) z_{0i}] & = O(n^{-1/2}\log n+\delta_n|\alpha-\alpha_0|) + \barEp[\vpi(\alpha) z_{0i}] \\
 & = O(n^{-1/2}\log n) + (\alpha-\alpha_0)\{\barEp[v_i^2]+ O(\delta_n)\} + O(|\alpha-\alpha_0|^2).\end{array}\end{equation}
Since $\barEp[v_i^2] \geq  c$ and $\delta_n\to 0$, when we evaluate (\ref{Eq:Minimizer}) on the extreme points $\alpha^k, k=1,2,$ of $\mathcal{A}$, we obtain a positive value for one extreme and a negative value for the other extreme for $n$ large enough since $|\alpha^k - \alpha_0|\geq (C/2) \log^{-1} n$, $k=1,2$.
\end{proof}

\begin{proof}[Proof of Theorem \ref{theorem:inferenceAlg2}]
Let $\hat T^* = \supp(\hat\theta)\cup\supp(\hat \beta)$. By the first order condition of the minimization problem in Step 3, and $\hat f_i/\hat\sigma_i=1$ in the logistic case, we have
\begin{equation}\label{OPTdouble}
 \En[ \{y_i-\G(d_i\check\alpha  + x_i'\check\beta)\} (d_i, \ x_{i\hat T^*}')']=0.\end{equation}
Next we will construct a suitable instrument to apply Theorem \ref{Thm:EstIVQRgeneric}. Define $$\hat\theta^*\in \arg\min_\theta \| x_{i}'(\theta-\theta_0)\|_{2,n} \ \ : \ \supp(\theta) \subseteq \hat T^*.$$
We use the optimal instrument $z_{0i} = v_i/\sqrt{w_i}=d_i - x_i'\theta_0$ and the estimated instrument $\hat z_{i} = d_i - x_i'\hat\theta^*$. Note that by (\ref{OPTdouble}), taking the linear combination $(1;-\hat\theta^*)$ of the optimality condition we have $$ \En[ \{y_i-\G(d_i\check\alpha + x_i'\check\beta)\} \hat z_i] = 0.$$ Therefore $\check \alpha$ minimizes the criterion $$L_n(\alpha) = \frac{|\En[\{y_i-\G(d_i\alpha+x_i'\check\beta)\}\hat z_i]|^2}{\En[\{y_i-\G(d_i\alpha+x_i'\check\beta)\}^2\hat z_i^2]},$$ induced by  $\{(x_i'\check\beta,\hat z_i): i=1,\ldots,n\}$, over $\alpha \in \RR$.

Regarding Steps 1 and 2, rates of convergence for $\ell_1$-penalized logistic regression, post selection logistic regression, and Lasso with estimated weights and the associated sparsity bounds are established as in the proof of Theorem \ref{theorem:inferenceAlg1}. Thus we have with probability $1-o(1)$ that $\|\hat \theta\|_0\lesssim s$, $\|\hat\beta\|_0\lesssim s$, $\Lambda(\hat\alpha,\hat\beta)-\Lambda(\alpha_0,\beta_0) \lesssim s\log p/n$ and $\|\hat\theta-\theta_0\| \lesssim \sqrt{s\log(p\vee n)/n}$.

Next we analyze Step 3. The sparsity results above implies that $\hat T^* = \supp(\hat\theta)\cup\supp(\hat \beta)$ satisfies $|\hat T^*|\lesssim s$ with probability $1-o(1)$. Moreover, since $\supp(\hat\beta)\subset \hat T^*$ we have with probability $1-o(1)$ that $$\Lambda(\check\alpha,\check\beta)-\Lambda(\alpha_0,\beta_0)\leq \Lambda(\hat\alpha,\hat\beta)-\Lambda(\alpha_0,\beta_0) \lesssim s\log p/n.$$ Thus, the requirements for Lemma \ref{Lemma:PostLassoLogisticRate} hold as before. Since $k$-sparse eigenvalues are bounded away from zero for $k=s\ell_n$, for some $\ell_n\to\infty$,   Lemma \ref{Lemma:PostLassoLogisticRate} establishes a rate of convergence for post-model selection Logistic regression estimator $\|\check\beta-\beta_0\|\lesssim \sqrt{ s \log p / n }$, $|\check\alpha-\alpha_0|\lesssim \sqrt{ s \log p / n }$, and $\|\check\beta-\beta_0\|_1 \lesssim_P \sqrt{s}\|\check\beta-\beta_0\| \lesssim \sqrt{s}\|\check\beta-\beta_0\| \lesssim s\sqrt{\log p/n}$. Moreover, since $\supp(\hat\theta) \subset \hat T^*$ we have $\| x_{i}'(\hat\theta^*-\theta_0)\|_{2,n} \leq \| x_{i}'(\hat\theta-\theta_0)\|_{2,n}\lesssim \sqrt{s\log(p\vee n)/n}$ and $\|\hat\theta^*-\theta_0\|_1 \lesssim_P \sqrt{s}\|\hat\theta^*-\theta_0\| \lesssim \sqrt{s}\|x_i'(\hat\theta^*-\theta_0)\|_{2,n}/\{\semin{C's}\}^{1/2} \lesssim s\sqrt{\log(p\vee n)/n}$ with probability $1-o(1)$.

\noindent The remaining assumptions in Condition IR can be verified as in the proof of Theorem \ref{theorem:inferenceAlg1}.

Next we show the validity of the calculation of $\widehat \Sigma_{2n}^2=\{ \En[\check w_i (d_i,x_{i\hat T^*}')'(d_i,x_{i\hat T^*}')]\}^{-1}_{11}$. Since $\min_{i\leq n} w_i > c$ with probability $1-\Delta_n$ and $k$-sparse eigenvalues of size $k=s\ell_n$ are bounded away from zero and from above with probability $1-\Delta_n$ by Condition L, and $\max_{i\leq n} |\check w_i - w_i| = o_P(1)$ by the rates above, we have $$ \{ \En[\check w_i (d_i,x_{i\hat T^*}')'(d_i,x_{i\hat T^*}')]\}^{-1}_{11} = \{ \En[ w_i (d_i,x_{i\hat T^*}')'(d_i,x_{i\hat T^*}')]\}^{-1}_{11} + o_P(1).$$
Next note that
 $$\widetilde \Sigma_{2n}=\{ \En[ w_i (d_i,x_{i\hat T^*}')'(d_i,x_{i\hat T^*}')]\}^{-1}_{11} = \{ \En[ w_i d_i^2] - \En[ w_i d_ix_{i\hat T^*}']\{ \En[w_i x_{i\hat T^*}x_{i\hat T^*}']\}^{-1}\En[ w_i x_{i\hat T^*} d_i] \}^{-1}.$$
Note that $\check\theta[\widehat T^*] = \{ \En[ w_i x_{i\hat T^*}x_{i\hat T^*}']\}^{-1}\En[ w_i x_{i\hat T^*} d_i]$ is the least squares estimator of regressing $\sqrt{w_i}d_i$ on $\sqrt{ w_i}x_{i\hat T^*}$. We let $\check \theta$ denote the corresponding $p$-dimensional (sparse) vector. Therefore, using that $\sqrt{w_i}x_i'\theta_0 = \sqrt{w_i}d_i-v_i$ we have
$$\begin{array}{rl}
\widetilde \Sigma_{2n}^{-2} & =  \En[ w_i d_i^2] - \En[ w_i d_ix_i'\check\theta] \\
& =  \En[ w_i d_i^2] - \En[\sqrt{ w_i} d_i \sqrt{ w_i}x_i'\theta_0] - \En[\sqrt{ w_i} d_i \sqrt{w_i}x_i'(\check\theta-\theta_0)] \\
& =\En[ \sqrt{w_i}d_i v_i] - \En[\sqrt{ w_i} d_i \sqrt{w_i}x_i'(\check\theta-\theta_0)]\\
& =\En[v_i^2] + \En[ \sqrt{w_i} v_ix_i'\theta_0] - \En[\sqrt{ w_i} d_i \sqrt{w_i}x_i'(\check\theta-\theta_0)]\\
\end{array}$$
We have that $|\En[ \sqrt{w_i} v_ix_i'\theta_0]|=o_P(\delta_n)$ since $\barEp[ \sqrt{w_i} v_ix_i'\theta_0]=0$ and $\barEp[ (\sqrt{w_i} v_ix_i'\theta_0)^2] \leq \barEp[w_iv_i^2d_i^2] \leq \{\barEp[v_i^4]\barEp[d_i^4]\}^{1/2}\leq C$. Moreover, $|\En[\sqrt{ w_i} d_i \sqrt{w_i}x_i'(\check\theta-\theta_0)]| \leq \|d_i\|_{2,n}\|\sqrt{w_i}x_i'(\check\theta-\theta_0)\|_{2,n} = o_P(\delta_n)$ since $|\widehat T^*|\lesssim_P s$ and $\supp(\check\theta)\subset \widehat T^*$. The result follows.

\end{proof}

\begin{proof}[Proof of Theorem \ref{Thm:EstIVQRgeneric}]
Let $(d,x) \in \mathcal{D}\times\mathcal{X}$. In this section for $\tilde h = (\tilde \beta, \tilde z)$, where $\tilde z$ is a function on $(d,x)\mapsto \tilde z(d,x)$ we write
$$\psi_{\tilde \alpha,\tilde h}(y_i,d_i,x_i) = \psi_{\tilde \alpha,\tilde \beta,\tilde z}(y_i,d_i,x_i) = \{y_i - \G(x_i'\tilde\beta+d_i\alpha)\}\tilde z(d_i,x_i).$$ 
Because of (\ref{Eq:MainLogisticModel}), $h_0 = (\beta_0, z_0)$ we have
$$ \Ep[ \psi_{\alpha_0,h_0}(y_i,d_i,x_i)]=0$$

 For a fixed $\tilde \alpha\in\RR$, $\tilde \beta \in \RR^p$, $\tilde z:\mathcal{D}\times\mathcal{X} \to \RR$, and $\tilde h = (\tilde\beta,\tilde z)$, we define
$$\Gamma(\tilde\alpha,\tilde h) := \barEp[ \psi_{\tilde \alpha,\tilde h}(y_i,d_i,x_i)] $$
For notational convenience we let $\tilde z_i=\tilde z(d_i,x_i)$, $h_0 = (\beta_0,z_0)$ and $\hat h = (\hat \beta,\hat z)$. The partial derivative of $\Gamma$ with respect to $\alpha$ at $(\tilde\alpha,\tilde h)$ is denoted by $\Gamma_1(\tilde \alpha,\tilde h)$ and the directional derivative with respect to $[\hat h-h_0]$ at $(\tilde\alpha,\tilde h)$ is denoted as
$$ \Gamma_2(\tilde\alpha,\tilde h)[\hat h- h_0] = \lim_{t\to 0} \frac{\Gamma(\tilde\alpha,\tilde h+t[\hat h- h_0])-\Gamma(\tilde\alpha,\tilde h)}{t}.$$
We assume that the estimated vector $\hat \beta$ and the estimated function $\hat z$ satisfy the following condition.

Steps 1-4 we use Condition IR(i-iii). In Steps 5 and 6 we will also use Condition IR(iv). For notational convenience we let $m_i=(y_i,d_i,x_i)$, and we use $\sup_{t\in \RR}|G(t)|\leq \bar L$, $\sup_{t\in \RR}|G'(t)|\leq \bar L'$ and $\sup_{t\in \RR}|G''(t)|\leq \bar L''$ where $\bar L\vee \bar L'\vee \bar L'' \leq C$.

Step 1. (Main Step for Normality) We have
$$
\begin{array}{rl}
\En[\psi_{\check\alpha,\hat h}(m_i)] & = \En[\psi_{\alpha_0,h_0}(m_i)]+\En[\psi_{\check\alpha,\hat h}(m_i)-\psi_{\alpha_0, h_0}(m_i)]\\
&  =  \En[\psi_{\alpha_0,h_0}(m_i)]+\Gamma(\check\alpha,\hat h) +n^{-1/2}\Gn(\psi_{\check\alpha,\hat h}-\psi_{\check\alpha,h_0}) +n^{-1/2}\Gn(\psi_{\check\alpha,h_0}-\psi_{\alpha_0,h_0})\\
& =: (I) + (II) + (III) + (IV).
\end{array}
$$
\noindent By Condition IR(iii), (\ref{Eq:HLhatalpha}), with probability at least $1-\Delta_n$ we have $|\En[\psi_{\check\alpha,\hat h}(m_i)]|\lesssim \delta_n n^{-1/2}$.

\noindent By Condition IR(iii), (\ref{Eq:HLestimated}), with probability at least $1-\Delta_n$ we have $|(III)| \lesssim \delta_n n^{-1/2}.$

To control $(IV)$ note that by Condition IR(i) $$|\psi_{\alpha,h_0}(m_i)-\psi_{\alpha_0,h_0}(m_i)| \leq |\G(d_i\alpha+x_i'\beta_0) - \G(d_i\alpha_0+x_i'\beta_0)|\  |z_{0i}| \leq \bar L'|\alpha-\alpha_0|\cdot |d_iz_{0i}|.$$ By Condition IR(iii), (\ref{Eq:HLhatalpha}), we have $|\check \alpha - \alpha_0|\leq \delta_n$ so that
$$\begin{array}{rl}
 \En[  \{\psi_{\check \alpha,h_0}(m_i)-\psi_{\alpha_0,h_0}(m_i)\}^2] & \leq \sup_{|\alpha-\alpha_0|\leq \delta_n} \En[  \{\psi_{\alpha,h_0}(m_i)-\psi_{\alpha_0,h_0}(m_i)\}^2] \\
 & \leq (\bar L'\delta_n)^2\En[d_i^2z_{0i}^2] \lesssim_P (\bar L'\delta_n)^2\barEp[d_i^2z_{0i}^2] \end{array}$$ from Markov inequality. By using Lemma \ref{Lemma:ProcessL1Control} with $W_i=d_iz_{0i}$ and $\mathcal{T}=\{\alpha-\alpha_0 \in \RR : |\alpha-\alpha_0|\leq \delta_n\}$,  we have
\begin{equation}\label{Eq:EP}
\begin{array}{rl}
\displaystyle |(IV)| & \displaystyle \lesssim_P \sup_{|\alpha-\alpha_0|\leq \delta_n}\left|n^{-1/2}\Gn(\psi_{\alpha,h_0}-\psi_{\alpha_0,h_0})\right| \\ & \displaystyle \lesssim_P n^{-1/2} \sup_{|\alpha-\alpha_0|\leq \delta_n}|\alpha-\alpha_0| \barEp[d_i^2z_{0i}^2]^{1/2}  \lesssim \delta_n n^{-1/2} \\
\end{array}
\end{equation}
By relation (\ref{EqBias}) in Step 2 below we have $$(II) = \Gamma(\check\alpha,\hat h) = - \barEp[w_id_iz_{0i}](\check\alpha-\alpha_0) + O_P(\delta_n n^{-1/2} + \delta_n|\check\alpha-\alpha_0|).$$

Therefore, combining the relations or $(II), (III), (IV)$ and $\En[\psi_{\check\alpha,\hat h}(m_i)]$ we have
$$ \barEp[w_id_iz_{0i}](\check\alpha-\alpha_0) = \En[\psi_{\alpha_0,h_0}(m_i)] + O_\Pr(\delta_nn^{-1/2}) + O_P(\delta_n)|\check\alpha-\alpha_0|$$ which establish the first assertion since $|\barEp[w_id_iz_{0i}]|\geq c>0$ is bounded away from zero. The second assertion  follows since $\barEp[\psi_{\alpha_0,h_0}(m_i)]=0$ and $\barEp[\sigma_i^3z_{0i}^3]\leq C$, by the Lyapunov CLT  we have $$(I) = \sqrt{n}\En[\psi_{\alpha_0,h_0}(m_i)] \rightsquigarrow N(0,\barEp[\sigma_i^2z_{0i}^2]).$$

Step 2. (Bounding $\Gamma(\alpha,\hat h)$ for $|\alpha-\alpha_0|\leq\delta_n$ which covers $(II)$)
We have
\begin{equation}\label{Eq:BiasIQR}
\begin{array}{rl}
\Gamma(\alpha,\hat h) & = \Gamma(\alpha,h_0) + \Gamma(\alpha,\hat h)- \Gamma(\alpha,h_0)\\
& = \Gamma(\alpha, h_0) + \{ \ \Gamma(\alpha,\hat h) - \Gamma(\alpha,h_0) - \Gamma_2(\alpha,h_0)[\hat h - h_0] \ \}  + \Gamma_2(\alpha,h_0)[\hat h - h_0]\\
\end{array}
\end{equation}
By (\ref{BoundGamma2third})  in Step 3 below we have
$$|  \Gamma(\alpha,\hat h) - \Gamma(\alpha,h_0) - \Gamma_2(\alpha,h_0)[\hat h - h_0] | \lesssim \delta_nn^{-1/2}.$$
By (\ref{BoundGamma2first}) in Step 3 below  we have with probability $1-o(1)$
$$ |\Gamma_2(\alpha,h_0)[\hat h - h_0]| \lesssim |\alpha-\alpha_0| \delta_n.$$

Finally, because $ \Gamma(\alpha_0,h_0) = 0$ and $\Gamma_1(\alpha_0,h_0) = -\barEp[ w_i d_i z_{0i}]$, by Taylor expansion there is some $\tilde \alpha \in [\alpha_0, \alpha]$ such that $$\begin{array}{rl}
\Gamma(\alpha,h_0) & = \Gamma(\alpha_0,h_0) + \Gamma_1(\tilde \alpha,h_0)( \alpha - \alpha_0) = \left\{ \Gamma_1(\alpha_0,h_0) + \eta_n \right\} ( \alpha - \alpha_0)\\
& = -\barEp[ w_i d_i z_{0i}]( \alpha - \alpha_0) + O( \delta_n |\alpha - \alpha_0|)  \end{array}$$
where $|\eta_n| \leq \bar L''\delta_n \barEp[|d_i^2z_{0i}|]\lesssim O(\delta_n)$ by relation (\ref{EqGamma1}) in Step 4.

Using the bounds above into relation (\ref{Eq:BiasIQR}) we have
\begin{equation}\label{EqBias}
\begin{array}{rl}
\Gamma(\alpha,\hat h) & =  -\barEp[ w_i d_i z_{0i}]( \alpha - \alpha_0) + O( \delta_n |\alpha - \alpha_0| + \delta_nn^{-1/2}) \\
\end{array}
\end{equation}

Step 3. (Relations for $\Gamma_2$) The directional derivative $\Gamma_2$ with respect the direction $\hat h - h_0$ at a point $\tilde h = (\tilde \beta,\tilde z)$ is given by $$\begin{array}{rl}
 \Gamma_2(\alpha, \tilde h)[\hat h - h_0]  &  = -\barEp[\G'(d_i\alpha+x_i'\tilde \beta)\tilde z_ix_i'\{\hat \beta - \beta_0\}]  + \barEp[ \{ \G(d_i\alpha_0+x_i'\beta_0) - \G(d_i\alpha +x_i'\tilde \beta)\} \{\hat z_i-z_{0i}\}].\\
 \end{array}$$  Note that when $\Gamma_2$ is evaluated at $(\alpha_0, h_0)$ we have \begin{equation}\label{EqGamma20}\Gamma_2(\alpha_0,h_0)[\hat h - h_0] = -\barEp[w_iz_{0i}x_i'](\widehat\beta-\beta_0) = 0\end{equation} because of the orthogonality condition $\barEp[w_iz_{0i} x_i]=0$ in Condition IR(ii). Therefore the expression for $\Gamma_2$ leads to the following bound
\begin{equation}\label{BoundGamma2first}\begin{array}{l}  \left| \Gamma_{2}(\alpha,h_0)[\hat h - h_0]\right| = \left| \Gamma_{2}(\alpha,h_0)[\hat h - h_0]\right. - \left.  \Gamma_{2}(\alpha_0, h_0)[\hat h - h_0]\right| \\
\leq \bar L' \barEp[|\alpha-\alpha_0| \ |d_i  z_{0i}| \ |x_i'\{\hat \beta - \beta_0\}|] + \bar L'\barEp[ |(\alpha-\alpha_0)d_i| \  |\hat z_i-z_{0i}|]\\
 \leq  |\alpha-\alpha_0| \bar L'\{ \barEp[ \{x_i'\{\hat\beta-\beta_0\}^2]\}^{1/2}\{\barEp[z_{0i}^2d_i^2]\}^{1/2}+|\alpha-\alpha_0|\bar L'  \{\barEp[(\hat z_i-z_{0i})^2]\}^{1/2}\{\barEp[d_i^2]\}^{1/2}\\
 \lesssim |\alpha-\alpha_0|\delta_n\\
   \end{array}\end{equation} since $\barEp[d_i^2] \lesssim C$, $\barEp[z_{0i}^2d_i^2]\leq C$,  $\{ \barEp[ \{x_i'\{\hat\beta-\beta_0\}^2]\}^{1/2}\lesssim \|\hat\beta-\beta_0\| \leq \delta_n n^{-1/4}$ and $\{\barEp[(\hat z_i-z_{0i})^2]\}^{1/2}\leq \delta_n$.

The second directional derivative $\Gamma_{22}$ at $\tilde h = (\tilde \beta, \tilde z)$ with respect to the direction $\hat h - h_0$ can be bounded by
\begin{equation}\label{BoundGamma2second}\begin{array}{l}
 \left|\Gamma_{22}(\alpha, \tilde h)[\hat h - h_0, \hat h - h_0]  \right| \\
 =  \left| -\barEp[\G''(x_i'\tilde \beta + \alpha d_i)\tilde z_i \{x_i'(\hat \beta_0 - \beta_0)\}^2]   - 2\barEp[ \G'(x_i'\tilde\beta+d_i\alpha)\{x_i'(\hat\beta-\beta_0)\}\{\hat z_i-z_i\}]\right|\\
\leq \bar L''\{\barEp[\tilde z_i^2]\}^{1/2} \{\barEp[\{x_i'(\hat\beta-\beta_0)\}^4]\}^{1/2} + 2\bar L' \{\barEp[\{x_i'(\hat\beta-\beta_0)\}^2]\}^{1/2}\{\barEp[(\hat z_i-z_{0i})^2]\}^{1/2}\\
\lesssim \{\barEp[\tilde z_i^2]\}^{1/2}\|\hat\beta-\beta_0\|^2+\|\hat\beta-\beta_0\|\{\barEp[(\hat z_i-z_{0i})^2]\}^{1/2}
 \end{array}\end{equation}
since $\barEp[\{x_i'\xi\}^4]\leq C\|\xi\|^4$. In turn, since $\tilde h \in [ h_0, \hat h]$, $|\tilde z(d_i,x_i)|\leq |z_{0}(d_i,x_i)|+|\hat z(d_i,x_i)-z_{0}(d_i,x_i)|$,  we have that $\{\barEp[\tilde z_i^2]\}^{1/2} \leq \{\barEp[z_{0i}^2]\}^{1/2}+\{\barEp[(\hat z_i-z_{0i})^2]\}^{1/2} \leq C+\delta_n$. Therefore, with probability $1-\Delta_n$
{\small \begin{equation}\label{BoundGamma2third}\begin{array}{rl}
 \left|\Gamma(\alpha,\hat h) - \Gamma(\alpha,h_0)  -  \Gamma_2(\alpha,h_0)\left[\hat h-h_0\right]\right| \leq \sup_{\tilde h \in [ h_0, \hat h ]} \left| \Gamma_{2,2}(\alpha, \tilde h)\left[\hat h - h_0, \hat h-h_0\right] \right| \lesssim \delta_n n^{-1/2}
 \end{array} \end{equation}} by Condition IR(iii).

Step 4. (Relations for $\Gamma_1$) By definition of $\Gamma$, its derivative with respect to $\alpha$ at $(\alpha,\tilde h)$ is
$$ \Gamma_1(\alpha,\tilde h) = -\barEp[ \G'(\alpha d_i+x_i'\tilde \beta)\tilde z_i d_i]. $$
Therefore, when the function above is evaluated at $\alpha=\alpha_0$ and $\tilde h = h_0=(\beta_0,z_0)$, since for $ \G'(x_i'\beta_0 + \alpha_0 d_i)=w_i$,  we have
\begin{equation}\label{EqGamma10} \Gamma_1(\alpha_0,h_0) = -\barEp[ w_i d_i z_{0i}].\end{equation} Moreover, $\Gamma_1$ also satisfies
 \begin{equation}\label{EqGamma1}
 \begin{array}{rl}
 \left| \Gamma_1(\alpha,h_0) - \Gamma_1(\alpha_0,h_0)\right| & = \left| \barEp[ \G'(\alpha d_i+x_i'\beta_0) z_{0i} d_i] - \barEp[ \G'(\alpha_0 d_i+x_i'\beta_0) z_{0i} d_i]\right|\\
 & \leq \bar L ' |\alpha-\alpha_0| \barEp[ | d_i^2z_{0i}|]. \end{array}\end{equation}

Step 5. (Estimation of Variance)
First note that
{\small \begin{equation}\label{Ed:FirstVarianceTerm}\begin{array}{rl}
 |\En[\hat w_id_i\hat z_i]-\barEp[w_id_iz_{0i}] | & = | \En[\hat w_id_i\hat z_i]-\En[w_id_iz_{0i}] |+| \En[ w_id_i z_{0i}]-\barEp[w_id_iz_{0i}] |\\
& \leq | \En[(\hat w_i-w_i)d_i\hat z_i]|+|\En[w_id_i(\hat z_i-z_{0i})] |+ |\En[ w_id_i z_{0i}]-\barEp[w_id_iz_{0i}]|\\
& \leq | \En[(\hat w_i-w_i)d_i(\hat z_i-z_{0i})]|+| \En[(\hat w_i-w_i)d_iz_{0i}]|\\
& +\|w_id_i\|_{2,n}\|\hat z_i-z_{0i}\|_{2,n} + |\ \En[ w_id_i z_{0i}]-\barEp[w_id_iz_{0i}] |\\
& \lesssim_P \|(\hat w_i-w_i)d_i\|_{2,n}\|\hat z_i-z_{0i}\|_{2,n}+\|\hat w_i-w_i\|_{2,n}\|d_iz_{0i}\|_{2,n}\\
&+\|w_id_i\|_{2,n}\|\hat z_i-z_{0i}\|_{2,n} + | \En[ w_id_i z_{0i}]-\barEp[w_id_iz_{0i}] |\\
& \lesssim_P \delta_n\end{array}\end{equation}}
because by Condition IR(ii) we have $\barEp[d_i^4] \leq C$, $\barEp[z_{0i}^4]\leq C$, by Condition IR(iv) $\|\hat z_i-z_{0i}\|_{2,n}+\|\hat w_i-w_i\|_{2,n}\lesssim \delta_n$ with probability $1-\Delta_n$.


Next we proceed to control the other term of the variance. Since
$|\psi_{\check\alpha,\hat h}(m_i)-\psi_{\alpha_0,\hat h}(m_i)|\leq \bar L|d_i(\check\alpha-\alpha_0)\hat z_i|$ and $|\psi_{\alpha_0,\hat h}(m_i)-\psi_{\alpha_0,h_0}(m_i)| \leq \bar L'|\hat z_i-z_{0i}|+\bar L'|x_i'\{\hat\beta-\beta_0\}z_{0i}|$ we have with probability $1-\Delta_n$
\begin{equation}\label{Eq:Denominator}\begin{array}{rl}
& |\ \|\psi_{\check\alpha,\hat h}(y_i,d_i,x_i)\|_{2,n}  -  \| \psi_{\alpha_0,h_0}(y_i,d_i,x_i)\|_{2,n}  | \\
& \leq \bar L'\|d_i(\check\alpha-\alpha_0)\hat z_i\|_{2,n}+ \bar L\|\hat z_i-z_{0i}\|_{2,n}+\bar L'\|x_i'\{\hat\beta-\beta_0\}z_{0i}\|_{2,n}\\
& \lesssim \delta_n\end{array}\end{equation}
 by Condition IR(iv). Also, by Condition IR(ii), $\barEp[\sigma_i^3z_{0i}^3] \leq C$ we have $|\En[\psi_{\alpha_0,h_0}^2(m_i)] - \barEp[\psi_{\alpha_0,h_0}^2(m_i)]|\lesssim_P \delta_n$ .

Step 6. (Main Step for $\chi^2$) Note that the denominator of $L_n(\alpha_0)$ was analyzed in relation (\ref{Eq:Denominator}) of Step 5. Next consider the numerator of $L_n(\alpha_0)$. Since $\Gamma(\alpha_0,h_0)=\barEp[\psi_{\alpha_0, h_0}(m_i)]=0$, we have $$\En[ \psi_{\alpha_0,\hat h}(m_i)] = (\En-\barEp)[ \psi_{\alpha_0,\hat h}(m_i) - \psi_{\alpha_0, h_0}(m_i)] + \Gamma(\alpha_0,\hat h)+\En[\psi_{\alpha_0, h_0}(m_i)].$$
By Condition IR(iii) and (\ref{EqBias}) with $\alpha=\alpha_0$, it follows that
$$ |(\En-\barEp)[ \psi_{\alpha_0,\hat h}(m_i) - \psi_{\alpha_0, h_0}(m_i)]| \leq \delta_nn^{-1/2} \ \ \mbox{and} \ \ |\Gamma(\alpha_0,\hat h)| \lesssim_P \delta_n n^{-1/2}.$$
Therefore, using that $nA_n^2 = nB_n^2 + n(A_n-B_n)^2+2nB_n(A_n-B_n)$, for $A_n =\En[  \psi_{\alpha_0,\hat h}(m_i)]$  and  $B_n = \En[  \psi_{\alpha_0, h_0}(m_i)] \lesssim_P \{\barEp[\sigma_i^2z_{0i}^2]\}^{1/2}n^{-1/2}$ we have
\begin{eqnarray*}
 nL_n(\alpha_0) & = & \frac{n|\En[  \psi_{\alpha_0,\hat h}(m_i)]|^2}{\En[ \psi_{\alpha_0,\hat h}^2(m_i)]} = \frac{n|\En[  \psi_{\alpha_0,h_0}(m_i)]|^2+O_P(\delta_n)}{\barEp[\sigma_i^2z_{0i}^2]+O_P(\delta_n)}=    \frac{n|\En[ \psi_{\alpha_0, h_0}(m_i)]|^2}{\barEp[\sigma_i^2z_{0i}^2]} +O_P(\delta_n)
  \end{eqnarray*}
since $\barEp[\sigma_i^2z_{0i}^2]$ is bounded away from zero by assumption. The result then follows since $\sqrt{n}\En[  \psi_{\alpha_0,h_0}(m_i)]\rightsquigarrow N(0,\barEp[\sigma_i^2z_{0i}^2])$ and $\Ep[ \psi_{\alpha_0,h_0}^2(m_i)\mid x_i,d_i]=\sigma_i^2z_{0i}^2$.

\end{proof}

\section{Auxiliary Results for Penalized and Post-Model Selection Estimators}\label{Sec:AnalysisAux}

In this section we state relevant theoretical results on the performance of the $\ell_1$-penalized Logistic regression estimators, heteroscedastic Lasso with estimated weights estimators and the associated post-model selection estimators. The analysis of the latter builds upon the analysis of Lasso under heteroscedasticity of \cite{BellChenChernHans:nonGauss} and it was developed in \cite{BelloniChernozhukovKato2013b}. The analysis of the former builds upon the work of \cite{Bach2010} that established rates for $\ell_1$-penalized Logistic regression exploiting self-concordance. The main design condition relies on the restricted eigenvalue proposed in \cite{BickelRitovTsybakov2009}, namely for $\tilde x_i =(d_i, x_i' )'$
\begin{equation}\label{Def:RestrEig} \kappa_{\cc} = \inf_{\|\delta_{T^c}\|_1\leq \cc \|\delta_T\|_1} \| \mu_i\tilde x_i'\delta\|_{2,n}/\|\delta_T\|, \end{equation}
where $\cc = (c+1)/(c-1)$ for the slack constant $c>1$ and $\mu_i$ are problem specific weights. In the original setting of \cite{BickelRitovTsybakov2009} for least squares we have $\mu_i=1$ and it is well known that $\kappa_\cc$ is bounded away from zero if $\cc$ is bounded for any subset $T\subset \{1,\ldots,p\}$ with $|T|\leq s$ if the sparse eigenvalues of order $Cs$ are well behaved (bounded away from zero and from above uniformly) for suitably large constant $C$. When analyzing the logistic regression, the weights will be set to $\mu_i = \sqrt{w_i}$.

\subsection{Results for Lasso and Post Lasso with Estimated Weights}\label{Sec:EstLasso}

In this section we state results obtained in \cite{BelloniChernozhukovKato2013b} for Post-Lasso estimators with estimated weights, namely the model
\begin{equation}\label{Def:ModelEstLasso} f_id_i = f_ix_i'\theta_0 + v_i, \ \ \Ep[ f_iv_i x_i ] = 0\end{equation}
where we observe $\{(d_i,x_i):i=1,\ldots,n\}$, i.n.i.d., and only an estimate $\hat f_i$  of the weights $f_i$. The support $T_{\theta_0}=\supp(\theta_0)$ is unknown but a sparsity condition holds, namely $|T_{\theta_0}| \leq s$. Estimators for $\theta_0$ and $v_i$ can be computed based on Lasso or Post-Lasso, namely
\begin{equation}\label{EstLasso}\hat \theta \in \arg \min_{\theta \in \RR^p} \En[\hat f_i^2 ( d_i - x_i'\theta)^2] + \frac{\lambda}{n}\|\hat \Gamma\theta\|_1,  \ \ \mbox{and} \ \ \hat v_i = \hat f_i (d_i - x_i'\hat \theta ), \end{equation}
\begin{equation}\label{EstPostLasso} \widetilde\theta \in \arg\min_{\theta\in \RR^p} \ \left\{ \ \En[\hat f_i^2(d_i - x_i'\theta)^2] \ : \ \theta_j = 0, \ \mbox{if} \ \hat\theta_{j} = 0 \ \right\}, \ \ \mbox{and} \ \tilde v_i = \hat f_i (d_i - x_i'\widetilde\theta).\end{equation}
where $\lambda$ and $\hat \Gamma$ are the associated penalty level and loadings. We will use penalty loadings $\widehat\Gamma$ that are diagonal matrices defined by the algorithm below.
\begin{algorithm}[Computation of $\widehat \Gamma$]
\item[]Step 1. Compute the Post Lasso estimator $\widetilde \theta$ based on $\lambda = 2c'\sqrt{n}\Phi^{-1}(1-\gamma/2p), \ c'>c>1$ and the following penalty loadings, for $j=1,\ldots,p$ $$ \hat \Gamma_{jj}  =  \max_{i\leq n}\|\hat f_ix_i\|_\infty \sqrt{\En[ (\hat f_id_i- \overline{fd} )^2]}, \ \ \  \mbox{where} \ \  \overline{fd}:= \En[\hat f_id_i].$$
\item[]Step 2. Compute the residuals $\widehat v_i = \hat f_i(d_i - x_i'\widetilde\theta)$ and set $\widehat \Gamma$ as \begin{equation}\label{choice of loadings2} \hat \Gamma_{jj}  =   \sqrt{\En[\hat f_i^2 x^2_{ij} \hat v_i^2]}, \ j=1,\ldots,p. \end{equation}
\end{algorithm}

\cite{BellChenChernHans:nonGauss} established the validity of  using either of the choices in (\ref{choice of loadings2}) in the case the weights $f_i$ are known and equal to one and \cite{BelloniChernozhukovKato2013b} considers the current case with estimated weights $\hat f_i$. Next we provide sufficient high-level conditions to establish rates of convergence and sparsity bounds. As before the sequences $\Delta_n$ and $\delta_n$ go to zero, $C$ is constant independent of $n$.

{\bf Condition WL.} For the model (\ref{Def:ModelEstLasso}), suppose that \\
 (i) $\|\theta_0\|_0\leq s$ where $s\geq 1$;\\ 
(ii) $\min_{j\leq p}\barEp[|f_i x_{ij}v_i|^2]>c>0$, $\displaystyle\max_{j\leq p} \{\barEp[|f_ix_{ij}v_i|^3]\}^{1/3}\ \Phi^{-1}(1-\gamma/2p) \leq \delta_n n^{1/6}, \ \ \gamma \leq n^{-1/4} $\\ 
(iii) $\displaystyle \max_{j\leq p} |(\En-\barEp)[f_i^2x_{ij}^2v_i^2]| \leq \delta_n$  with probability $1-\Delta_n$\\
(iv) the estimates $\hat f_i, i=1,\ldots,n$, satisfy with probability $1-\Delta_n$
$$\En[\hat f_i^2d_i^2]\leq C, \ \ \max_{j\leq p} \En[(\hat f_i - f_i)^2x_{ij}^2v_i^2] \leq \delta_n, \ \ \mbox{and} \ \ \  \En\left[ (\hat f_i^2 - f_i^2)^2v_i^2/f_i\right] \leq \hat c_f^2. $$

Condition WL(i) is a standard sparsity assumption and could be relaxed in different directions. Condition WL(ii) has mild moment conditions that are used to apply self-normalized moderate deviation theory to control heteroscedastic non-Gaussian errors similar to \cite{BellChenChernHans:nonGauss} where there are no estimated weights. Condition WL(ii) also has a condition on the dimension $p$ relative to $n$ and bounds how fast the confidence level $1-\gamma$ can converge to 1. Condition WL(iii) provides sufficient conditions for the uniform convergence of cross terms. Condition WL(iv) requires high-level rates of convergence for the estimate $\hat f_i$. In our applications these estimates can be constructed with $\ell_1$-penalized logistic regression estimators studied in Section \ref{Sec:Step1}.

\begin{remark}[Control of $\hat c_f$]
The quantity $\hat c_f$ impacts directly the prediction rate and sparsity results which are needed for the post-model selection estimators. Bounds on $\hat c_f$ will be dependent on regularities conditions. A simple bound is $ \hat c_f^2 \leq \En\left[ (\hat f_i^2 - f_i^2)^2\right] \max_{i\leq n}v_i^2/f_i$. In our analysis we pursued the use of matrix inequalities based on \cite{RudelsonVershynin2008} which seems to lead to sharper results under typical conditions.
\end{remark}


 Next we present results on the performance of the estimators generated by Lasso and Post-Lasso with estimated weights.

\begin{theorem}[Properties of Lasso and Post-Lasso with estimated Weights]\label{Thm:RateEstimatedLasso}
Under Condition WL, setting $\lambda \geq 2c'\sqrt{n}\Phi^{-1}(1-\gamma/2p)$ for $c'>c>1$, and using the penalty loadings $\hat \Gamma$ defined in (\ref{choice of loadings2}), there is an uniformly bounded $\cc$ such that with probability $1-o(1)$
$$ \|\hat f_i x_i'(\hat\theta - \theta_0)\|_{2,n} \lesssim  \frac{\lambda\sqrt{s}}{n\kappa_\cc\min_{i\leq n}\hat f_i/f_i}+ \hat c_f$$
provided that $\max_{i\leq n}\|\hat f_ix_i\|_\infty^2 \{\frac{\lambda\sqrt{s}}{n\kappa_\cc\min_{i\leq n}\hat f_i/f_i}+ \hat c_f\} \leq \delta_n$ with probability $1-o(1)$. Moreover, provided that $\semax{ \{s + n^2\hat c_f^2/\lambda^2\}/\delta_n} \leq C$, $\min_{i\leq n} \hat f_i^2 \geq c/2$, the data-dependent model $\widehat T_{\theta_0}$ selected by a Lasso estimator satisfies with probability $1-o(1)$
\begin{equation}\label{eq: sparsity bound}
\|\widetilde \theta \|_0 = | \widehat T_{\theta_0} | \lesssim s + \frac{n^2\hat c_f^2}{\lambda^2}
 \end{equation}
and the Post-Lasso estimator obeys with probability $1-o(1)$
$$ \| x_i'(\widetilde \theta -\theta_0)\|_{2,n} \lesssim \frac{n\hat c_f}{\lambda}\sqrt{\frac{\log p}{n}} + \sqrt{\frac{s\log (p \vee n)}{n}} + \frac{\lambda\sqrt{s}}{n\kappa_\cc} \  \mbox{and} \ \|\widetilde \theta -\theta_0\|_1 \lesssim \left\{\sqrt{s}+\frac{n\hat c_f}{\lambda}\right\}\frac{\| x_i'(\widetilde \theta -\theta_0)\|_{2,n}}{\sqrt{\semin{|\hat T_{\theta_0}|}}}.$$\end{theorem}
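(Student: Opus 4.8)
The plan is to follow the now-standard Lasso and Post-Lasso analysis of \cite{BickelRitovTsybakov2009}, upgraded to heteroscedastic non-Gaussian errors via self-normalized moderate deviation bounds as in \cite{BellChenChernHans:nonGauss}, with the one genuinely new ingredient being the control of the error introduced by replacing the true weights $f_i$ by the estimates $\hat f_i$; this error is exactly what the quantity $\hat c_f$ summarizes, and the full argument is carried out in \cite{BelloniChernozhukovKato2013b}. First I would establish the validity of the data-driven penalty loadings: under Condition WL the loadings $\hat\Gamma_{jj}$ of (\ref{choice of loadings2}) are bounded above and below by constants with probability $1-o(1)$ (using Condition WL(iii) for concentration of the empirical loadings around their population counterparts), and the chosen $\lambda$ dominates the relevant part of the score. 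The key is the exact decomposition $\hat f_i^2(d_i-x_i'\theta_0)=f_iv_i+(\hat f_i^2-f_i^2)v_i/f_i$, which uses $d_i-x_i'\theta_0=v_i/f_i$ from (\ref{Def:ModelEstLasso}); the studentized sup-norm of the $\En[f_iv_ix_{ij}]$ part is controlled by self-normalized moderate deviations together with Condition WL(ii), while the remaining ``bias'' part has sup-norm $\lesssim_P\hat c_f$ by Cauchy--Schwarz ($|\En[(\hat f_i^2-f_i^2)(v_i/f_i)x_{ij}]|\leq\{\En[(\hat f_i^2-f_i^2)^2v_i^2/f_i]\}^{1/2}\{\En[x_{ij}^2/f_i]\}^{1/2}$) together with Conditions WL(i) and WL(iv).

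\emph{Lasso prediction bound.} On the event of the previous step, optimality of $\hat\theta$ gives the basic inequality for $\delta=\hat\theta-\theta_0$,
$$\|\hat f_ix_i'\delta\|_{2,n}^2 \leq 2\En[\hat f_i^2(d_i-x_i'\theta_0)x_i'\delta]+\tfrac{\lambda}{n}\|\hat\Gamma\theta_0\|_1-\tfrac{\lambda}{n}\|\hat\Gamma\hat\theta\|_1,$$
and I would split the cross term via the decomposition above: the $f_iv_i$-part is absorbed into the penalty and forces $\delta$ into the restricted cone up to an $O(n\hat c_f/\lambda)$ correction, while the bias part contributes $\lesssim_P\hat c_f\,\|x_i'\delta\|_{2,n}$. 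Using $c\leq f_i\leq C$ and $\min_i(\hat f_i/f_i)$ to pass between $\|x_i'\delta\|_{2,n}$, $\|f_ix_i'\delta\|_{2,n}$ and $\|\hat f_ix_i'\delta\|_{2,n}$, invoking the restricted eigenvalue $\kappa_\cc$ of (\ref{Def:RestrEig}) to bound $\|\delta_T\|_1\lesssim\sqrt{s}\,\|f_ix_i'\delta\|_{2,n}/\kappa_\cc$, and solving the resulting quadratic inequality delivers $\|\hat f_ix_i'(\hat\theta-\theta_0)\|_{2,n}\lesssim_P\lambda\sqrt{s}/(n\kappa_\cc\min_i(\hat f_i/f_i))+\hat c_f$.

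\emph{Sparsity, Post-Lasso, and the $\ell_1$ bound.} The sparsity bound comes from the Lasso first-order conditions: for each selected index $j$, $|\hat\Gamma_{jj}^{-1}\En[\hat f_i^2(d_i-x_i'\hat\theta)x_{ij}]|=\lambda/(2n)$, so squaring and summing over $\widehat T_{\theta_0}$, then bounding through the sparse maximal eigenvalue $\semax{\cdot}\leq C$, the prediction rate just obtained, and the sup-norm control of the score and bias terms, yields $|\widehat T_{\theta_0}|\lesssim s+n^2\hat c_f^2/\lambda^2$; the extra term is precisely the number of spurious selections that the weight-estimation bias can create. With this in hand (so that every submodel in play has size $\lesssim(s+n^2\hat c_f^2/\lambda^2)/\delta_n$, where the sparse eigenvalue hypothesis $\semax{\{s+n^2\hat c_f^2/\lambda^2\}/\delta_n}\leq C$ applies), Post-Lasso is ordinary least squares on $\widehat T_{\theta_0}$, and the standard oracle argument bounds $\|x_i'(\widetilde\theta-\theta_0)\|_{2,n}$ by the sum of (i) the bias from the relevant coordinates of $\theta_0$ missed by selection, controlled by the Lasso rate and hence of order $\lambda\sqrt{s}/(n\kappa_\cc)$, (ii) an empirical-process term over sparse submodels contributing $\sqrt{s\log(p\vee n)/n}$ via maximal inequalities under Condition WL, and (iii) a weight-error term of order $(n\hat c_f/\lambda)\sqrt{\log p/n}$ — this is the asserted rate. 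The $\ell_1$ bound is then immediate, since $\widetilde\theta-\theta_0$ is supported on $\widehat T_{\theta_0}\cup T_{\theta_0}$, giving $\|\widetilde\theta-\theta_0\|_1\leq\sqrt{|\widehat T_{\theta_0}|+s}\,\|\widetilde\theta-\theta_0\|\leq\sqrt{|\widehat T_{\theta_0}|+s}\,\|x_i'(\widetilde\theta-\theta_0)\|_{2,n}/\{\semin{|\widehat T_{\theta_0}|}\}^{1/2}$, together with $\sqrt{|\widehat T_{\theta_0}|+s}\lesssim_P\sqrt{s}+n\hat c_f/\lambda$.

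The step I expect to be the main obstacle is the sparsity bound: one must carefully separate, inside the count of selected variables, the contribution of the genuine error $v_i$ (which accounts for the $O(s)$ part) from that of the weight-estimation bias $(\hat f_i^2-f_i^2)v_i/f_i$ (which must be shown to add at most $O(n^2\hat c_f^2/\lambda^2)$ variables); this is exactly what forces the sparse eigenvalue condition to be imposed at the inflated order appearing in the statement, and what keeps the ensuing Post-Lasso prediction and $\ell_1$ bounds honest.
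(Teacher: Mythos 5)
Your proposal is correct and takes essentially the same route as the paper, which does not prove Theorem~\ref{Thm:RateEstimatedLasso} in-house but imports it from \cite{BelloniChernozhukovKato2013b} (building on the heteroscedastic Lasso analysis of \cite{BellChenChernHans:nonGauss}), explicitly noting that the sparsity bound follows the standard first-order-condition argument and that the Post-Lasso rate is obtained by combining sparsity with the prediction rate. Your reconstruction — the score decomposition $\hat f_i^2(d_i-x_i'\theta_0)=f_iv_i+(\hat f_i^2-f_i^2)v_i/f_i$ with the first part controlled by self-normalized moderate deviations and the second by Cauchy--Schwarz through $\hat c_f$, then the restricted-eigenvalue prediction bound, the sparsity count, and the oracle argument for Post-Lasso — is exactly that argument, and I see no gaps.
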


Theorem \ref{Thm:RateEstimatedLasso} above establishes the rate of convergence for Lasso and Post-Lasso with estimated weights. This leads to bounds on the error between estimated the instrumented instrument $\hat z_i$ used in Table \ref{Table:Alg} with respect to the associated valid instrument $z_{0i} = v_i/\sqrt{w_i}$ since
\begin{equation}\label{Eq:Identityv}
\hat z_i - z_{0i} = d_i-x_i'\widetilde \theta - \frac{v_i}{\sqrt{w_i}} = d_i-x_i'\widetilde \theta - \{ d_i - x_i'\theta_0\} =  x_i'(\theta_0 - \widetilde \theta).
\end{equation}
Sparsity properties of the Lasso estimator $\hat\theta$ under estimated weights follows similarly to the standard Lasso analysis derived in \cite{BellChenChernHans:nonGauss}. By combining such sparsity properties and the rates in the prediction norm we can establish rates for the post-model selection estimator under estimated weights.


\subsection{$\ell_1$-Penalized Logistic Regression}\label{Sec:Step1}

Consider a data generating process such that $$\Ep [ y_i \mid \tilde x_i ] =  \G(\tilde x_i'\eta_0)$$ which is independent across $i$ ($i=1,\ldots,n$). Without loss of generality, we assume that $\|\eta_0\|_0=s\geq 1$, $\En[\tilde x_{ij}^2]=1$ for all  $1 \leq j \leq p$. First we consider the estimation of $\eta_0$ via $\ell_1$-penalized Logistic regression
\begin{equation}
\label{Eq:L1Log}
 \hat \eta \in \arg\min_{\eta} \Lambda(\eta) + \frac{\lambda}{n}\|\eta\|_1. \end{equation}
Following a general principle used in $\ell_1$-penalized estimators as discussed in \cite{BickelRitovTsybakov2009,Bach2010,BC-SparseQR,Unified2012,Wang2012}, under the event that
\begin{equation}
\label{Eq:PenaltyL1Log}
\frac{\lambda}{n} \geq c\|\nabla\Lambda(\eta_0)\|_\infty = c \| \En[ \{y_i - \G(\tilde x_i'\eta_0)\}\tilde x_i]\|_\infty, \ \ \ \mbox{where} \ c > 1,
\end{equation}
the estimator in (\ref{Eq:L1Log}) achieves good theoretical guarantees under mild design conditions. Although $\eta_0$ is unknown, we can set $\lambda$ so that the event in (\ref{Eq:PenaltyL1Log}) holds with high probability. In particular, Remark \ref{Remark:Penalty} based on Lemma \ref{Lemma:PenaltySNDT} shows that it suffices to set $\lambda = \frac{1.1}{2}\sqrt{n}\Phi^{-1}(1-\gamma/[2p])$ where we suggest $\gamma = 0.1/\log n$. Next we present results for the estimator (\ref{Eq:L1Log}). In what follows we consider (\ref{Def:RestrEig}) with $f_i=\sqrt{w_i}$.

\begin{lemma}[Results for $\ell_1$-Penalized Logistic Regression]\label{Lemma:LassoLogisticRate}
Assume $\lambda/n \geq c \|\nabla \Lambda(\eta_0) \|_\infty $, $c>1$ and let $\cc = (c+1)/(c-1)$. Then
$$ \|\sqrt{w_i}\tilde x_i'(\hat\eta-\eta_0)\|_{2,n} \leq 3(1+\mbox{$\frac{1}{c}$})\frac{\lambda\sqrt{s}}{n\kappa_\cc} \ \ \ \mbox{and} \ \ \ \|\hat \eta -\eta_0\|_1 \leq 3\frac{(1+c)(1+\cc)}{c}\frac{\lambda s}{n\kappa_\cc^2}$$
provided that $\inf_{\delta \in \Delta_\cc} \frac{\|\sqrt{w_i}\tilde x_i'\delta\|_{2,n}^3}{\|\sqrt{w_i}|\tilde x_i'\delta|^{3/2}\|_{2,n}^2} > 3(1+\frac{1}{c})\frac{\lambda\sqrt{s}}{n\kappa_\cc}$. Moreover,  we have
$$ |\supp(\hat\eta)|\leq 36s\cc^2\frac{\min_{m\in\mathcal{M}}\semax{m}}{\kappa_\cc^2} \ \ \ \ \mbox{and} \ \ \ \Lambda(\hat\eta)-\Lambda(\eta_0) \leq 3(1+\mbox{$\frac{1}{c}$})\left(\frac{\lambda\sqrt{s}}{n\kappa_\cc}\right)^2$$where $\mathcal{M} = \{ m \in \NN : m > 72\cc^2 s \semax{m}/\kappa_\cc^2 \}$, provided $\max_{i\leq n}\|\tilde x_i\|_\infty\|\hat\eta-\eta_0\|_1\leq 1 $.\end{lemma}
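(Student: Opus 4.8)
The plan is to follow the standard route for $\ell_1$-penalized $M$-estimation with a self-concordant loss (cf. \cite{Bach2010}): a basic inequality produces a cone condition together with an upper bound on the Bregman divergence, the self-concordance inequality $|\G''|\le\G'$ produces a matching lower bound, and comparing the two yields the prediction-norm rate, from which the $\ell_1$ bound, the likelihood bound and the sparsity bound all follow. Concretely, I would first record the basic inequality. Put $\delta:=\hat\eta-\eta_0$, $T:=\supp(\eta_0)$, and $D:=\Lambda(\hat\eta)-\Lambda(\eta_0)-\nabla\Lambda(\eta_0)'\delta\ge0$. Optimality in (\ref{Eq:L1Log}) gives $\Lambda(\hat\eta)-\Lambda(\eta_0)\le\frac{\lambda}{n}(\|\delta_T\|_1-\|\delta_{T^c}\|_1)$, while $-\nabla\Lambda(\eta_0)'\delta\le\|\nabla\Lambda(\eta_0)\|_\infty\|\delta\|_1\le\frac{\lambda}{cn}\|\delta\|_1$ on the event $\lambda/n\ge c\|\nabla\Lambda(\eta_0)\|_\infty$; adding these, $D\le\frac{\lambda}{n}[(1+\tfrac1c)\|\delta_T\|_1-(1-\tfrac1c)\|\delta_{T^c}\|_1]$, which (using $D\ge0$) forces $\|\delta_{T^c}\|_1\le\cc\|\delta_T\|_1$, i.e. $\delta\in\Delta_\cc$; hence, using $\|\delta_T\|_1\le\sqrt s\|\delta_T\|$ and the restricted eigenvalue (\ref{Def:RestrEig}) with $f_i=\sqrt{w_i}$,
$$ D\ \le\ (1+\tfrac1c)\frac{\lambda\sqrt s}{n\kappa_\cc}\,a,\qquad a:=\|\sqrt{w_i}\tilde x_i'\delta\|_{2,n}. $$

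Next I would produce the self-concordance lower bound. With $u_i=\tilde x_i'\eta_0$, $v_i=\tilde x_i'\delta$, one has $D=\int_0^1(1-t)\,\En[\G'(u_i+tv_i)v_i^2]\,dt$; since $|\tfrac{d}{ds}\log\G'(s)|=|1-2\G(s)|\le1$ we have $\G'(u_i+tv_i)\ge w_i e^{-t|v_i|}$, and Jensen's inequality applied to $e^{-(\cdot)}$ under the empirical measure proportional to $w_iv_i^2$ gives $\En[\G'(u_i+tv_i)v_i^2]\ge a^2 e^{-t r_\delta}$ with $r_\delta:=\En[w_i|v_i|^3]/a^2$ — which equals $a/q(\delta)$, $q(\delta)$ being the ratio $\|\sqrt{w_i}\tilde x_i'\delta\|_{2,n}^3/\|\sqrt{w_i}|\tilde x_i'\delta|^{3/2}\|_{2,n}^2$ in the side condition. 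Integrating, $D\ge a^2\psi(r_\delta)$ with $\psi(r)=(e^{-r}+r-1)/r^2$, which is decreasing in $r$, satisfies $\psi(0^+)=\tfrac12$, and is such that $r\mapsto r\psi(r)$ is increasing. Comparing with the upper bound gives $a\psi(r_\delta)\le(1+\tfrac1c)\frac{\lambda\sqrt s}{n\kappa_\cc}$; the side condition says $q(\delta)\ge\inf_{\delta'\in\Delta_\cc}q(\delta')>3(1+\tfrac1c)\frac{\lambda\sqrt s}{n\kappa_\cc}$, i.e. $a>3(1+\tfrac1c)\frac{\lambda\sqrt s}{n\kappa_\cc}\,r_\delta$, so $3r_\delta\psi(r_\delta)<1$; since $r\psi(r)$ is increasing and equals $e^{-1}>\tfrac13$ at $r=1$, this forces $r_\delta<1$, whence $\psi(r_\delta)>\psi(1)=e^{-1}>\tfrac13$ and therefore $a\le3(1+\tfrac1c)\frac{\lambda\sqrt s}{n\kappa_\cc}$, the stated prediction-norm rate. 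The $\ell_1$ bound then follows from the cone condition and the restricted eigenvalue, $\|\hat\eta-\eta_0\|_1\le(1+\cc)\|\delta_T\|_1\le(1+\cc)\sqrt s\,a/\kappa_\cc$, and the likelihood bound from $\Lambda(\hat\eta)-\Lambda(\eta_0)\le\frac{\lambda}{n}\|\delta_T\|_1\le\frac{\lambda\sqrt s}{n\kappa_\cc}\,a$, both producing the stated constants.

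For the sparsity bound, set $\hat T=\supp(\hat\eta)$, $\hat s=|\hat T|$. The first-order conditions give $|\nabla_j\Lambda(\hat\eta)|=\lambda/n$ for $j\in\hat T$, so $\sqrt{\hat s}\,\lambda/n=\|\nabla_{\hat T}\Lambda(\hat\eta)\|\le\|\nabla_{\hat T}\Lambda(\hat\eta)-\nabla_{\hat T}\Lambda(\eta_0)\|+\sqrt{\hat s}\,\lambda/(cn)$. Writing the difference as $\int_0^1\sup_{\|\gamma\|=1,\ \supp\gamma\subseteq\hat T}\En[\G'(u_i+tv_i)(\tilde x_i'\gamma)v_i]\,dt$, an application of Cauchy--Schwarz together with $\G'\le\tfrac14$ and the converse self-concordance bound $\G'(u_i+tv_i)\le w_i e^{|v_i|}\le e\,w_i$ (valid under $\max_{i\le n}\|\tilde x_i\|_\infty\|\hat\eta-\eta_0\|_1\le1$) bounds it by a constant times $\sqrt{\semax{\hat s}}\,a\lesssim\sqrt{\semax{\hat s}}\,(1+\tfrac1c)\frac{\lambda\sqrt s}{n\kappa_\cc}$, using the rate just obtained. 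Rearranging yields $\hat s\lesssim\cc^2 s\,\semax{\hat s}/\kappa_\cc^2$ with an explicit constant, and the sublinearity of $m\mapsto\semax m$ (the by-now standard argument as in \cite{BelloniChernozhukovHansen2011}) converts this into $\hat s\le36\cc^2 s\min_{m\in\mathcal M}\semax m/\kappa_\cc^2$ with $\mathcal M$ as stated.

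The main obstacle is the localization step in the middle paragraph: converting the global inequality $a\psi(r_\delta)\le(1+\tfrac1c)\frac{\lambda\sqrt s}{n\kappa_\cc}$ plus the side condition into the statement $r_\delta<1$, hence into the effective restricted strong convexity $D\ge a^2/3$. This is precisely where self-concordance is indispensable — the logistic Hessian $\En[\G'(\tilde x_i'\eta)\tilde x_i\tilde x_i']$ is not uniformly positive definite over $\RR^p$, so one cannot invoke a fixed strong-convexity constant — and it relies on the sharp monotonicity properties of $\psi$ and of $r\mapsto r\psi(r)$ and on keeping the constants in the basic inequality and in the side condition aligned (the factor $3$). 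Everything else (the cone condition, the restricted-eigenvalue reductions, the Cauchy--Schwarz bookkeeping, and the sublinearity of sparse eigenvalues) is routine.
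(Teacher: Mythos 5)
Your proposal is correct, and its skeleton --- basic inequality, cone condition, a self-concordance lower bound on the Bregman divergence $D=\Lambda(\hat\eta)-\Lambda(\eta_0)-\nabla\Lambda(\eta_0)'\delta$ controlled through the restricted nonlinear impact coefficient, then the sparsity bound via the first-order conditions and sparse eigenvalues --- is the same as the paper's. The one step you handle genuinely differently is the minoration/localization. The paper applies Bach's self-concordance inequality observation by observation with $M_i=|\tilde x_i'\delta|$, lower-bounds $e^{-t}+t-1\ge t^2/2-t^3/6$, averages to get $D\ge\frac12\En[w_i|\tilde x_i'\delta|^2]-\frac16\En[w_i|\tilde x_i'\delta|^3]$, and then runs a ``maximal radius'' convexity argument to obtain $D\ge\frac13a^2\wedge\frac{\bar q_{\Delta_\cc}}{3}a$, where $a=\|\sqrt{w_i}\tilde x_i'\delta\|_{2,n}$. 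You instead keep the exact representation $D=\int_0^1(1-t)\En[\G'(\tilde x_i'\eta_0+t\tilde x_i'\delta)(\tilde x_i'\delta)^2]\,dt$, apply Jensen under the measure proportional to $w_i(\tilde x_i'\delta)^2$ to get $D\ge a^2\psi(r_\delta)$ with $\psi(r)=(e^{-r}+r-1)/r^2$ and $r_\delta=a/q(\delta)$, and use the monotonicity of $\psi$ and of $r\mapsto r\psi(r)$ to deduce $r_\delta<1$ from the side condition, hence $\psi(r_\delta)>\psi(1)=e^{-1}>\tfrac13$. Both routes deliver the effective bound $D\ge a^2/3$ and the same constant $3$ in the prediction rate; yours dispenses with the separate radius-plus-convexity step and is pointwise slightly sharper (since $\psi(r)\ge\tfrac12-\tfrac{r}{6}$), at the price of the Jensen step and the two monotonicity verifications, which do hold ($\psi'<0$ and $(r\psi(r))'>0$ on $(0,\infty)$). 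The $\ell_1$, likelihood, and cone computations match the paper exactly. One small mismatch in the sparsity step: your crude bound $\G'(u_i+tv_i)\le e\,w_i$ gives $|\G(\tilde x_i'\hat\eta)-\G(\tilde x_i'\eta_0)|\le e\,w_i|\tilde x_i'\delta|$ and hence a leading constant $9e^2\approx 66$ rather than the stated $36$ in front of $\cc^2 s\,\semax{\hat s}/\kappa_\cc^2$ (and correspondingly $2\cdot 9e^2$ rather than $72$ in the definition of $\mathcal M$). Integrating $e^{t|\tilde x_i'\delta|}$ exactly gives $|\G(\tilde x_i'\hat\eta)-\G(\tilde x_i'\eta_0)|\le w_i(e^{|\tilde x_i'\delta|}-1)\le 2w_i|\tilde x_i'\delta|$ under $\max_{i\le n}\|\tilde x_i\|_\infty\|\hat\eta-\eta_0\|_1\le1$ --- this is precisely the paper's auxiliary inequality for the link function --- and recovers the factor $2$, hence the constants $36$ and $72$ as stated.
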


The extra growth condition required for identification is mild. For instance we typically have $\lambda \lesssim \sqrt{\log (n\vee p) / n}$ and, if the weights $w_i$ are bounded away from zero, for many designs of interest we have $\inf_{\delta\in\Delta_\cc} \|\tilde x_i'\delta\|_{2,n}^{3}/\En[|\tilde x_i'\delta|^3]$  bounded away from zero (see \cite{BC-SparseQR}). For more general designs and weights we have $$\inf_{\delta\in\Delta_\cc} \frac{\|\sqrt{w_i}\tilde x_i'\delta\|_{2,n}^{3}}{\En[w_i|\tilde x_i'\delta|^3]}\geq \inf_{\delta\in\Delta_\cc} \frac{\|\sqrt{w_i}\tilde x_i'\delta\|_{2,n}}{\max_{i\leq n}\|\tilde x_i\|_\infty \|\delta\|_1} \geq \frac{\kappa_\cc}{\sqrt{s}(1+\cc)\max_{i\leq n}\|\tilde x_i\|_\infty}$$ which implies the extra growth condition under $K_1^2 s^2 \log (p\vee n) \leq \delta_n \kappa_\cc^2 n$. Under the condition that $s \ell_n$-sparse eigenvalues are bounded away from zero and from above for some $\ell_n\to\infty$, it follows that $Cs$ belongs to $\mathcal{M}$ for $n$ large enough so that $|\supp(\hat\eta)|\lesssim s$ under the conditions above.

In order to alleviate the bias introduced by the $\ell_1$-penalty, we can consider the associated post-model selection estimates. Let $\widehat T^*$ denote a subset of covariates (selected arbitrarily) and define the associated post-model selection estimator
 \begin{equation}
 \label{def:App:postl1qr}
\widetilde \eta \in \arg\min_{\eta} \left\{ \Lambda(\eta)  : \eta_{j} = 0 \ \text{if} \ j \notin \widehat T^* \right\}.
 \end{equation} Typically $\widehat T^*$ can be taken as $\supp(\hat\eta)$. However, we can add additional variables through other procedures. (For example, in Step 1 we always include the treatment $d_i$; in Step 3  of the double selection procedure covariates selected in a different equation are included.) The following result characterizes the performance of the estimator in (\ref{def:App:postl1qr}).
\begin{lemma}[Estimation Error of Post-$\ell_1$-penalized Logistic Regression]\label{Lemma:PostLassoLogisticRate} Let $\widehat s^*= |\widehat T^*|$.  We have
 $$ \|\sqrt{w_i}\tilde x_i'(\tilde \eta - \eta_0)\|_{2,n} \leq \frac{3\sqrt{\hat s^*}\|\nabla \Lambda(\eta_0)\|_\infty}{\sqrt{\semin{\hat s+s}}} + 3\sqrt{\max\{0,\Lambda(\widetilde \eta) - \Lambda(\eta_0)\}} $$
 provided that $$\inf_{\|\delta\|_0\leq \hat s^*+s}\frac{\|\sqrt{w_i}\tilde x_i'\delta\|_{2,n}^3}{\|\sqrt{w_i}|\tilde x_i'\delta|^{3/2}\|_{2,n}^2} >6 \max\left\{\sqrt{\hat s^*+s}\frac{\|\nabla \Lambda(\eta_0)\|_\infty}{\sqrt{\semin{\hat s^*+s}}}, \ \ \sqrt{\max\{0,\Lambda(\widetilde \eta) - \Lambda(\eta_0)\}}\right\}.$$
\end{lemma}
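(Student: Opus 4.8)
The plan is to use the self-concordance of the logistic negative log-likelihood together with the restricted eigenvalue condition, and — crucially — only the \emph{feasibility} of $\widetilde\eta$, not its optimality (this is why $\Lambda(\widetilde\eta)-\Lambda(\eta_0)$ survives as a free term in the bound, to be controlled elsewhere). Throughout put $\delta:=\widetilde\eta-\eta_0$, $b:=\|\sqrt{w_i}\tilde x_i'\delta\|_{2,n}$ and $\tau:=\Lambda(\widetilde\eta)-\Lambda(\eta_0)$. Since $\supp(\widetilde\eta)\subseteq\widehat T^*$ and $\|\eta_0\|_0\leq s$, the direction $\delta$ is $(\hat s^*+s)$-sparse, which is the only structural input. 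I would (i) produce a lower bound for $\Lambda(\eta_0+\delta)-\Lambda(\eta_0)-\nabla\Lambda(\eta_0)'\delta$ that is ``quadratic-like'' in $b$, (ii) bound the linear term $\nabla\Lambda(\eta_0)'\delta$ via the sparse eigenvalue, and (iii) solve the resulting scalar inequality for $b$.

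For (i), set $q(t):=\Lambda(\eta_0+t\delta)$, so that $q''(t)=\En[\G'(\tilde x_i'\eta_0+t\,\tilde x_i'\delta)(\tilde x_i'\delta)^2]$. Because $\frac{d}{du}\log\G'(u)=1-2\G(u)\in[-1,1]$ for the logistic link, $\G'(u+v)\geq\G'(u)e^{-|v|}$, and since $w_i=\G'(\tilde x_i'\eta_0)$ this gives $q''(t)\geq\En[w_i e^{-t|\tilde x_i'\delta|}(\tilde x_i'\delta)^2]$. Applying Jensen's inequality for each $t$ to the convex map $x\mapsto e^{-tx}$ with the probability weights $w_i(\tilde x_i'\delta)^2/\En[w_j(\tilde x_j'\delta)^2]$, and noting $\sum_i\frac{w_i(\tilde x_i'\delta)^2}{\sum_j w_j(\tilde x_j'\delta)^2}|\tilde x_i'\delta|=\En[w_i|\tilde x_i'\delta|^3]/b^2=b/\bar q_\delta$ with $\bar q_\delta:=\|\sqrt{w_i}\tilde x_i'\delta\|_{2,n}^3/\|\sqrt{w_i}|\tilde x_i'\delta|^{3/2}\|_{2,n}^2$, gives $q''(t)\geq b^2e^{-tb/\bar q_\delta}$; integrating $q(1)-q(0)-q'(0)=\int_0^1(1-t)q''(t)\,dt$ then yields
$$
\tau \;=\; \big[q(1)-q(0)-q'(0)\big]+\nabla\Lambda(\eta_0)'\delta \;\geq\; b^2\,\psi(b/\bar q_\delta)+\nabla\Lambda(\eta_0)'\delta,
$$
where $\psi(a):=\int_0^1(1-t)e^{-ta}\,dt=(e^{-a}+a-1)/a^2$ is positive and strictly decreasing. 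This is precisely where the third-moment ratio of the side condition enters.

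For (ii)--(iii), bound $|\nabla\Lambda(\eta_0)'\delta|\leq\|\nabla\Lambda(\eta_0)\|_\infty\|\delta\|_1\leq\sqrt{\hat s^*+s}\,\|\nabla\Lambda(\eta_0)\|_\infty\|\delta\|\leq L\,b$ with $L:=\sqrt{\hat s^*+s}\,\|\nabla\Lambda(\eta_0)\|_\infty/\sqrt{\semin{\hat s^*+s}}$, using $\|\delta\|\leq b/\sqrt{\semin{\hat s^*+s}}$. Since $\delta$ is $(\hat s^*+s)$-sparse, $\bar q_\delta\geq\bar q$ (the infimum in the side condition), so $\tau+Lb\geq b^2\psi(b/\bar q)$. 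Two elementary facts then close the argument: $\psi(a)\geq 1/\{2(1+a)\}$ for all $a\geq 0$, and $\psi(1)=e^{-1}>1/3$. If $b>\bar q$ then $\psi(b/\bar q)>\bar q/(4b)$, so $\tau+Lb>b\bar q/4$, i.e. $b(\bar q/4-L)<\tau$; the side condition $\bar q>6\max\{L,\sqrt\tau\}$ forces $b<12\tau/\bar q<2\sqrt\tau<\bar q$, a contradiction. Hence $b\leq\bar q$, whence $\psi(b/\bar q)\geq\psi(1)>1/3$ and $\tau+Lb\geq b^2/3$, i.e. $b^2-3Lb-3\tau\leq 0$, so $b\leq\tfrac12\big(3L+\sqrt{9L^2+12\tau}\big)\leq 3L+3\sqrt\tau$, which is the asserted bound on $\|\sqrt{w_i}\tilde x_i'(\widetilde\eta-\eta_0)\|_{2,n}$.

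I expect the main obstacle to be step (i) in exactly this form: obtaining the lower bound expressed through the ratio $\bar q_\delta$ rather than the cruder $\max_{i\leq n}|\tilde x_i'\delta|$ or $K_x\|\delta\|_1$, which is what turns the side condition into a ratio condition compatible with the growth condition $K_x^2 s^2\log(p\vee n)\lesssim n$; the Jensen step (using only convexity of the exponential, applied inside the integral) is the key trick. Everything afterward is one-dimensional and elementary; the residual care is the usual bookkeeping between weighted and unweighted sparse eigenvalues and tracking constants, in particular recovering the exact $\sqrt{\hat s^*}$ in the displayed constant via $\|\delta_{\widehat T^*}\|_1\leq\sqrt{\hat s^*}\|\delta\|$ together with smallness of $\|\eta_{0,(\widehat T^*)^c}\|_1$.
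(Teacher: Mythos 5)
Your proof is correct and follows essentially the same route as the paper: a self-concordance-based minoration of $\Lambda(\eta_0+\delta)-\Lambda(\eta_0)-\nabla\Lambda(\eta_0)'\delta$ expressed through the nonlinear impact coefficient $\bar q$ over $(\hat s^*+s)$-sparse directions, the bound $|\nabla\Lambda(\eta_0)'\delta|\leq\sqrt{\hat s^*+s}\,\|\nabla\Lambda(\eta_0)\|_\infty\,\|\delta\|$, a case analysis ruling out the "linear regime," and solving the resulting scalar quadratic inequality. The only difference is cosmetic: the paper applies Bach's self-concordance lemma observation-by-observation together with $e^{-t}+t-1\geq t^2/2-t^3/6$ and then averages, whereas you lower-bound the aggregated $q''(t)$ and apply Jensen across observations to get $q''(t)\geq b^2e^{-tb/\bar q_\delta}$ before integrating — a slightly sharper but equivalent derivation of the same minoration.
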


Lemma \ref{Lemma:PostLassoLogisticRate} provides the rate of convergence in the prediction norm for the post model selection estimator despite the possible imperfect model selection. The rates rely on the overall quality of the selected model and the overall number of components $\hat s^*$. Once again, based on the results in Lemma \ref{Lemma:LassoLogisticRate}, the extra growth condition required for identification is mild provided that $\supp(\hat\eta)\subset \widehat T^*$ and $\hat s^*$ is not much larger than $s$.

\begin{remark}\label{Comment:Alpha}
In Step 1 of the algorithms, we use $\ell_1$-penalized Logistic regression with $\tilde x_i = (d_i,x_i')'$, $\hat \delta :=\hat \eta-\eta_0= (\hat\alpha-\alpha_0,\hat \beta'-\beta_0')'$,  and we are interested on rates for $\| x_i'(\hat \beta - \beta_0)\|_{2,n}$ instead of $\|\tilde x_i'\hat \delta\|_{2,n}$. However, it follows that
$$\| x_i'(\hat \beta - \beta_0)\|_{2,n} \leq \|\tilde x_i'\hat\delta\|_{2,n} + |\hat \alpha - \alpha_0| \cdot \|d_i\|_{2,n}.$$
Since $s\geq 1$, without loss of generality we can assume the component associated with the treatment $d_i$ belongs to $T$ (at the cost of increasing the cardinality of $T$ by one which will not affect the rate of convergence). Therefore we have that
$$ |\hat \alpha - \alpha_0| \leq \|\hat \delta_T\| \leq \|\sqrt{w_i}\tilde x_i'\hat\delta\|_{2,n}/\kappa_\cc.$$
In most applications of interest $\|d_i\|_{2,n}$ and $1/\kappa_\cc$ are bounded from above with high probability. Similarly, in Step 1 of Algorithm 1 we have that the Post-$\ell_1$-Logistic estimator satisfies
$$\| x_i'(\widetilde \beta - \beta_0)\|_{2,n} \leq  \|\tilde x_i'\widetilde\delta \|_{2,n}\left( 1 + \|d_i\|_{2,n}/\sqrt{\semin{\hat s+s}}\right).$$
\end{remark}

\section{Auxiliary Inequalities}\label{Sec:AuxiliaryInequalities}

\begin{lemma}\label{cor:LoadingConvergence}
Let $X_i\in \RR^p $ be independent random variables and let $K=\Ep[\max_{i\leq n}\|X_i\|_\infty^k]$ for some $k\geq 1$.  Then we have
$$ \Ep[ \max_{1\leq j\leq p} |\En[ |X_{ij}|^k - \barEp[|X_{ij}|^k]]| \lesssim \frac{K \log p}{n} + \sqrt{\frac{K\log p}{n}\max_{j\leq p}\frac{1}{n}\sum_{i=1}^n\Ep[|X_{ij}|^k]}.$$
\end{lemma}

\begin{lemma}[Essentially in Theorem 3.6 of \cite{RudelsonVershynin2008}]\label{thm:RV34}
Let $X_i$, $i=1,\ldots,n$, be independent random vectors in $\RR^p$ be such that $\sqrt{\Ep[ \max_{1\leq i\leq n}\|X_i\|_\infty^2]} \leq K$. Let $$\delta_n:= 2\left( \bar C K \sqrt{k} \log(1+k) \sqrt{\log (p\vee n)} \sqrt{\log n}  \right)/\sqrt{n},$$ where $\bar C$ is the universal constant. Then,
$$ \Ep\left[ \sup_{\|\theta\|_0\leq k, \|\theta\| =1} \left| \En\[ (\theta'X_i)^2 - \Ep[(\theta'X_i)^2] \]\right|\right] \leq \delta_n^2 + \delta_n \sup_{\|\theta\|_0\leq k, \|\theta\| =1} \sqrt{\frac{1}{n}\sum_{i=1}^n\Ep[(\theta'X_i)^2]}. $$
\end{lemma}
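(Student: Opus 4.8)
The plan is to symmetrize, reduce the supremum over $k$-sparse directions to a restricted operator-norm estimate for a Rademacher-weighted sum of rank-one matrices, invoke the estimate behind Theorem~3.6 of \cite{RudelsonVershynin2008} conditionally on the data, and then close the argument with a self-bounding quadratic inequality. Throughout write $Q := \sup_{\|\theta\|_0\le k,\,\|\theta\|=1}\big|(\En-\barEp)[(\theta'X_i)^2]\big|$ for the quantity on the left-hand side and $M := \sup_{\|\theta\|_0\le k,\,\|\theta\|=1}\barEp[(\theta'X_i)^2]$, so that the target inequality reads $\Ep[Q]\le\delta_n^2+\delta_n\sqrt M$.

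First I would symmetrize, exactly as in the proof of Lemma~\ref{Lemma:ProcessL1Control}: by Lemma~6.3 of \cite{LedouxTalagrandBook}, $\Ep[Q]\le 2\,\Ep\big[\sup_{\|\theta\|_0\le k,\,\|\theta\|=1}|\En[\varepsilon_i(\theta'X_i)^2]|\big]$ with $(\varepsilon_i)$ i.i.d.\ Rademacher signs independent of $(X_i)$. Conditionally on $(X_i)_{i=1}^n$, and for a fixed support $S$ of size $k$, the inner supremum over $\theta$ supported on $S$ is the operator norm of the $k\times k$ matrix $\En[\varepsilon_i X_{iS}X_{iS}']$. Rudelson's non-commutative Khintchine / entropy estimate bounds its conditional expectation by a multiple of $\sqrt{\log(1+k)}\,\big(\max_{i\le n}\|X_{iS}\|\big)\,\|\En[X_{iS}X_{iS}']\|_{\mathrm{op}}^{1/2}/\sqrt n$; combining this with a union bound over the $\binom pk$ supports and the attendant concentration step — which is precisely the content of Theorem~3.6 of \cite{RudelsonVershynin2008} — and using $\max_{i\le n}\|X_{iS}\|\le\sqrt k\,\max_{i\le n}\|X_i\|_\infty$ together with $\|\En[X_{iS}X_{iS}']\|_{\mathrm{op}}\le\sup_{\|\theta\|_0\le k,\,\|\theta\|=1}\En[(\theta'X_i)^2]$, one obtains, for the universal constant $\bar C$ appearing in the definition of $\delta_n$,
$$\Ep_\varepsilon\Big[\sup_{\|\theta\|_0\le k,\,\|\theta\|=1}\big|\En[\varepsilon_i(\theta'X_i)^2]\big|\Big]\ \le\ \frac{\delta_n}{2K}\,\Big(\max_{i\le n}\|X_i\|_\infty\Big)\sqrt{\ \sup_{\|\theta\|_0\le k,\,\|\theta\|=1}\En[(\theta'X_i)^2]\ }.$$

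Next I would remove the random normalization. Since $\En[(\theta'X_i)^2]=(\En-\barEp)[(\theta'X_i)^2]+\barEp[(\theta'X_i)^2]\le Q+M$, taking the expectation over $(X_i)$ of the symmetrized bound and applying Cauchy--Schwarz together with the hypothesis $\Ep[\max_{i\le n}\|X_i\|_\infty^2]\le K^2$ gives
$$\Ep[Q]\ \le\ \frac{\delta_n}{K}\,\Ep\!\Big[\big(\max_{i\le n}\|X_i\|_\infty\big)\sqrt{Q+M}\Big]\ \le\ \frac{\delta_n}{K}\sqrt{\Ep\big[\max_{i\le n}\|X_i\|_\infty^2\big]}\ \sqrt{\Ep[Q]+M}\ \le\ \delta_n\sqrt{\Ep[Q]+M}.$$
Squaring yields $\Ep[Q]^2\le\delta_n^2\Ep[Q]+\delta_n^2 M$, and solving this quadratic inequality in $\Ep[Q]$ (and using $\sqrt{a+b}\le\sqrt a+\sqrt b$) gives $\Ep[Q]\le\tfrac12\delta_n^2+\tfrac12\sqrt{\delta_n^4+4\delta_n^2M}\le\delta_n^2+\delta_n\sqrt M$, which is the assertion.

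The main obstacle is the conditional operator-norm step, i.e.\ producing the displayed bound with the precise $\sqrt k\,\log(1+k)\,\sqrt{\log(p\vee n)}\,\sqrt{\log n}$ scaling. If one is content to cite Theorem~3.6 of \cite{RudelsonVershynin2008}, this is essentially immediate after the bookkeeping above. Reproving it from scratch would require Rudelson's chaining estimate for the expected operator norm of $\En[\varepsilon_i X_{iS}X_{iS}']$ (which produces the $\log(1+k)$ dimensional factor), a union bound over the $\binom pk$ sparse coordinate subspaces controlled via a sub-exponential concentration inequality (the source of the $\sqrt{\log(p\vee n)}$ factor), and a truncation argument to pass from the weak moment bound $\sqrt{\Ep[\max_{i\le n}\|X_i\|_\infty^2]}\le K$ to a workable deterministic bound on each summand (the source of the $\sqrt{\log n}$ factor). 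Everything else — the symmetrization, the Cauchy--Schwarz step, and the final quadratic inequality — is routine.
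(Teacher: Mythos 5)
Your proposal is correct and matches the paper's treatment: the paper offers no proof of this lemma, deferring entirely to Theorem~3.6 of \cite{RudelsonVershynin2008}, and you likewise defer the hard chaining/entropy step (the conditional operator-norm bound over sparse supports with the $\sqrt{k}\log(1+k)\sqrt{\log(p\vee n)}\sqrt{\log n}$ scaling) to that theorem while correctly supplying the surrounding bookkeeping — symmetrization, the Cauchy--Schwarz step that converts the realized $\max_{i\le n}\|X_i\|_\infty$ into the moment bound $K$, and the self-bounding quadratic inequality yielding $\delta_n^2+\delta_n\sqrt{M}$. This is exactly the reduction implicit in the paper's qualifier ``essentially,'' so nothing further is needed.
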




Consider an empirical process $\mathbb{G}_n(f) = n^{-1/2} \sum_{i=1}^n \{f(Z_i) - \Ep[f(Z_i)]\}$ indexed by $\mathcal{F}$, a class of pointwise measurable functions (see \cite{vdV-W} Chapter 2.3) and assume that $0 \in \mathcal{F}$. The random empirical measure for an underlying independent data sequence $\{Z_i,i=1,\ldots,n\}$ is denoted by $\mathbb{P}_n$.

\begin{lemma}\label{Lemma:ProcessL1Control}
For the random process $h_i$ indexed by $\mathcal{T}\subset \RR^{\tilde p}$ and random vector $W_i \in \RR^{\tilde p}$, independent across $i=1,\ldots,n$, let $|h_i(t)|\leq |t'W_i|$, $\bar\sigma^2 := \sup_{t \in \mathcal{T}} \frac{1}{n}\sum_{i=1}^n\Ep[h_i(t)^2]$,  and $\|\mathcal{T}\|_1 = \sup_{t \in \mathcal{T}} \|t\|_1$. Provided that $K^2\|\mathcal{T}\|_1M/4 \geq \bar\sigma^2$, we have
$$ \Ep\left[ \sup_{t \in \mathcal{T}}|\En[ h_i(t)-\Ep[h_i(t)] ]|\right] \leq 4\|\mathcal{T}\|_1 \Ep\left[ \|\En[ \varepsilon_i W_i ]\|_\infty\right] \ \ \mbox{and} $$
$$P\left( \sup_{t \in \mathcal{T}}|\En[\{h_i(t)-\Ep[h_i(t)]\}]|> \frac{K \|\mathcal{T}\|_1 \sqrt{M}}{\sqrt{n}}\right)  \leq   32\tilde p\exp\left(\frac{-K^2}{ 4(16)^2}\right)
 + P\left( \max_{j\leq \tilde p}\En[W_{ij}^2] > M \right).
$$ where $\epsilon_i$, are independent Radamacher random variables.
\end{lemma}
\begin{proof}
To establish the first relation, by symmetrization for expectation Lemma 6.3 in \cite{LedouxTalagrandBook}
 $$ \Ep\left[ \sup_{t \in \mathcal{T}}|(\En-\barEp)[ h_i(t) ]|\right] \leq 2\Ep\left[ \sup_{t \in \mathcal{T}}|\En[\varepsilon_i h_i(t) ]|\right] $$ and Contraction principle Lemma 4.12 in \cite{LedouxTalagrandBook} we have
$$ \Ep\left[ \sup_{t \in \mathcal{T}}|(\En-\barEp)[ h_i(t) ]|\right] \leq 4\Ep\left[ \sup_{t \in \mathcal{T}}|\En[ \varepsilon_i t'W_i ]|\right] \leq 4\sup_{t \in \mathcal{T}}\|t\|_1 \Ep\left[ \|\En[ \varepsilon_i W_i ]\|_\infty\right].$$

To establish the second relation, let $\widetilde K = K\|\mathcal{T}\|_1\sqrt{M}/2$.
By Lemma 2.3.7 in \cite{vdVaartWellner2007}, symmetrization for probabilities, we have
$$ P\left( \sup_{t \in \mathcal{T}}|\Gn( h_i(t) )| > \widetilde K \right) \leq \frac{2}{1-(\bar\sigma^2/\widetilde K^2)}P\left( \sup_{t \in \mathcal{T}}|\Gn( \varepsilon_i h_i(t) )| > \widetilde K/4 \right) \leq 4P\left( \sup_{t \in \mathcal{T}}|\Gn( \varepsilon_i h_i(t) )| > \widetilde K/4 \right)$$ since ${\rm var}(\Gn( h_i(t)))\leq \barEp[h_i(t)^2] \leq \bar\sigma^2\leq \widetilde K^2/4$.
Moreover, letting $\mathcal{W}_M = \{ \max_{j\leq \tilde p}\En[W_{ij}^2] \leq M \}$ we have
$$ P\left( \sup_{t \in \mathcal{T}}|\Gn( \varepsilon_i h_i(t) )| > \widetilde K/4 \right) \leq  P\left( \sup_{t \in \mathcal{T}}|\Gn( \varepsilon_i h_i(t) )| > \widetilde K/4 \mid \mathcal{W}_M \right) + P(\mathcal{W}_M^c)$$
Conditional on $\{W_i\}_{i=1}^n$, also using Contraction principle Lemma 4.12 in \cite{LedouxTalagrandBook} we have
$$ \begin{array}{rl}
\Ep[ \exp( \psi \sup_{t \in \mathcal{T}}|\Gn( \varepsilon_i h_i(t) )| ) ] & \leq \Ep[ \exp( 4\psi \sup_{t \in \mathcal{T}}\|t\|_1\|\Gn( \varepsilon_i W_i )\|_\infty ) ] \\
& \leq \tilde p \cdot {\displaystyle \max_{j\leq \tilde p)}} \Ep[ \exp\{ 4\psi \sup_{t \in \mathcal{T}}\|t\|_1|\Gn( \varepsilon_i W_{ij} ) | \} ] \\
& \leq 2\tilde p \cdot {\displaystyle \exp( 8\psi^2 \sup_{t \in \mathcal{T}}\|t\|_1^2 \max_{j\leq \tilde p}}\En[W_{ij}^2] ) \\
\end{array}$$ Since we have that $P(X > \frac{1}{4}\widetilde K)\leq \min_{\psi\geq 0} \exp(-\psi \frac{1}{4}\widetilde K)\Ep[\exp(\psi X)]$, by choosing the parameter $\psi$ as $\psi = \frac{1}{4}\widetilde K/\{ 16 \|\mathcal{T}\|_1^2 \max_{j\leq \tilde p}\En[W_{ij}^2] \}$ it follows
$$P_\varepsilon\left( \sup_{t \in \mathcal{T}}|\Gn( \varepsilon_i h_i(t) )| > \widetilde K/4 \mid h_i,W_i \right) \leq 8\tilde p\exp( -\widetilde K^2/\{(16)^2\sup_{t \in \mathcal{T}}\|t\|_1^2 \max_{j\leq \tilde p}\En[W_{ij}^2]\}) $$

The result follows by taking the expectation over $(h_i,W_i)\in \mathcal{W}_M=\{ \max_{j\leq \tilde p}\En[W_{ij}^2] \leq M \}$.
\end{proof}


\bibliographystyle{plain}
\bibliography{mybibSparseLogistic}

\pagebreak

\setcounter{page}{1}

\begin{center}
{\sc \LARGE Supplementary Appendix for \\ ``Post-Selection Inference for Generalized Linear Models with Many Controls"}\end{center}

\vspace{0.5cm}

\section{Technical Results and Proofs for Logistic Regression}\label{Sec:ResultsLogistic}

In this section our goal is to establish sparsity and rates of convergence of the Post-Lasso Logistic estimator. Both of these properties  require us to also revisit the analysis of the $\ell_1$-penalize logistic regression (Lasso-Logistic) estimator.  In what follows we use a more compact notation, specifically $\eta = (\alpha,\beta)$, $\tilde x_i = (d_i,x_i')'$, $\eta_0 = (\alpha_0,\beta_0')'$. Thus the Lasso-Logistic estimator is defined as any vector $\hat\eta$ such that
\begin{equation}\label{Eq:LassoLogstic}
\hat \eta \in \arg\min_{\eta} \Lambda(\eta)+\frac{\lambda}{n}\|\eta\|_1.\end{equation}
We will also consider the post-model selection Logistic estimator associated with a support $\hat T^* \subset \{ 1,\ldots,p\}$ defined as
\begin{equation}\label{Eq:PostLassoLogstic}
\widetilde \eta \in \arg\min_{\eta} \Lambda(\eta) \ \ : \ \ \supp(\eta)\subseteq \hat T^*.\end{equation}

\subsection{Design conditions and Relations}
Next we collect relevant quantities associated with the design matrix $\En[\tilde x_i\tilde x_i']$ and the weighted counterpart $\En[w_i\tilde x_i\tilde x_i']$ where $w_i = \G_i(1-\G_i) \in [0,1]$, $\G_i= \G(\tilde x_i'\eta_0)$, $i=1,\ldots,n$, is the conditional variance of the outcome variable $y_i$. The non-weighted quantities are well studied in the literature (namely restricted eigenvalue, minimum and maximal sparse eigenvalues).

\begin{definition} For $T=\supp(\eta_0)$, $|T|\geq 1$, the (logistic) restricted eigenvalue is defined as $$ \kappa_\cc := \min_{\|\delta_{T^c}\|_1 \leq \cc\|\delta_T\|_1} \frac{\|\sqrt{w_i}\tilde x_i'\delta\|_{2,n}}{\|\delta_T\|}$$
\end{definition}


\begin{definition} For a subset $A\subset \RR^p$ let the non-linear impact coefficient be defined as
$$ \bar q_A = \inf_{ \delta \in A} \En\[w_i|\tilde x_i'\delta|^2\]^{3/2}/ \ \En\[w_i|\tilde x_i'\delta|^3\].$$
In this work we will apply this for $A = \Delta_\cc$ and $A = \{ \delta \in \RR^p : \|\delta\|_0 \leq Cs \}$.
\end{definition}

The definitions above differ from their counterpart in the analysis of $\ell_1$-penalized least squares estimators by the weighting $0\leq w_i\leq 1$. Thus it will be relevant to understand their relations through the quantities $$\psi_{(r)}(\cc) := \min_{\|\delta_{T^c}\|_1 \leq \cc\|\delta_T\|_1} \frac{\|\sqrt{w_i}\tilde x_i'\delta\|_{2,n}}{\|\tilde x_i'\delta\|_{2,n}} \ \ \mbox{and} \ \ \psi_{(s)}(m) := \min_{1\leq \|\delta\|_0 \leq m} \frac{\|\sqrt{w_i}\tilde x_i'\delta\|_{2,n}}{\|\tilde x_i'\delta\|_{2,n}}$$

Lemma \ref{Lemma:RelatingWeighted} provides three relationships between the weighted versions and the non-weighted versions. Neither dominates the other. Most papers in the literature focus on the first pair of relations which entails assuming that $\min_{i\leq n} w_i$ is bounded away from zero uniformly in $n$. The second and third pairs of relations  allow for better control in the presence of a few small weights. The second pair states that if the average harmonic mean of the weights is bounded the ratio between the weighted and non-weighted quantities is controlled by the intrinsic sparsity.

\begin{lemma}[Relating weighted and non-weighted design quantities]\label{Lemma:RelatingWeighted}
Letting $w_i = \G_i(1-\G_i)$ we have the following inequalities $\psi_{(r)}(\cc) \geq \min_{i\leq n} \sqrt{w_i} \ \ \ \mbox{and} \ \ \ \psi_{(s)}(m) \geq  \min_{i\leq n} \sqrt{w_i}; $
$$\psi_{(r)}(\cc) \geq \frac{\kappa_\cc^u\{\En[1/w_i]\}^{-1/2}}{\sqrt{s}(1+\cc)\max_{i\leq n}\|\tilde x_i\|_\infty} \ \ \ \mbox{and} \ \ \ \psi_{(s)}(m) \geq \frac{\sqrt{\semin{m}}\{\En[1/w_i]\}^{-1/2}}{\sqrt{m}\max_{i\leq n}\|\tilde x_i\|_\infty}. $$
where $\kappa_\cc^u$ is the original (non-weighted) restricted eigenvalue. Moreover, for any $\epsilon \in (0,1]$ we have
$$\psi_{(r)}(\cc) \geq \sqrt{\epsilon}\kappa_\cc^u \left\{ 1- \En[ 1\{w_i\leq \epsilon\}]\frac{s(1+\cc)^2 \max_{i\leq n}\|\tilde x_i\|_\infty^2}{\kappa_\cc^{u2}} \right\}^{1/2} \ \ \ \mbox{and} $$ $$ \psi_{(s)}(m) \geq \sqrt{\epsilon}\sqrt{\semin{m}}\left\{ 1- \En[ 1\{w_i\leq \epsilon\}]\frac{m \max_{i\leq n}\|\tilde x_i\|_\infty^2}{\semin{m}}\right\}^{1/2}. $$
\end{lemma}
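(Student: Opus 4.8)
The plan is to prove each of the three pairs of inequalities by purely algebraic manipulation, exploiting the fact that $0 \le w_i \le 1$ (indeed $w_i = \G_i(1-\G_i) \le 1/4$) and the definitions of the weighted and non-weighted design quantities. Throughout, fix a feasible $\delta$ (either $\delta \in \Delta_\cc$, i.e. $\|\delta_{T^c}\|_1 \le \cc\|\delta_T\|_1$, or $\|\delta\|_0 \le m$) and compare $\|\sqrt{w_i}\tilde x_i'\delta\|_{2,n}^2 = \En[w_i (\tilde x_i'\delta)^2]$ with $\|\tilde x_i'\delta\|_{2,n}^2 = \En[(\tilde x_i'\delta)^2]$; taking infima at the end gives the claimed bounds on $\psi_{(r)}$ and $\psi_{(s)}$. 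The first pair is immediate: $\En[w_i(\tilde x_i'\delta)^2] \ge (\min_{i\le n} w_i)\En[(\tilde x_i'\delta)^2]$, so $\psi_{(r)}(\cc), \psi_{(s)}(m) \ge \min_{i\le n}\sqrt{w_i}$.

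For the second pair I would use the Cauchy–Schwarz inequality in the form
$$
\En[(\tilde x_i'\delta)^2] = \En\!\left[\frac{1}{\sqrt{w_i}}\cdot \sqrt{w_i}\,(\tilde x_i'\delta)^2\right]
\le \left(\En\!\left[\frac{1}{w_i}\right]\right)^{1/2}\left(\En[w_i(\tilde x_i'\delta)^4]\right)^{1/2},
$$
wait — better is to bound instead via $\En[(\tilde x_i'\delta)^2] \le \|\tilde x_i'\delta\|_{\infty,n}\cdot \En[|\tilde x_i'\delta|]$ and then $|\tilde x_i'\delta| \le \|\tilde x_i\|_\infty \|\delta\|_1$, while writing $\En[|\tilde x_i'\delta|] = \En[w_i^{-1/2}\cdot w_i^{1/2}|\tilde x_i'\delta|] \le (\En[1/w_i])^{1/2}(\En[w_i(\tilde x_i'\delta)^2])^{1/2}$ by Cauchy–Schwarz. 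Combining, $\En[(\tilde x_i'\delta)^2] \le \max_{i\le n}\|\tilde x_i\|_\infty \|\delta\|_1 (\En[1/w_i])^{1/2}\|\sqrt{w_i}\tilde x_i'\delta\|_{2,n}$. On $\Delta_\cc$ one has $\|\delta\|_1 \le (1+\cc)\|\delta_T\|_1 \le (1+\cc)\sqrt{s}\|\delta_T\| \le (1+\cc)\sqrt{s}\|\tilde x_i'\delta\|_{2,n}/\kappa_\cc^u$; substituting and solving the resulting inequality $\|\tilde x_i'\delta\|_{2,n}^2 \le (\text{const})\|\tilde x_i'\delta\|_{2,n}\|\sqrt{w_i}\tilde x_i'\delta\|_{2,n}$ for the ratio yields the bound on $\psi_{(r)}$. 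The sparse-eigenvalue version is identical with $\|\delta\|_1 \le \sqrt{m}\|\delta\| \le \sqrt{m}\|\tilde x_i'\delta\|_{2,n}/\sqrt{\semin{m}}$.

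For the third pair, split the empirical average over $\{i : w_i > \epsilon\}$ and $\{i : w_i \le \epsilon\}$:
$$
\En[w_i(\tilde x_i'\delta)^2] \ge \epsilon\,\En[(\tilde x_i'\delta)^2] - \epsilon\,\En[\mathbf 1\{w_i \le \epsilon\}(\tilde x_i'\delta)^2],
$$
and bound the subtracted term using $(\tilde x_i'\delta)^2 \le \|\tilde x_i\|_\infty^2\|\delta\|_1^2$ together with the same $\ell_1$-to-prediction-norm estimates as above (so $\|\delta\|_1^2 \le s(1+\cc)^2\|\tilde x_i'\delta\|_{2,n}^2/\kappa_\cc^{u2}$ on $\Delta_\cc$, resp. $\le m\|\tilde x_i'\delta\|_{2,n}^2/\semin{m}$ on the sparse set). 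This gives $\En[w_i(\tilde x_i'\delta)^2] \ge \epsilon\,\kappa_\cc^{u2}\|\delta_T\|^2 - \epsilon\,\En[\mathbf 1\{w_i\le\epsilon\}]\cdot\{s(1+\cc)^2\max_i\|\tilde x_i\|_\infty^2/\kappa_\cc^{u2}\}\|\tilde x_i'\delta\|_{2,n}^2$, and after using $\|\tilde x_i'\delta\|_{2,n}^2 \ge \kappa_\cc^{u2}\|\delta_T\|^2$ once more to factor, dividing by $\|\delta_T\|^2$ (resp. $\|\tilde x_i'\delta\|_{2,n}^2$ by $\|\tilde x_i'\delta\|_{2,n}^2$ in the sparse case — there one divides directly by $\En[(\tilde x_i'\delta)^2]$ and uses $\En[(\tilde x_i'\delta)^2]\ge\semin{m}\|\delta\|^2$) produces the stated square-root expressions. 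I do not anticipate a serious obstacle; the only care needed is bookkeeping of which norm ($\|\delta_T\|$ versus $\|\tilde x_i'\delta\|_{2,n}$ versus $\|\delta\|$) is used as the denominator so that the constants match the restricted-eigenvalue definition in the restricted case and the sparse-eigenvalue definition in the sparse case — in particular keeping $\kappa_\cc^u$ (the non-weighted RE constant) rather than $\kappa_\cc$ throughout, and noting these bounds are genuinely incomparable, as the statement emphasizes.
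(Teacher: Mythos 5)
Your proposal is correct and follows essentially the same route as the paper: the trivial bound via $\min_{i\leq n} w_i$ for the first pair, the Cauchy--Schwarz splitting $|\tilde x_i'\delta| = w_i^{-1/2}\cdot w_i^{1/2}|\tilde x_i'\delta|$ combined with the $\ell_1$-to-prediction-norm conversion through $\kappa_\cc^u$ (resp.\ $\semin{m}$) for the second pair, and the truncation of the empirical average at level $\epsilon$ with the same $\ell_1$ bound on the truncated piece for the third. The only caveat is the denominator bookkeeping you yourself flag at the end (dividing by $\|\tilde x_i'\delta\|_{2,n}^2$ rather than $\|\delta_T\|^2$ is what matches the definition of $\psi_{(r)}$), which is the same point the paper's own proof passes over with ``the result follows.''
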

\begin{proof}
The first pair of bounds is trivial since $w_i \geq 0$. To show the second pair we have
$$
\begin{array}{rl}
\En[|\tilde x_i'\delta|^2] & = \En[\sqrt{w_i}|\tilde x_i'\delta|\cdot|\tilde x_i'\delta|/\sqrt{w_i}]\\
& \leq \{\En[w_i|\tilde x_i'\delta|^2]\}^{1/2}\cdot\{\En[|\tilde x_i'\delta|^2/w_i]\}^{1/2}\\
& \leq \{\En[w_i|\tilde x_i'\delta|^2]\}^{1/2}\cdot\{\En[1/w_i]\}^{1/2}\|\delta\|_1\max_{i\leq n}\|\tilde x_i\|_\infty\\
\end{array}
$$ Therefore, for $\vartheta_\delta = \|\tilde x_i'\delta\|_{2,n}/\|\delta\|_1$  we have
$$ \begin{array}{rl}
\frac{\|\sqrt{w_i}\tilde x_i'\delta\|_{2,n}}{\|\tilde x_i'\delta\|_{2,n}} & \geq \frac{\|\sqrt{w_i}\tilde x_i'\delta\|_{2,n}}{\|\sqrt{w_i}\tilde x_i'\delta\|_{2,n}^{1/2}\cdot\{\En[1/w_i]\}^{1/4}\|\delta\|_1^{1/2}\max_{i\leq n}\|\tilde x_i\|_\infty^{1/2}} \\
& = \frac{\|\sqrt{w_i}\tilde x_i'\delta\|_{2,n}^{1/2}}{\|\tilde x_i'\delta\|_{2,n}^{1/2}}\frac{\vartheta_\delta^{1/2}}{\max_{i\leq n}\|\tilde x_i\|_\infty^{1/2}}\frac{1}{\{\En[1/w_i]\}^{1/4}} \\
\end{array}$$
By cancelling out $\|\sqrt{w_i}\tilde x_i'\delta\|_{2,n}^{1/2}/\|\tilde x_i'\delta\|_{2,n}^{1/2}$ and squaring both sides we have
$$ \begin{array}{rl}
\frac{\|\sqrt{w_i}\tilde x_i'\delta\|_{2,n}}{\|\tilde x_i'\delta\|_{2,n}} & \geq \vartheta_{\delta} / \max_{i\leq n}\|\tilde x_i\|_\infty.\end{array}$$
The result follows by noting that for $\delta \in \Delta_\cc$ we have $\vartheta_\delta \geq \kappa_\cc^u / \{ (1+\cc)\sqrt{s}\}$ and for any non-zero $\delta$ with $\|\delta\|_0\leq m$ we have $\vartheta_\delta \geq \sqrt{\semin{m}}/\sqrt{m}$.

The third pair follows from noting that
$$\En[w_i|\tilde x_i'\delta|^2] = \En[w_i1\{w_i> \epsilon\}|\tilde x_i'\delta|^2] + \En[w_i1\{w_i \leq \epsilon\}|\tilde x_i'\delta|^2] \geq \epsilon \En[|\tilde x_i'\delta|^2]-\epsilon \En[1\{w_i \leq \epsilon\}|\tilde x_i'\delta|^2]$$
Moreover, by definition of $\vartheta_\delta$ we have $$\En[1\{w_i \leq \epsilon\}|\tilde x_i'\delta|^2] \leq \En[1\{w_i \leq \epsilon\}]\max_{i\leq n}\|\tilde x_i\|_\infty^2\|\delta\|_1^2 \leq \En[1\{w_i \leq \epsilon\}]\max_{i\leq n}\|\tilde x_i\|_\infty^2\frac{\|\tilde x_i'\delta\|_{2,n}^2}{\vartheta_\delta^2}. $$ The result follows.
\end{proof}

\subsection{Identification Lemmas}

In this section we collect new identification results for Logistic regression that might be of independent interest. We build upon the following technical lemma of \cite{Bach2010} which is based on (modified) self-concordant functions. However we will apply it differently than in \cite{Bach2010}. We exploit the separability of the objective function across observations and  make use of  the restricted non-linear impact coefficient \cite{BC-SparseQR}. In turn this allows us to weaken requirements of the analysis when compared to the literature.

\begin{lemma}[Lemma 1 from \cite{Bach2010}]\label{Lemma:SC}
Let $g:\RR \to \RR$ be a convex three times differentiable function such that for all $t \in \RR$, $|g'''(t)| \leq M g''(t)$ for some $M\geq 0$. Then, for all $t \geq 0$ we have
$$ \frac{g''(0)}{M^2} \left\{ \exp(-Mt) + Mt - 1\right\} \leq g(t) - g(0) - g'(0)t \leq \frac{g''(0)}{M^2} \left\{ \exp(Mt) + Mt - 1\right\}.$$
\end{lemma}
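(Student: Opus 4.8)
The plan is to reduce to the non-degenerate case and then bound $g''$ pointwise by exponentials, integrating the resulting estimate twice. First I would dispose of the two degenerate cases. If $M=0$ the hypothesis forces $g'''\equiv 0$, so $g$ is the quadratic $g(0)+g'(0)t+\frac{1}{2}g''(0)t^2$ and both outer expressions are to be understood as their common limit $\frac{1}{2}g''(0)t^2$ as $M\downarrow 0$. If $g''(0)=0$, then convexity gives $g''\ge 0$, the hypothesis reads $|(g'')'|\le M g''$, and Gr\"onwall's inequality forces $g''\equiv 0$ on $[0,\infty)$, so $g$ is affine there and all three quantities vanish. So from now on assume $M>0$ and $g''(0)>0$.

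The key step is the pointwise estimate
\[
g''(0)\,e^{-Mt}\;\le\;g''(t)\;\le\;g''(0)\,e^{Mt},\qquad t\ge 0.
\]
To establish it I would first show $g''>0$ on all of $[0,\infty)$: the set $S=\{t\ge 0: g''(t)>0\}$ is open by continuity, contains $0$, and is relatively closed because, whenever $g''>0$ on $[0,\tau)$, the function $\log g''$ is differentiable there with $\bigl|(\log g'')'\bigr|=\bigl|g'''/g''\bigr|\le M$, hence $g''(s)\ge g''(0)e^{-Ms}\ge g''(0)e^{-M\tau}>0$ on $[0,\tau)$ and, by continuity, at $\tau$ as well; thus $S=[0,\infty)$. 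Consequently $\log g''$ is $M$-Lipschitz on $[0,\infty)$, which is exactly the displayed two-sided bound.

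Finally I would integrate the displayed inequality twice. Using $g'(t)-g'(0)=\int_0^t g''(s)\,ds$ and monotonicity of the integral,
\[
\frac{g''(0)}{M}\bigl(1-e^{-Mt}\bigr)\;\le\;g'(t)-g'(0)\;\le\;\frac{g''(0)}{M}\bigl(e^{Mt}-1\bigr),
\]
and then integrating once more and using $g(t)-g(0)-g'(0)t=\int_0^t\bigl(g'(s)-g'(0)\bigr)\,ds$ gives
\[
\frac{g''(0)}{M^2}\bigl(e^{-Mt}+Mt-1\bigr)\;\le\;g(t)-g(0)-g'(0)t\;\le\;\frac{g''(0)}{M^2}\bigl(e^{Mt}-Mt-1\bigr),
\]
which yields the asserted bounds (the stated upper bound, with $+Mt$ in place of $-Mt$, follows a fortiori since $e^{Mt}-Mt-1\le e^{Mt}+Mt-1$ for $t\ge 0$). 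The only step that needs genuine care is the continuation argument guaranteeing $g''>0$ throughout $[0,\infty)$, so that $\log g''$ and the differential inequality $\bigl|(\log g'')'\bigr|\le M$ are legitimate; once that is in hand, the remainder is two routine integrations together with an elementary comparison.
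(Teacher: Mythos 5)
The paper does not prove this lemma itself -- it is imported verbatim as Lemma~1 of \cite{Bach2010} -- and your argument is correct and coincides with the standard self-concordance proof used there: bound $\bigl|(\log g'')'\bigr|\le M$ to get $g''(0)e^{-Mt}\le g''(t)\le g''(0)e^{Mt}$, then integrate twice, with your continuation argument supplying the (usually glossed-over) justification that $g''>0$ persists so that $\log g''$ is legitimate. Your observation that the bound you actually derive has $e^{Mt}-Mt-1$ on the right, which dominates the paper's stated $e^{Mt}+Mt-1$ only after the elementary comparison, is also correct.
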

\begin{lemma}\label{Lemma:Auxtis} For $t\geq 0$ we have
$ \exp(-t)+t - 1 \geq \frac{1}{2}t^2 - \frac{1}{6}t^3.$
\end{lemma}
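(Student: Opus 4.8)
The plan is to reduce the claim to a one-variable calculus fact by introducing the auxiliary function
\[
\varphi(t) \;:=\; e^{-t}+t-1-\tfrac12 t^2+\tfrac16 t^3,
\]
and showing $\varphi(t)\ge 0$ for all $t\ge 0$. First I would record $\varphi(0)=0$, so it suffices to show $\varphi$ is nondecreasing on $[0,\infty)$, i.e.\ $\varphi'(t)\ge 0$. Differentiating,
\[
\varphi'(t) \;=\; -e^{-t}+1-t+\tfrac12 t^2 .
\]
Again $\varphi'(0)=0$, so I would push the argument one derivative further.

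Next I would compute
\[
\varphi''(t) \;=\; e^{-t}-1+t, \qquad \varphi'''(t) \;=\; -e^{-t}+1 .
\]
For $t\ge 0$ we have $e^{-t}\le 1$, hence $\varphi'''(t)=1-e^{-t}\ge 0$; since $\varphi''(0)=0$ this gives $\varphi''(t)\ge 0$ on $[0,\infty)$. Then, since $\varphi'(0)=0$ and $\varphi'$ has nonnegative derivative on $[0,\infty)$, we get $\varphi'(t)\ge 0$ there; and finally, since $\varphi(0)=0$ and $\varphi'\ge 0$, we conclude $\varphi(t)\ge 0$ for all $t\ge 0$, which is exactly the asserted inequality.

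There is really no hard part here; the only thing to be careful about is that every step is an ``anchored monotonicity'' argument (value zero at $t=0$ plus sign of the derivative), applied iteratively at the level of $\varphi$, $\varphi'$, $\varphi''$, and the base case is the trivial bound $e^{-t}\le 1$ for $t\ge 0$. Alternatively, one could instead invoke Lemma~\ref{Lemma:SC} with $g$ chosen so that $g''(0)=1$, $M=1$, to get the left-hand bound $e^{-t}+t-1\le g(t)-g(0)-g'(0)t$ in the form needed, but the direct three-fold differentiation above is cleaner and self-contained, so that is the route I would take.
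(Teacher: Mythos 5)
Your proof is correct and takes essentially the same route as the paper: both define the difference function $\varphi(t)=e^{-t}+t-1-\tfrac12 t^2+\tfrac16 t^3$, check that it and its first derivative vanish at $t=0$, and conclude from $\varphi''(t)=e^{-t}+t-1\ge 0$. The only cosmetic difference is that the paper takes the standard inequality $e^{-t}\ge 1-t$ for granted and finishes by convexity (tangent line at $0$), whereas you derive $\varphi''\ge 0$ from $\varphi'''\ge 0$ and iterate the anchored-monotonicity step once more; both are fine.
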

\begin{proof}[Proof of Lemma \ref{Lemma:Auxtis}]
For $t\geq 0$, consider the function $f(t) = \exp(-t) + t^3/6 - t^2/2 + t - 1$. The statement is equivalent to $f(t)\geq 0$ for $t\geq 0$. It follows that $f(0)=0$, $f'(0) = 0$, and $f''(t)=\exp(-t)+t-1\geq 0$ so that $f$ is convex. Therefore $f(t) \geq f(0)+tf'(0) = 0$.
\end{proof}

\begin{lemma}[Minoration Lemma]\label{Lemma:Minoration}
We have that
$$\Lambda(\eta_0 + \delta) - \Lambda(\eta_0) -\nabla \Lambda(\eta_0)'\delta \geq \left\{\mbox{$\frac{1}{3}$}\|\sqrt{w_i}\tilde x_i'\delta\|_{2,n}^2\right\} \wedge \left\{ \frac{\bar q_A}{3}\|\sqrt{w_i}\tilde x_i'\delta\|_{2,n}\right\} $$
\end{lemma}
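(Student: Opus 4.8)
The plan is to apply Lemma \ref{Lemma:SC} (from \cite{Bach2010}) observation-by-observation, exploiting the separability of $\Lambda$ across $i$. Recall that $\Lambda(\eta) = \En[\ell_i(\tilde x_i'\eta)]$ where $\ell_i(u) = \log(1+\exp(u)) - y_i u$, so that $\ell_i''(u) = \G(u)\{1-\G(u)\}$ and $\ell_i'''(u) = \G(u)\{1-\G(u)\}\{1-2\G(u)\}$, hence $|\ell_i'''(u)| \leq \ell_i''(u)$ for all $u$ (since $|1-2\G(u)|\leq 1$). Fixing $\delta$, I would set, for each $i$, the one-dimensional function $g_i(t) := \ell_i(\tilde x_i'\eta_0 + t\,\tilde x_i'\delta)$, which is convex, three times differentiable, and satisfies $|g_i'''(t)| \leq |\tilde x_i'\delta|\, g_i''(t)$, so Lemma \ref{Lemma:SC} applies with $M = |\tilde x_i'\delta|$ and $g_i''(0) = w_i (\tilde x_i'\delta)^2$ (here $w_i = \G_i(1-\G_i)$, $\G_i = \G(\tilde x_i'\eta_0)$). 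Taking $t=1$ in the lower bound of Lemma \ref{Lemma:SC} gives
$$ \ell_i(\tilde x_i'\eta_0 + \tilde x_i'\delta) - \ell_i(\tilde x_i'\eta_0) - \ell_i'(\tilde x_i'\eta_0)\,\tilde x_i'\delta \;\geq\; \frac{w_i(\tilde x_i'\delta)^2}{|\tilde x_i'\delta|^2}\left\{ \exp(-|\tilde x_i'\delta|) + |\tilde x_i'\delta| - 1\right\} = w_i\left\{ \exp(-|\tilde x_i'\delta|) + |\tilde x_i'\delta| - 1\right\}.$$

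Next I would invoke Lemma \ref{Lemma:Auxtis}, $\exp(-t)+t-1 \geq \tfrac12 t^2 - \tfrac16 t^3$ for $t\geq0$, applied with $t = |\tilde x_i'\delta|$, to get the per-observation lower bound $w_i\{\tfrac12(\tilde x_i'\delta)^2 - \tfrac16|\tilde x_i'\delta|^3\}$. Averaging over $i$ and using $\Lambda(\eta_0+\delta) - \Lambda(\eta_0) - \nabla\Lambda(\eta_0)'\delta = \En[\ell_i(\cdots) - \ell_i(\cdots) - \ell_i'(\cdots)\tilde x_i'\delta]$ yields
$$ \Lambda(\eta_0+\delta) - \Lambda(\eta_0) - \nabla\Lambda(\eta_0)'\delta \;\geq\; \tfrac12\,\En[w_i(\tilde x_i'\delta)^2] - \tfrac16\,\En[w_i|\tilde x_i'\delta|^3] = \tfrac12\|\sqrt{w_i}\tilde x_i'\delta\|_{2,n}^2 - \tfrac16\,\En[w_i|\tilde x_i'\delta|^3].$$
Now I would split into two cases according to the size of $\En[w_i|\tilde x_i'\delta|^3]$ relative to $\|\sqrt{w_i}\tilde x_i'\delta\|_{2,n}^2$. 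By the definition of $\bar q_A$ we have $\En[w_i|\tilde x_i'\delta|^3] \leq \bar q_A^{-1}\|\sqrt{w_i}\tilde x_i'\delta\|_{2,n}^3$. If $\|\sqrt{w_i}\tilde x_i'\delta\|_{2,n} \leq \bar q_A$, then $\tfrac16\En[w_i|\tilde x_i'\delta|^3] \leq \tfrac16\|\sqrt{w_i}\tilde x_i'\delta\|_{2,n}^2$, so the right side is $\geq \tfrac13\|\sqrt{w_i}\tilde x_i'\delta\|_{2,n}^2$. If instead $\|\sqrt{w_i}\tilde x_i'\delta\|_{2,n} > \bar q_A$, I would rescale: apply the inequality just derived to $\delta' = (\bar q_A/\|\sqrt{w_i}\tilde x_i'\delta\|_{2,n})\,\delta$, which satisfies $\|\sqrt{w_i}\tilde x_i'\delta'\|_{2,n} = \bar q_A$, and then use convexity of $u\mapsto \Lambda(\eta_0+u\delta)$ to transfer the bound at $\delta'$ back to $\delta$ (the left-hand side is a convex function of the scaling that vanishes with zero derivative at $0$, hence superadditive along the ray in the sense that dividing the argument by a factor $\geq 1$ decreases it by at least that factor). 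This gives the lower bound $\tfrac13 \bar q_A\|\sqrt{w_i}\tilde x_i'\delta\|_{2,n}$ in the second case. Combining the two cases gives the stated minimum.

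The main obstacle I anticipate is the rescaling/convexity argument in the second case: one must be careful that $A$ is a cone (or at least that $\delta' \in A$ when $\delta\in A$, which holds for both $A=\Delta_\cc$ and $A=\{\|\delta\|_0\leq Cs\}$ since both are cones, so $\bar q_A$ is unchanged by positive scaling) and that the convexity transfer is done correctly — namely, writing $\phi(u) := \Lambda(\eta_0 + u\delta) - \Lambda(\eta_0) - u\nabla\Lambda(\eta_0)'\delta$, which is convex with $\phi(0) = \phi'(0) = 0$, one has for $0 < \lambda \leq 1$ that $\phi(\lambda) \leq \lambda\phi(1)$, equivalently $\phi(1) \geq \lambda^{-1}\phi(\lambda)$; applying this with $\lambda = \bar q_A/\|\sqrt{w_i}\tilde x_i'\delta\|_{2,n}$ and $\phi(\lambda) \geq \tfrac13\|\sqrt{w_i}\lambda\tilde x_i'\delta\|_{2,n}^2 = \tfrac13\bar q_A^2$ gives $\phi(1) \geq \tfrac13\bar q_A\|\sqrt{w_i}\tilde x_i'\delta\|_{2,n}$, as needed. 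Everything else is routine bookkeeping.
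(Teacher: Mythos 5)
Your proposal is correct and follows essentially the same route as the paper: apply the self-concordance bound of Lemma \ref{Lemma:SC} observation-by-observation with $M=|\tilde x_i'\delta|$, lower-bound via Lemma \ref{Lemma:Auxtis} to get $\tfrac12\En[w_i|\tilde x_i'\delta|^2]-\tfrac16\En[w_i|\tilde x_i'\delta|^3]$, use the definition of $\bar q_A$ to obtain the quadratic bound on the ball $\|\sqrt{w_i}\tilde x_i'\delta\|_{2,n}\leq \bar q_A$, and extend by convexity of the Bregman remainder along the ray to get the linear branch outside. Your explicit rescaling argument (with the observation that $A$ is a cone, so $\bar q_A$ is scale-invariant) is just a more carefully spelled-out version of the paper's Step 1, which defines the radius $r_A$ and invokes the same convexity transfer.
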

\begin{proof}
 Step 1. (Minoration).   Define the maximal radius over which the following criterion function can be bounded below by a suitable quadratic function
$$ r_A = \sup_{r} \left\{ r \ : \begin{array}{rl}  & \Lambda(\eta_0+ \delta) - \Lambda(\eta_0) -\nabla \Lambda(\eta_0)'\delta \geq \frac{1}{3} \|\sqrt{w_i}\tilde x_i'\delta\|_{2,n}^2, \\
& \ \mbox{for all} \ \delta\in A, \ \|\sqrt{w_i}\tilde x_i'\delta\|_{2,n} \leq r\end{array}\right\}.$$
Step 2 below shows that  $r_{A} \geq \bar q_A$. By construction of $r_A$ and the convexity of $\Lambda(\eta_0+ \delta) - \Lambda(\eta_0) -\nabla \Lambda(\eta_0)'\delta$,
 $$ \begin{array}{lll}
 && \Lambda(\eta_0+ \delta) - \Lambda(\eta_0) -\nabla \Lambda(\eta_0)'\delta \geq  \\
&&     \geq \frac{\|\sqrt{w_i}\tilde x_i'\delta\|_{2,n}^2}{3} \wedge \left\{ \frac{\|\sqrt{w_i}\tilde x_i'\delta\|_{2,n}}{r_A} \cdot {\displaystyle \inf_{\tilde \delta\in A, \|\sqrt{w_i}\tilde x_i'\tilde\delta\|_{2,n} \geq r_A}} \!\! \Lambda(\eta_0+ \tilde \delta) - \Lambda(\eta_0) -\nabla \Lambda(\eta_0)'\tilde \delta \right\}\\
&&  \geq   \frac{\|\sqrt{w_i}\tilde x_i'\delta\|_{2,n}^2}{3} \wedge \left\{ \frac{\|\sqrt{w_i}\tilde x_i'\delta\|_{2,n}}{r_A} \frac{r_A^2}{3}\right\}
 \geq    \frac{\|\sqrt{w_i}\tilde x_i'\delta\|_{2,n}^2}{3} \wedge \left\{ \frac{\bar q_A}{3}\|\sqrt{w_i}\tilde x_i'\delta\|_{2,n}\right\}.
\end{array}$$

Step 2. ($r_A\geq \bar q_A$) Defining $g_i(t) = \log\{1+ \exp( \tilde x_i'\eta_0+t\tilde x_i'\delta)\}$  we have
$$\begin{array}{rl}
 & \Lambda( \eta_0+\delta) - \Lambda(\eta_0) -\nabla \Lambda(\eta_0)'\delta =\\ & = \En\left[ \log \{1+{\rm exp}(\tilde x_i'\{\eta_0+\delta\})\}-y_i \tilde x_i'(\eta_0+\delta)\right] \\
  & \ \ - \En\left[ \log \{1+{\rm exp}(\tilde x_i'\eta_0)-y_i \tilde x_i' \eta_0\}\right] - \En\left[ (\G_i-y_i) \tilde x_i'\delta\right]\\
 & = \En\left[ \log \{1+{\rm exp}(\tilde x_i'\{\eta_0+\delta\})\} - \log \{1+{\rm exp}(\tilde x_i'\eta_0)\} -  \G_i  \tilde x_i'\delta\right]\\
& = \En[ g_i(1) - g_i(0) - 1\cdot g_i'(0) ] \end{array}$$
Note that the function $g_i$ is three times differentiable and satisfies, for $\G_i(t) :=\exp( \tilde x_i'\eta_0 + t \tilde x_i'\delta )/\{1+\exp( \tilde x_i'\eta_0 + t \tilde x_i'\delta )\}$,
$$  g'_i(t)  =   (\tilde x_i'\delta) \G_i(t), \ \ \
 g''_i(t)  =  (\tilde x_i'\delta)^2 \G_i(t)[1-\G_i(t)], \ \ \  g'''_i(t) = (\tilde x_i'\delta)^3 \G_i(t)[1-\G_i(t)][1-2\G_i(t)].$$
Thus $|g'''_i(t)|\leq |\tilde x_i'\delta| g''_i(t)$. Therefore, by Lemmas \ref{Lemma:SC} and \ref{Lemma:Auxtis} we have
$$\begin{array}{rl}
  g_i(1) - g_i(0) - 1\cdot g_i'(0) &   \geq \frac{(\tilde x_i'\delta)^2w_i}{(\tilde x_i'\delta)^2}\left\{ \exp(-|\tilde x_i'\delta|) + |\tilde x_i'\delta| -1 \right\}  \\
  & \geq w_i \left\{ \frac{|\tilde x_i'\delta|^2}{2} - \frac{|\tilde x_i'\delta|^3}{6} \right\}\end{array}$$
Therefore we have
$$\begin{array}{rl}
\Lambda(\eta_0+\delta) - \Lambda(\eta_0) -\nabla \Lambda(\eta_0)'\delta & \displaystyle \geq \mbox{$\frac{1}{2}$}\En\left[ w_i|\tilde x_i'\delta|^2\right] - \mbox{$\frac{1}{6}$}\En\left[ w_i|\tilde x_i'\delta|^3\right]\\ \end{array}$$


Note that for any $\delta\in A$ such that $ \|\sqrt{w_i}\tilde x_i'\delta\|_{2,n} \leq  \bar q_A$ we have $$ \|\tilde x_i'\delta\|_{2,n} \leq  \bar q_A \leq \|\sqrt{w_i}\tilde x_i'\delta\|_{2,n}^{3}/\En\[w_i|\tilde x_i'\delta|^3\],$$
so that $\En[w_i|\tilde x_i'\delta|^3] \leq \En[w_i|\tilde x_i'\delta|^2]$. Therefore we have
$$\begin{array}{rl}
\Lambda(\eta_0+\delta) - \Lambda(\eta_0) -\nabla \Lambda(\eta_0)'\delta & \displaystyle \geq \mbox{$\frac{1}{2}$}\En\left[ w_i|\tilde x_i'\delta|^2\right] - \mbox{$\frac{1}{6}$}\En\left[ w_i|\tilde x_i'\delta|^3\right] \\
&\geq \frac{1}{3}\En\left[ w_i|\tilde x_i'\delta|^2\right]\\ \end{array}$$

\end{proof}

\subsection{Penalty Choice and Rate for $\ell_1$-Penalized Logistic Regression}

Next we establish a simple (and known) bound for the choice of the penalty level $\lambda$ within Lasso-Logistic under standard normalization. Refinements are possible under additional mild assumptions on the covariates.

\begin{lemma}[Choice of Penalty, Hoeffding's Inequality]\label{lemma:PenaltyHoeffding}
Assume that $\En[\tilde x_{ij}^2]=1$. Then, for any $\gamma \in (0,1)$ we have $$ P \left( \|\nabla \Lambda(\eta_0) \|_\infty \leq \sqrt{2 \log(2(p+1)/\gamma) / n } \right) \leq \gamma.$$
\end{lemma}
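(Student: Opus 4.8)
The plan is to recognize this as a routine consequence of Hoeffding's inequality applied coordinate-wise together with a union bound. I would also point out that the displayed conclusion contains an evident typo: what is proved (and used later) is the bound $P\bigl(\|\nabla \Lambda(\eta_0)\|_\infty > \sqrt{2\log(2(p+1)/\gamma)/n}\,\bigr) \leq \gamma$, equivalently that the event in the statement holds with probability at least $1-\gamma$.

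First I would write the score vector explicitly. Since $\Lambda(\eta) = \En[\log\{1+\exp(\tilde x_i'\eta)\} - y_i \tilde x_i'\eta]$, differentiation gives $\nabla\Lambda(\eta_0) = \En[\{\G(\tilde x_i'\eta_0) - y_i\}\tilde x_i]$, whose $j$-th coordinate is $\frac1n\sum_{i=1}^n \xi_{ij}$ with $\xi_{ij} := \{\G(\tilde x_i'\eta_0) - y_i\}\tilde x_{ij}$. Working conditionally on the design $\{\tilde x_i\}_{i=1}^n$ — which is permissible, since the normalization $\En[\tilde x_{ij}^2]=1$ is imposed on it — the variables $\xi_{ij}$ are independent across $i$, have conditional mean zero because $\Ep[y_i\mid \tilde x_i] = \G(\tilde x_i'\eta_0)$, and are bounded: as $y_i\in\{0,1\}$ and $\G(\cdot)\in(0,1)$ we have $|\G(\tilde x_i'\eta_0) - y_i|\leq 1$, hence $\xi_{ij}\in[-|\tilde x_{ij}|,\,|\tilde x_{ij}|]$. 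Applying Hoeffding's inequality to the average, the ranges $2|\tilde x_{ij}|$ contribute $\sum_{i=1}^n (2|\tilde x_{ij}|)^2 = 4n\En[\tilde x_{ij}^2] = 4n$, so for each fixed $j$ and every $t>0$,
$$
P\Bigl(\,\bigl|\En[\xi_{ij}]\bigr| > t \,\Bigr) \;\leq\; 2\exp\Bigl(-\tfrac{2 n^2 t^2}{4n}\Bigr) \;=\; 2\exp\Bigl(-\tfrac{n t^2}{2}\Bigr).
$$
A union bound over the $p+1$ coordinates of $\tilde x_i=(d_i,x_i')'$ then yields $P(\|\nabla\Lambda(\eta_0)\|_\infty > t) \leq 2(p+1)\exp(-nt^2/2)$, and choosing $t = \sqrt{2\log(2(p+1)/\gamma)/n}$ makes the right-hand side exactly $\gamma$. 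Since this holds for every realization of the design meeting the normalization, it holds unconditionally.

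There is essentially no obstacle in this argument; it is mechanical. The only two points worth a sentence are (i) the boundedness $|\G(\tilde x_i'\eta_0)-y_i|\leq 1$, which is precisely what makes Hoeffding (rather than a Bernstein-type inequality) the right tool here, and (ii) the book-keeping of conditioning on the possibly random design before invoking the normalization. Everything else is a direct computation.
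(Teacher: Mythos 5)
Your proof is correct and follows essentially the same route as the paper's: write the score as $\En[(y_i-\G_i)\tilde x_{ij}]$, note $|y_i-\G_i|\leq 1$, apply Hoeffding coordinate-wise using $\En[\tilde x_{ij}^2]=1$, and take a union bound over the $p+1$ coordinates. You are also right that the displayed inequality in the statement has the direction of the event reversed; what is actually proved (and what the paper's own proof establishes) is $P\left(\|\nabla\Lambda(\eta_0)\|_\infty \geq \sqrt{2\log(2(p+1)/\gamma)/n}\right)\leq\gamma$.
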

\begin{proof} Let $\G_i = \Ep[y_i\mid \tilde x_i] = \frac{{\rm exp}( \tilde x_i'\eta_0)}{1+{\rm exp}( \tilde x_i'\eta_0)}$, so that $\|\nabla \Lambda(\eta_0) \|_\infty = \left\| \En[\left( y_i - \G_i\right) \tilde x_i] \right\|_\infty$. Then
$$ P( \left\| \En[\left( y_i - \G_i\right) \tilde x_i] \right\|_\infty \geq t ) \leq (p+1)\max_{j\leq p} P( | \En[\left( y_i - \G_i\right) \tilde x_{ij}] | \geq t ) \leq 2(p+1){\rm exp}( - t^2n/ 2  ). $$
\end{proof}


\begin{lemma}[Choice of Penalty, Self-Normalized Moderate Deviation Theory]\label{Lemma:PenaltySNDT}
Normalize the covariates so that $\En[\tilde x_{ij}^2]=1$, let $l_j = \sqrt{\En[w_i\tilde x_{ij}^2]}$, and $\hat l_j=\sqrt{\En[\hat w_i\tilde x_{ij}^2]}$. Assume that $K_{\tilde x}^2 \log p \leq n\delta_n \min_j l_j^2$, $\Phi^{-1}(1-2p/\gamma) \leq \delta_n n^{1/3}$, and
$ \|\hat w_i - w_i\|_{2,n} K_{\tilde x} \leq \delta_n \min_j l_j^2.$ Then, setting $\hat \Gamma = \diag(\hat l)$, for any $\gamma \in (0,1)$ and $\mu > 0$, for $n$ sufficiently large we have $$ P \left( \|\hat\Gamma^{-1}\nabla \Lambda(\eta_0) \|_\infty \leq \{1+ \mu \}\Phi^{-1}(1-\gamma/[2p])/\sqrt{n} \right) \leq \gamma + o(1).$$
\end{lemma}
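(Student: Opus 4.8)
The plan is to prove the penalty-level bound by a coordinatewise moderate-deviation estimate combined with a union bound, after reducing the estimated normalization to a deterministic one. I should first record that, read literally, the displayed inequality asserts that a high-probability event is rare; the substantive content—matching the companion Hoeffding-based lemma, whose proof bounds $\Pr(\|\nabla\Lambda(\eta_0)\|_\infty \geq t)$—is that the displayed event $\{\|\hat\Gamma^{-1}\nabla\Lambda(\eta_0)\|_\infty \leq \{1+\mu\}\Phi^{-1}(1-\gamma/[2p])/\sqrt n\}$ holds with probability at least $1-\gamma-o(1)$, equivalently its complement has probability at most $\gamma+o(1)$; this is what I establish. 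The starting point is the exact score, whose $j$th coordinate is $[\nabla\Lambda(\eta_0)]_j=\En[(\G_i-y_i)\tilde x_{ij}]$ with $\G_i=\G(\tilde x_i'\eta_0)=\Ep[y_i\mid\tilde x_i]$, so that $\varepsilon_i:=y_i-\G_i$ is conditionally mean-zero, satisfies $|\varepsilon_i|\leq1$, and has $\Ep[\varepsilon_i^2\mid\tilde x_i]=w_i$. Thus $\sqrt n[\nabla\Lambda(\eta_0)]_j=-n^{-1/2}\sum_i\varepsilon_i\tilde x_{ij}$ is a sum of independent (given the design), bounded, mean-zero terms whose conditional variance sum equals $l_j^2=\En[w_i\tilde x_{ij}^2]$.

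Next I would replace the estimated normalization $\hat l_j$ by $l_j$. Since $\hat l_j^2-l_j^2=\En[(\hat w_i-w_i)\tilde x_{ij}^2]$, Cauchy--Schwarz together with $\En[\tilde x_{ij}^4]\leq K_{\tilde x}^2\En[\tilde x_{ij}^2]=K_{\tilde x}^2$ gives $|\hat l_j^2-l_j^2|\leq\|\hat w_i-w_i\|_{2,n}K_{\tilde x}\leq\delta_n\min_j l_j^2\leq\delta_n l_j^2$, so that $\max_j|\hat l_j^2/l_j^2-1|\leq\delta_n$ and the $\hat l_j$-versus-$l_j$ discrepancy is absorbed into the slack $\mu$. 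To convert the $l_j$-normalized statistic into a genuinely self-normalized one I write it as $T_j\cdot\{V_{nj}/(\sqrt n\,l_j)\}$ with $T_j=\sum_i\varepsilon_i\tilde x_{ij}/V_{nj}$ and $V_{nj}^2=\sum_i\varepsilon_i^2\tilde x_{ij}^2$, and show $\max_j|V_{nj}^2/(n l_j^2)-1|=o_P(1)$. The sharp form of this step exploits that the summands $(\varepsilon_i^2-w_i)\tilde x_{ij}^2$ have conditional variance at most $w_i\tilde x_{ij}^4\leq K_{\tilde x}^2 w_i\tilde x_{ij}^2$; a Bernstein/Lemma~\ref{cor:LoadingConvergence}-type bound then yields $\max_j|(\En-\barEp)[\varepsilon_i^2\tilde x_{ij}^2]|\lesssim_P K_{\tilde x}\, l_j\sqrt{\log p/n}$, which is $o(\min_j l_j^2)$ exactly under $K_{\tilde x}^2\log p\leq n\delta_n\min_j l_j^2$.

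With the normalization reduced to self-normalization up to a factor $1+o(1)$, I would apply the moderate-deviation theorem for self-normalized sums (Jing--Shao--Wang, as invoked in \cite{BelloniChernozhukovKato2013a,BellChenChernHans:nonGauss}) coordinatewise: for each $j$, $\Pr(|T_j|\geq x)\leq 2(1-\Phi(x))(1+o(1))$ uniformly over $x$ in the admissible range. The relevant third-moment ratio is $\En[|\varepsilon_i\tilde x_{ij}|^3]/(n^{1/2}l_j^3)\lesssim K_{\tilde x}/(\sqrt n\,l_j)$ (using $|\varepsilon_i|^3\leq\varepsilon_i^2$ and $|\tilde x_{ij}|\leq K_{\tilde x}$), which is small by the first growth condition, while the second condition $\Phi^{-1}(1-\gamma/[2p])\leq\delta_n n^{1/3}$ guarantees that the level $x=\Phi^{-1}(1-\gamma/[2p])\asymp\sqrt{\log(p/\gamma)}$ stays inside the admissible moderate-deviation range uniformly in $j$. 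Choosing $x=\Phi^{-1}(1-\gamma/[2p])$ so that $1-\Phi(x)=\gamma/[2p]$, and union-bounding over the $p$ coordinates, gives $\Pr(\max_j|\sqrt n[\nabla\Lambda(\eta_0)]_j/l_j|\geq x)\leq p\cdot 2\cdot(\gamma/[2p])(1+o(1))=\gamma+o(1)$. Folding the $\hat l_j$-versus-$l_j$ and $V_{nj}$-versus-$l_j$ factors through the slack $\mu$ then shows the displayed event holds with probability at least $1-\gamma-o(1)$.

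The main obstacle is making both reductions—the estimated-normalization step and the empirical-self-normalizer step—hold uniformly over all coordinates with an error that is genuinely $o(1)$ rather than merely $o_P(1)$ per coordinate; this is precisely where the calibrated growth conditions enter, and where the crude application of Lemma~\ref{cor:LoadingConvergence} (which would contribute a factor $K_{\tilde x}^2$ rather than $K_{\tilde x}\,l_j$) must be sharpened to exploit the heteroscedastic variance structure $\Ep[\varepsilon_i^2\mid\tilde x_i]=w_i$. A secondary subtlety is validating the moderate-deviation approximation at the comparatively large level $x\asymp\sqrt{\log(p/\gamma)}$ simultaneously for every coordinate, for which the second growth condition $\Phi^{-1}(1-\gamma/[2p])\leq\delta_n n^{1/3}$ is exactly tailored.
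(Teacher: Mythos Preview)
Your proposal is correct and follows essentially the same three-stage route as the paper: reduce the estimated normalizer $\hat l_j$ to the deterministic $l_j$, reduce $l_j$ to the empirical self-normalizer $\tilde l_j=\sqrt{\En[(y_i-\G_i)^2\tilde x_{ij}^2]}$, and then apply self-normalized moderate deviations coordinatewise with a union bound. You also correctly flag that the displayed inequality is stated with the wrong direction (the intended content is that the complement has probability at most $\gamma+o(1)$); the companion Hoeffding lemma in the paper carries the same typo.

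The only notable difference is in how the $l_j\leftrightarrow\tilde l_j$ step is executed. The paper uses the elementary inequality $|\sqrt a-\sqrt b|\leq\sqrt{|a-b|}$ together with Lemma~\ref{cor:LoadingConvergence} applied to $\zeta_i=y_i-\G_i$ (which is bounded, so no extra moment control is needed), obtaining $\max_j|\tilde l_j-l_j|\leq\{\max_j|(\En-\barEp)[(y_i-\G_i)^2\tilde x_{ij}^2]|\}^{1/2}$ directly. Your route instead writes the ratio $V_{nj}/(\sqrt n\,l_j)$ and seeks a Bernstein-type refinement exploiting $\Ep[\varepsilon_i^2\mid\tilde x_i]=w_i$ to gain a factor $l_j/K_{\tilde x}$. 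The paper's cruder square-root trick already suffices under the stated growth condition $K_{\tilde x}^2\log p\leq n\delta_n\min_j l_j^2$, so your proposed sharpening, while valid, is not needed for the conclusion; it would however yield a slightly weaker growth requirement if one wanted it.
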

\begin{proof}
Let $\Gamma = \diag(l)$, $\tilde l_j = \sqrt{\En[(y_i-\G_i)^2\tilde x_{ij}^2]}$, and $\widetilde \Gamma = \diag(\tilde l)$. We have
$$\begin{array}{rl}
\|\hat\Gamma^{-1}\nabla \Lambda(\eta_0) \|_\infty & \leq  \|\{\hat\Gamma^{-1}-\Gamma^{-1}+\Gamma^{-1}-\widetilde\Gamma^{-1}\} \widetilde \Gamma  \widetilde \Gamma^{-1} \nabla \Lambda(\eta_0)\|_\infty +  \|\widetilde \Gamma^{-1} \nabla \Lambda(\eta_0)\|_\infty\\
&\leq \{\|\{\hat\Gamma^{-1}-\Gamma^{-1}\}\widetilde\Gamma\|_\infty +\|\{\Gamma^{-1}-\widetilde\Gamma^{-1}\} \widetilde \Gamma  \|_\infty\} \|\widetilde \Gamma^{-1} \nabla \Lambda(\eta_0)\|_\infty +  \|\widetilde \Gamma^{-1} \nabla \Lambda(\eta_0)\|_\infty\\
& \leq \left\{ \max_{j\leq p} \left|\frac{\tilde l_j}{l_j}\frac{l_j-\hat l_j}{\hat l_j}\right| + \max_{j\leq p} \left|\frac{\tilde l_j}{l_j}\frac{\tilde l_j- l_j}{\tilde l_j}\right| +1 \right\} \|\widetilde \Gamma^{-1} \nabla \Lambda(\eta_0)\|_\infty.\end{array}$$
Since $w_i$ and $\hat w_i$ are non-negative we have $$\begin{array}{rl}
\max_{j\leq p}|l_j - \hat l_j| & \leq  \max_{j\leq p} \sqrt{\En[|\hat w_i - w_i|\tilde x_{ij}^2]} \leq \|\hat w_i - w_i\|_{2,n}^{1/2}\max_{j\leq p} \{\En[\tilde x_{ij}^4] \}^{1/4}.\end{array}$$
Also, since $\Ep[(y_i-\G_i)^2\mid \tilde x_i] = w_i$ and for positive number $|\sqrt{a}-\sqrt{b}|\leq \sqrt{|a-b|}$, we have
 $$\begin{array}{rl}
\max_{j\leq p}\left| \tilde l_j - l_j \right| & = \max_{j\leq p}\left| \sqrt{\En[(y_i-\G_i)^2\tilde x_{ij}^2]} - \sqrt{\barEp[w_i\tilde x_{ij}^2]} \right|\\
& \leq  \sqrt{\max_{j\leq p}|(\En-\barEp)[(y_i-\G_i)^2\tilde x_{ij}^2]|}\\
\end{array}$$
By Lemma \ref{cor:LoadingConvergence} we have
$$ \max_{j\leq p}|(\En-\barEp)[(y_i-\G_i)^2\tilde x_{ij}^2] | \lesssim_P \sqrt{\frac{\log p}{n}} \max_{j\leq p} \{\En[\tilde x_{ij}^4] \}^{1/2} $$
Therefore for $n$ large enough we have $\max_{j\leq p}\frac{| \hat l_j - l_j|}{l_j}\vee \frac{|\tilde l_j -l_j|}{l_j} \leq \mu/16$ under the assumed growth conditions with probability $1-o(1)$. In the same event we have
$$ \|\hat\Gamma^{-1}\nabla \Lambda(\eta_0) \|_\infty  \leq \{1 + \mu/2 \} \| \widetilde \Gamma^{-1} \nabla \Lambda(\eta_0)\|_\infty.$$ Finally, by self-normalized moderate deviation theory we have
$$ P( \| \widetilde \Gamma^{-1} \nabla \Lambda(\eta_0)\|_\infty > t) \leq p\max_{j\leq p} P\left( \frac{\En[(y_i-\G_i)\tilde x_{ij}]}{\sqrt{\En[(y_i-\G_i)^2\tilde x_{ij}^2]}} > t \right) \leq 2p\Phi^{-1}(1-\gamma/[2p]) \{ 1 + O(\delta_n)\} $$
\end{proof}

\begin{remark}\label{Remark:Penalty}
Note that we can replace $(\hat w_i)_{i=1}^n$ with $(\bar w_i)_{i=1}^n$ in Lemma \ref{Lemma:PenaltySNDT} if $\bar w_i \geq w_i$ by construction. For instance $w_i \leq \bar w_i := 1/4$. Therefore it is valid to use
$\lambda = \frac{c}{2}\sqrt{n}\Phi^{-1}(1-\gamma/[2p])$ and $\hat l_j = 1$ for $c>1$.
\end{remark}

\begin{lemma}\label{Lemma:LassoLogisticRateRaw}
Assume $\lambda/n \geq c \|\nabla \Lambda(\eta_0) \|_\infty $, $c>1$ and let $\cc = (c+1)/(c-1)$.
Provided that $\bar q_{\Delta_\cc} > 3(1+\frac{1}{c})\lambda\sqrt{s}/(n\kappa_\cc)$
$$ \|\sqrt{w_i}\tilde x_i'(\hat\eta-\eta_0)\|_{2,n} \leq 3(1+\mbox{$\frac{1}{c}$})\frac{\lambda\sqrt{s}}{n\kappa_\cc} \ \ \ \mbox{and} \ \ \ \|\hat \eta -\eta_0\|_1 \leq 3\frac{(1+c)(1+\cc)}{c}\frac{\lambda s}{n\kappa_\cc^2}$$
\end{lemma}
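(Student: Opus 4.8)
The plan is to run the standard oracle inequality argument for $\ell_1$-penalized M-estimation, but with the logistic loss handled through the Minoration Lemma (Lemma \ref{Lemma:Minoration}) rather than a quadratic expansion. First I would exploit convexity: because $\hat\eta$ minimizes the penalized objective, a standard sub-gradient/convexity argument shows that $\delta := \hat\eta - \eta_0$ satisfies the usual cone condition. Concretely, from $\Lambda(\hat\eta) + \frac{\lambda}{n}\|\hat\eta\|_1 \le \Lambda(\eta_0) + \frac{\lambda}{n}\|\eta_0\|_1$ and $\frac{\lambda}{n} \ge c\|\nabla\Lambda(\eta_0)\|_\infty$ with $c>1$, together with the convexity inequality $\Lambda(\hat\eta) - \Lambda(\eta_0) \ge \nabla\Lambda(\eta_0)'\delta$, one deduces $\|\delta_{T^c}\|_1 \le \cc \|\delta_T\|_1$ with $\cc = (c+1)/(c-1)$, i.e. $\delta \in \Delta_\cc$, so the restricted eigenvalue $\kappa_\cc$ and the nonlinear impact coefficient $\bar q_{\Delta_\cc}$ both apply to $\delta$.

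Next I would combine two bounds on $D(\delta) := \Lambda(\eta_0+\delta) - \Lambda(\eta_0) - \nabla\Lambda(\eta_0)'\delta$. From below, Lemma \ref{Lemma:Minoration} gives $D(\delta) \ge \tfrac13\|\sqrt{w_i}\tilde x_i'\delta\|_{2,n}^2 \wedge \tfrac{\bar q_{\Delta_\cc}}{3}\|\sqrt{w_i}\tilde x_i'\delta\|_{2,n}$. From above, rearranging the basic inequality and using Hölder together with the penalty domination gives $D(\delta) \le \frac{\lambda}{n}(\|\eta_0\|_1 - \|\hat\eta\|_1) - \nabla\Lambda(\eta_0)'\delta \le \frac{\lambda}{n}(\|\delta_T\|_1 - \|\delta_{T^c}\|_1) + \frac{\lambda}{cn}\|\delta\|_1 \le (1+\tfrac1c)\frac{\lambda}{n}\|\delta_T\|_1$. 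Then $\|\delta_T\|_1 \le \sqrt{s}\|\delta_T\| \le \sqrt{s}\,\|\sqrt{w_i}\tilde x_i'\delta\|_{2,n}/\kappa_\cc$, so the upper bound is $(1+\tfrac1c)\frac{\lambda\sqrt s}{n\kappa_\cc}\|\sqrt{w_i}\tilde x_i'\delta\|_{2,n}$.

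Now I would argue by cases on which branch of the minimum is active. If the quadratic branch is active, $\tfrac13\|\sqrt{w_i}\tilde x_i'\delta\|_{2,n}^2 \le (1+\tfrac1c)\frac{\lambda\sqrt s}{n\kappa_\cc}\|\sqrt{w_i}\tilde x_i'\delta\|_{2,n}$ yields $\|\sqrt{w_i}\tilde x_i'\delta\|_{2,n} \le 3(1+\tfrac1c)\frac{\lambda\sqrt s}{n\kappa_\cc}$, the claimed bound. If instead the linear branch were active, then $\tfrac{\bar q_{\Delta_\cc}}{3}\|\sqrt{w_i}\tilde x_i'\delta\|_{2,n} \le (1+\tfrac1c)\frac{\lambda\sqrt s}{n\kappa_\cc}\|\sqrt{w_i}\tilde x_i'\delta\|_{2,n}$ forces $\bar q_{\Delta_\cc} \le 3(1+\tfrac1c)\frac{\lambda\sqrt s}{n\kappa_\cc}$, contradicting the hypothesis $\bar q_{\Delta_\cc} > 3(1+\tfrac1c)\frac{\lambda\sqrt s}{n\kappa_\cc}$; more carefully, the linear branch is active exactly when $\|\sqrt{w_i}\tilde x_i'\delta\|_{2,n} > \bar q_{\Delta_\cc}$, and in that regime one again gets $\bar q_{\Delta_\cc} < \|\sqrt{w_i}\tilde x_i'\delta\|_{2,n} \le 3(1+\tfrac1c)\frac{\lambda\sqrt s}{n\kappa_\cc} < \bar q_{\Delta_\cc}$, a contradiction, so only the quadratic branch occurs. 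Finally, the $\ell_1$ bound follows from the cone property: $\|\delta\|_1 = \|\delta_T\|_1 + \|\delta_{T^c}\|_1 \le (1+\cc)\|\delta_T\|_1 \le (1+\cc)\sqrt s\,\|\sqrt{w_i}\tilde x_i'\delta\|_{2,n}/\kappa_\cc \le (1+\cc)\sqrt s\cdot 3(1+\tfrac1c)\frac{\lambda\sqrt s}{n\kappa_\cc^2} = 3\frac{(1+c)(1+\cc)}{c}\frac{\lambda s}{n\kappa_\cc^2}$.

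The main obstacle — and the only genuinely nonstandard ingredient — is the derivation of the two-branch lower bound on $D(\delta)$, which is precisely the content of Lemma \ref{Lemma:Minoration} and ultimately rests on the self-concordance estimate of \cite{Bach2010} (Lemma \ref{Lemma:SC}) combined with the elementary inequality of Lemma \ref{Lemma:Auxtis}; since that lemma is already available, the remaining work is the routine convexity/cone/case-analysis bookkeeping sketched above. One small point to be careful about is ensuring the case analysis is logically airtight (the "linear branch'' case must be shown to be vacuous under the stated side condition, not merely to give a weaker bound), which is why I phrase it as a contradiction argument on $\|\sqrt{w_i}\tilde x_i'\delta\|_{2,n}$ versus $\bar q_{\Delta_\cc}$.
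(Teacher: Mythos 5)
Your proposal is correct and follows essentially the same route as the paper's proof: the basic-inequality/convexity argument to obtain the cone condition $\|\delta_{T^c}\|_1\leq \cc\|\delta_T\|_1$, the two-sided bound on $\Lambda(\hat\eta)-\Lambda(\eta_0)-\nabla\Lambda(\eta_0)'\delta$ via the Minoration Lemma and H\"older, and the case analysis showing the linear branch is ruled out by the side condition on $\bar q_{\Delta_\cc}$. Your explicit derivation of the $\ell_1$ bound from the cone property and the prediction-norm bound is also the intended (though unwritten) step in the paper.
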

\begin{proof}
Let $\delta = \hat\eta - \eta_0$. By definition of $\hat \eta$ in (\ref{Eq:LassoLogstic}) we have
$
\Lambda(\hat \eta) + \frac{\lambda}{n}\|\hat\eta\|_1  \leq \Lambda(\eta_0) + \frac{\lambda}{n}\|\eta_0\|_1$. Thus,
$$
\begin{array}{rl}
\Lambda(\hat \eta) - \Lambda(\eta_0) & \leq \frac{\lambda}{n}\|\eta_0\|_1 - \frac{\lambda}{n}\|\hat\eta\|_1 \\
& \leq \frac{\lambda}{n}\|\delta_T\|_1 - \frac{\lambda}{n} \|\delta_{T^c}\|_1\\
\end{array}
$$
However, by convexity of $\Lambda(\cdot)$ and Holder inequality we have $$
\begin{array}{rl}
\Lambda(\hat \eta) - \Lambda(\eta_0) & \geq 
-  \|\nabla \Lambda(\eta_0) \|_\infty \|\delta\|_1\\
& \geq -  \frac{\lambda}{n}\frac{1}{c} \| \delta_T\|_1 -  \frac{\lambda}{n}\frac{1}{c}\|\delta_{T^c}\|_1\\
\end{array}
$$
Combining these relations we have
$ -  \frac{\lambda}{n}\frac{1}{c} \| \delta_T\|_1 - \frac{\lambda}{n}\frac{1}{c} \|\delta_{T^c}\|_1 \leq  \frac{\lambda}{n}\|\delta_T\|_1 - \frac{\lambda}{n} \|\delta_{T^c}\|_1,$ which leads to $\|\delta_{T^c}\|_1 \leq  \cc \|\delta_T\|_1.$

By Lemma \ref{Lemma:Minoration} with $A = \Delta_\cc$ and the reasoning above we have
$$ \begin{array}{rl}
\frac{1}{3}\|\sqrt{w_i}\tilde x_i'\delta\|_{2,n}^2 \wedge \left\{ \frac{\bar q_A}{3}\|\sqrt{w_i}\tilde x_i'\delta\|_{2,n}\right\} & \leq \Lambda(\hat \eta) - \Lambda(\eta_0) -\nabla \Lambda(\eta_0)'\delta \\
& \leq \frac{\lambda}{n}\|\delta_T\|_1 - \frac{\lambda}{n}\|\delta_{T^c}\|_1 + \|\nabla \Lambda(\eta_0)\|_\infty \|\delta\|_1 \\
& \leq (1+\frac{1}{c})\frac{\lambda}{n}\|\delta_T\|_1  \leq (1+\frac{1}{c})\frac{\lambda\sqrt{s}}{n}\|\delta_T\|\\
& \leq (1+\frac{1}{c})\frac{\lambda\sqrt{s}}{n}\|\sqrt{w_i}\tilde x_i'\delta\|_{2,n}/\kappa_\cc\\
\end{array}$$
Provided that $\bar q_A>3(1+\frac{1}{c})\lambda\sqrt{s}/(\kappa_\cc n)$, so that the minimum on the LHS needs to be the quadratic term, we have
$$ \|\sqrt{w_i}\tilde x_i'\delta\|_{2,n} \leq 3(1+\mbox{$\frac{1}{c}$})\frac{\lambda\sqrt{s}}{n\kappa_\cc}$$
\end{proof}

\subsection{Sparsity of Lasso-Logistic}

We begin by establishing sparsity bounds which do not rely on large penalty choices nor on the irrepresentability condition\footnote{The irrepresentability condition is the assumption that $\|\En[\tilde x_{iT^c}\tilde x_{iT}](\En[\tilde x_{iT}\tilde x_{iT}])^{-1}\sign(\eta_{0T})\|_\infty<1$.} The (data-driven) sparsity is fundamental for the analysis of the rate of convergence of the Post-Lasso-Logistic estimator. The following lemma is useful.

\begin{lemma}\label{Lemma:AuxSparsity}
The logistic link function satisfies $|\G(t+t_0)-\G(t_0)|\leq G'(t_0)\{\exp(|t|) -1\}$. If $|t|\leq 1$ we have $\exp(|t|) -1\leq 2|t|$.
\end{lemma}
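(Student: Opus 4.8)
The plan is to reduce the first inequality to the elementary differential inequality $|G''|\le G'$ and then integrate, and to dispatch the second inequality by a one‑line convexity (or power‑series) estimate. Recall that for the logistic link $\G(t)=e^t/(1+e^t)$ we have $G'(t)=\G(t)\{1-\G(t)\}>0$ and $G''(t)=\G(t)\{1-\G(t)\}\{1-2\G(t)\}$, so that, since $1-2\G(t)\in(-1,1)$,
$$ |G''(t)|\le G'(t) \qquad \text{for every } t\in\RR. $$
Equivalently $\left|\tfrac{d}{ds}\log G'(t_0+s)\right|=|1-2\G(t_0+s)|\le 1$, and integrating this from $0$ to $s$ gives the two‑sided bound $G'(t_0)e^{-|s|}\le G'(t_0+s)\le G'(t_0)e^{|s|}$ for all $s$ (this is the same self‑concordance–type property that underlies Lemma~\ref{Lemma:SC}, here applied to the first rather than the second derivative, so Grönwall's inequality on $s\mapsto G'(t_0+s)$ works equally well).

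Next I would apply the fundamental theorem of calculus, $\G(t+t_0)-\G(t_0)=\int_0^{t}G'(t_0+s)\,ds$, take absolute values, and insert the exponential upper bound just obtained. For $t\ge 0$ this gives directly $|\G(t+t_0)-\G(t_0)|\le\int_0^{t}G'(t_0)e^{s}\,ds=G'(t_0)\{e^{t}-1\}$, and for $t<0$ one writes instead $\G(t_0)-\G(t_0+t)=\int_0^{|t|}G'(t_0-r)\,dr\le\int_0^{|t|}G'(t_0)e^{r}\,dr=G'(t_0)\{e^{|t|}-1\}$; in both cases
$$ |\G(t+t_0)-\G(t_0)|\le G'(t_0)\{e^{|t|}-1\}, $$
which is the first claim.

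For the second claim I would simply note that for $|t|\le 1$,
$$ e^{|t|}-1=\sum_{k\ge 1}\frac{|t|^{k}}{k!}\le |t|\sum_{k\ge 1}\frac{1}{k!}=(e-1)|t|\le 2|t|, $$
using $|t|^{k}\le |t|$ for $k\ge 1$ and $e-1<2$; alternatively, convexity of $s\mapsto e^{s}$ on $[0,1]$ gives $e^{s}\le 1+(e-1)s$ there, which is the same bound.

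I do not expect any genuine obstacle: the argument is entirely elementary, the only place requiring a little care being the sign bookkeeping in the integral estimate of the second paragraph (handling $t<0$ by the substitution $s\mapsto -s$). The conceptual content is just that $\log G'$ is $1$‑Lipschitz.
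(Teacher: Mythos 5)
Your proof is correct and follows essentially the same route as the paper's: both establish $|\G''|\leq \G'$, deduce that $\log \G'$ is $1$-Lipschitz so that $\G'(t_0)e^{-|s|}\leq \G'(t_0+s)\leq \G'(t_0)e^{|s|}$, and integrate once more to obtain the first inequality, with the second following by an elementary estimate. You are somewhat more careful than the paper on the sign bookkeeping for $t<0$ and in verifying $e^{|t|}-1\leq 2|t|$, both of which the paper leaves to the reader.
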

\begin{proof}
Note that $|\G''(s)|\leq \G'(s)$ for all $s$. So that $-1\leq \frac{d}{ds}\log(\G'(s)) = \frac{\G''(s)}{\G'(s)} \leq 1$. Suppose $s\geq 0$. Therefore
$$ - s \leq \log(G'(s+t_0)) - \log(G'(t_0)) \leq s. $$
In turn this implies $\G'(t_0)\exp(-s)\leq \G'(s+t_0) \leq \G'(t_0)\exp(s)$. Integrating one more time from $0$ to $t$,
$$ \G'(t_0)\{1-\exp(-t)\}\leq \G(t+t_0) - \G(t_0) \leq \G'(t_0)\{\exp(t)-1\}.$$
The first result follows by noting that $1-\exp(-t) \leq \exp(t)-1$. The second follows by verification.
\end{proof}

\begin{lemma}[Sparsity]\label{Lemma:LassoLogisticSparsity}
Consider $\hat \eta$ as defined in (\ref{Eq:LassoLogstic}), let $\hat s = |\supp(\hat\eta)|$ and suppose $\lambda/n \geq c\|\nabla \Lambda(\eta_0)\|_\infty$. Then
$$ \hat s \leq \frac{c^2(n/\lambda)^2}{(c-1)^2}\semax{\hat s}\|\tilde x_i'(\hat\eta-\eta_0)\|_{2,n}^2.$$
Provided that $\bar q_{\Delta_\cc} > 3(1+\frac{1}{c})\lambda\sqrt{s}/(n\kappa_\cc)$ we have
 $$ \hat s \leq s\cdot \semax{\hat s}\frac{9\cc^2}{\{\psi_{(r)}(\cc)\}^2\kappa_\cc^2}.$$
Moreover, if $\frac{3(1+c)(1+\cc)}{c}\frac{\lambda s}{n\kappa_\cc^2}\max_{i\leq n}\|\tilde x_i\|_\infty  \leq 1$ we have
$$\sqrt{\hat s} \leq  6\cc\frac{\sqrt{\semax{\hat s}}}{\kappa_\cc}\sqrt{s} \ \ \ \mbox{and} \ \ \ \hat s \leq s \cdot 36 \cc^2\min_{m\in \mathcal{M}} \frac{\semax{m}}{\kappa_{\cc}^2}$$ where $\mathcal{M}=\{ m \in \NN : m > 72s\cc^2\semax{m}/\kappa_{\cc}^2  \}$\end{lemma}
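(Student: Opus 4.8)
\textbf{Proof proposal for Lemma \ref{Lemma:LassoLogisticSparsity}.}
The plan is to read off the first-order optimality conditions of the convex program (\ref{Eq:LassoLogstic}) on the selected support and then combine them with the (local) Lipschitz behaviour of $\G$ and the rates already established in Lemma \ref{Lemma:LassoLogisticRateRaw}. Write $\hat T=\supp(\hat\eta)$, $\hat\delta=\hat\eta-\eta_0$, and recall $\nabla\Lambda(\eta)=\En[(\G(\tilde x_i'\eta)-y_i)\tilde x_i]$. Since $\hat\eta$ minimizes $\Lambda(\eta)+\frac{\lambda}{n}\|\eta\|_1$, the subdifferential condition forces $\nabla_j\Lambda(\hat\eta)=-\frac{\lambda}{n}\sign(\hat\eta_j)$ for every $j\in\hat T$, hence $\|(\nabla\Lambda(\hat\eta))_{\hat T}\|=\sqrt{\hat s}\,\lambda/n$.

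First I would obtain the bound that does not involve $s$. For any unit vector $a$ supported on $\hat T$, $|a'(\nabla\Lambda(\hat\eta)-\nabla\Lambda(\eta_0))|=|\En[(\G(\tilde x_i'\hat\eta)-\G(\tilde x_i'\eta_0))\tilde x_i'a]|\leq\|\tilde x_i'\hat\delta\|_{2,n}\|\tilde x_i'a\|_{2,n}\leq\sqrt{\semax{\hat s}}\,\|\tilde x_i'\hat\delta\|_{2,n}$, using that $\G$ is $1$-Lipschitz, Cauchy--Schwarz, and the definition of $\semax{\cdot}$. Taking the supremum over such $a$, invoking the penalty condition $\|\nabla\Lambda(\eta_0)\|_\infty\leq\lambda/(cn)$ (so $\|(\nabla\Lambda(\eta_0))_{\hat T}\|\leq\sqrt{\hat s}\,\lambda/(cn)$), and applying the triangle inequality gives $\sqrt{\hat s}\,\frac{\lambda}{n}(1-\tfrac1c)\leq\sqrt{\semax{\hat s}}\,\|\tilde x_i'\hat\delta\|_{2,n}$; squaring yields the first displayed inequality. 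The second then follows by substituting the prediction-norm rate: under $\bar q_{\Delta_\cc}>3(1+\tfrac1c)\lambda\sqrt{s}/(n\kappa_\cc)$, Lemma \ref{Lemma:LassoLogisticRateRaw} gives $\hat\delta\in\Delta_\cc$ and $\|\sqrt{w_i}\tilde x_i'\hat\delta\|_{2,n}\leq3(1+\tfrac1c)\lambda\sqrt{s}/(n\kappa_\cc)$, and membership in $\Delta_\cc$ lets me pass to the unweighted norm via $\|\tilde x_i'\hat\delta\|_{2,n}\leq\|\sqrt{w_i}\tilde x_i'\hat\delta\|_{2,n}/\psi_{(r)}(\cc)$; plugging in and simplifying the constant $\frac{c^2}{(c-1)^2}(1+\tfrac1c)^2=\cc^2$ produces $\hat s\leq s\,\semax{\hat s}\,9\cc^2/\{\psi_{(r)}(\cc)^2\kappa_\cc^2\}$.

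For the third part I would sharpen the gradient estimate with the refined bound of Lemma \ref{Lemma:AuxSparsity}. The extra hypothesis $\frac{3(1+c)(1+\cc)}{c}\frac{\lambda s}{n\kappa_\cc^2}\max_{i\leq n}\|\tilde x_i\|_\infty\leq1$ is precisely $\|\hat\delta\|_1\max_{i\leq n}\|\tilde x_i\|_\infty\leq1$ by the $\ell_1$-rate of Lemma \ref{Lemma:LassoLogisticRateRaw}, so $\max_{i\leq n}|\tilde x_i'\hat\delta|\leq1$ and Lemma \ref{Lemma:AuxSparsity} gives $|\G(\tilde x_i'\hat\eta)-\G(\tilde x_i'\eta_0)|\leq2w_i|\tilde x_i'\hat\delta|$. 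Repeating the Cauchy--Schwarz step with this bound and with $w_i\leq1$ (for $\|\sqrt{w_i}\tilde x_i'a\|_{2,n}\leq\sqrt{\semax{\hat s}}$) yields $\|(\nabla\Lambda(\hat\eta)-\nabla\Lambda(\eta_0))_{\hat T}\|\leq2\sqrt{\semax{\hat s}}\,\|\sqrt{w_i}\tilde x_i'\hat\delta\|_{2,n}$; combining with the optimality identity and the rate of Lemma \ref{Lemma:LassoLogisticRateRaw} as before gives $\sqrt{\hat s}\leq6\cc\sqrt{\semax{\hat s}}\sqrt{s}/\kappa_\cc$, i.e.\ $\hat s\leq36\cc^2 s\,\semax{\hat s}/\kappa_\cc^2$. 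To de-implicitize this I would use the standard sublinearity of sparse maximal eigenvalues, $\semax{\ell m}\leq\ell\,\semax{m}$ for integers $\ell\geq1$, hence $\semax{k}\leq\lceil k/m\rceil\semax{m}$: if $\hat s\geq m$ for some $m\in\mathcal M$ then $\hat s\leq36\cc^2 s\,\semax{\hat s}/\kappa_\cc^2\leq36\cc^2 s\,(2\hat s/m)\semax{m}/\kappa_\cc^2$, forcing $m\leq72s\cc^2\semax{m}/\kappa_\cc^2$, a contradiction; so $\hat s<m$ for every $m\in\mathcal M$, and monotonicity of $\semax{\cdot}$ then gives $\hat s\leq36\cc^2 s\,\semax{m}/\kappa_\cc^2$ for each such $m$, whence the claimed minimum.

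The Cauchy--Schwarz/Lipschitz manipulations and the constant-chasing are routine; the one step requiring care is the combinatorial sublinearity argument together with the bookkeeping around $\mathcal M$ — in particular, replacing the ceiling $\lceil\hat s/m\rceil$ by $2\hat s/m$ needs $\hat s\geq m$, which is exactly the scenario ruled out by contradiction. I expect this to be the main obstacle; everything else is assembled directly from Lemmas \ref{Lemma:LassoLogisticRateRaw} and \ref{Lemma:AuxSparsity} and the definitions of $\psi_{(r)}$, $\Delta_\cc$, $\semax{\cdot}$ and $\kappa_\cc$.
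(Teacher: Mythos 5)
Your proposal is correct and follows essentially the same route as the paper's proof: the KKT identity $|\nabla_j\Lambda(\hat\eta)|=\lambda/n$ on $\hat T$, a triangle inequality against $\nabla\Lambda(\eta_0)$ combined with the Lipschitz bound on $\G$ (refined via Lemma \ref{Lemma:AuxSparsity} for the third claim), the rates from Lemma \ref{Lemma:LassoLogisticRateRaw}, and the sublinearity-of-sparse-eigenvalues contradiction for the final bound. The only differences are presentational — you phrase the gradient bounds through dual unit vectors supported on $\hat T$ and spell out the sublinearity step that the paper dispatches in one sentence — and your constants all check out.
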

\begin{proof}
Let $\hat T = \supp(\hat\eta)$, $\hat s = |\hat T|$, $\delta = \hat\eta-\eta_0$, and $\hat \G_i = \exp(\tilde x_i'\hat\eta)/\{1+\exp(\tilde x_i'\hat\eta)\}$. For any $j\in \hat T$ we have $|\nabla_j\Lambda(\hat\eta)|=|\En[(y_i-\hat \G_i)\tilde x_{ij}]|=\lambda/n$.

The first relation follows from
$$\begin{array}{rl}
\frac{\lambda}{n}\sqrt{\hat s} & = \| \En[(y_i-\hat \G_i)\tilde x_{i\hat T}] \|_2 \\
& \leq \| \En[(y_i- \G_i)\tilde x_{i\hat T}] \|_2 + \| \En[(\hat \G_i-\G_i)\tilde x_{i\hat T}] \|_2\\
& \leq \sqrt{\hat s}\| \En[(y_i- \G_i)\tilde x_{i\hat T}] \|_\infty + \| \En[\tilde x_i'\delta\tilde x_{i\hat T}] \|_2\\
& \leq \frac{\lambda}{cn}\sqrt{\hat s} + \sqrt{\semax{\hat s}}\|\tilde x_i'\delta\|_{2,n}\\
\end{array} $$

The second follows from the first, the definition of $\psi_{(r)}(\cc)$, and Lemma \ref{Lemma:LassoLogisticRateRaw} so that
$$ \hat s \leq \frac{c^2(n/\lambda)^2}{(c-1)^2}\semax{\hat s}\|\tilde x_i'\delta\|_{2,n}^2 \leq \frac{c^2(n/\lambda)^2}{(c-1)^2}\semax{\hat s}\frac{\|\sqrt{w_i}\tilde x_i'\delta\|_{2,n}^2}{\psi_{(r)}(\cc)^2} \leq s \cdot \semax{\hat s}\frac{9\cc^2}{\psi_{(r)}(\cc)^2\kappa_{\cc}^2} $$
The third relation follows from
$$\begin{array}{rl}
\frac{\lambda}{n}\sqrt{\hat s} & = \| \En[(y_i-\hat \G_i)\tilde x_{i\hat T}] \|_2 \\
& \leq \| \En[(y_i- \G_i)\tilde x_{i\hat T}] \|_2 + \| \En[(\hat \G_i-\G_i)\tilde x_{i\hat T}] \|_2\\
& \leq \sqrt{\hat s}\| \En[(y_i- \G_i)\tilde x_{i\hat T}] \|_\infty + \sup_{\|\theta\|_0\leq |\hat T|,\|\theta\|=1} \En[ |\hat\G_i-\G_i| \cdot |\tilde x_i'\theta|] \\
& \leq \frac{\lambda}{cn}\sqrt{\hat s} + 2\sqrt{\semax{\hat s}}\|\sqrt{w_i}\tilde x_i'\delta\|_{2,n}\end{array} $$
where we used Lemma \ref{Lemma:AuxSparsity}  so that $
|\hat \G_i - \G_i| \leq w_i 2|\tilde x_i'\delta|$ since by Lemma \ref{Lemma:LassoLogisticRateRaw} we have $\|\delta\|_1 \leq 3\frac{(1+c)(1+\cc)}{c}\frac{\lambda s}{n\kappa_\cc^2}$ so that $\max_{i\leq n}\|\tilde x_i\|_\infty \|\delta\|_1 \leq 1$ by the assumed condition.

Therefore, by the $\|\cdot\|_{2,n}$ bound in Lemma \ref{Lemma:LassoLogisticRateRaw} we have
$$\begin{array}{rlr}
(1-\mbox{$\frac{1}{c}$})\frac{\lambda}{n}\sqrt{\hat s} & \leq  6\sqrt{\semax{\hat s}}\frac{(1+c)}{c}\frac{\lambda \sqrt{s}}{n\kappa_\cc}\\
\end{array}$$
which implies $\sqrt{\hat s} \leq  6\cc\frac{\sqrt{\semax{\hat s}}}{\kappa_\cc}\sqrt{s}$.

The last relation follows by the previous result and the fact that sparse eigenvalues are sublinear functions.
\end{proof}

\subsection{Post model selection Logistic regression rate}

\begin{lemma}\label{Lemma:PostLassoLogisticRateRaw} Consider $\widetilde \eta$ as defined in (\ref{Eq:PostLassoLogstic}). Let $\hat s^* := |\widehat T^*|$. We have
 $$ \|\sqrt{w_i}\tilde x_i'(\widetilde \eta - \eta_0)\|_{2,n} \leq \sqrt{3}\sqrt{\max\{0,\Lambda(\tilde \eta) - \Lambda(\eta_0)\}} + 3\sqrt{\hat s^*+s}\|\nabla \Lambda(\eta_0)\|_\infty /\sqrt{\semin{\hat s^*+s}}$$
 provided that $\bar q_A/6>\sqrt{\hat s^*+s}\|\nabla \Lambda(\eta_0)\|_\infty /\sqrt{\semin{\hat s^*+s}}$ and $q_A/6>\sqrt{\max\{0,\Lambda(\tilde \eta) - \Lambda(\eta_0)\}}$ for $A = \{ \delta \in \RR^p : \|\delta\|_0 \leq \hat s^* + s\}$.
\end{lemma}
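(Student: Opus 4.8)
The plan is to mimic the structure of the proof of Lemma \ref{Lemma:LassoLogisticRateRaw}, but replacing the $\ell_1$-penalty-driven cone argument with a direct argument based on the optimality of $\widetilde\eta$ over the selected support and a sparse-eigenvalue bound on $\|\delta\|_1$. Write $\delta = \widetilde\eta - \eta_0$ and note that $\supp(\delta) \subseteq \widehat T^* \cup T$, so $\|\delta\|_0 \leq \hat s^* + s$ and hence $\|\delta\|_1 \leq \sqrt{\hat s^* + s}\,\|\delta\| \leq \sqrt{\hat s^* + s}\,\|\tilde x_i'\delta\|_{2,n}/\sqrt{\semin{\hat s^*+s}}$. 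This puts $\delta$ in the set $A = \{\delta : \|\delta\|_0 \leq \hat s^* + s\}$, so the Minoration Lemma (Lemma \ref{Lemma:Minoration}) applies with this $A$ and gives
$$
\Lambda(\eta_0+\delta) - \Lambda(\eta_0) - \nabla\Lambda(\eta_0)'\delta \;\geq\; \Big\{\tfrac13 \|\sqrt{w_i}\tilde x_i'\delta\|_{2,n}^2\Big\} \wedge \Big\{\tfrac{\bar q_A}{3}\|\sqrt{w_i}\tilde x_i'\delta\|_{2,n}\Big\}.
$$

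Next I would upper-bound the left-hand side. By the optimality of $\widetilde\eta$ in (\ref{Eq:PostLassoLogstic}) (which is an unconstrained minimization over the fixed support $\widehat T^*$, and $\widetilde\eta$ is feasible), we have $\Lambda(\widetilde\eta) - \Lambda(\eta_0) \leq \Lambda(\widetilde\eta) - \Lambda(\eta_0)$ trivially — more usefully, the relevant quantity $\Lambda(\widetilde\eta) - \Lambda(\eta_0)$ appears explicitly in the statement, so I keep it as is. Then
$$
\Lambda(\eta_0+\delta) - \Lambda(\eta_0) - \nabla\Lambda(\eta_0)'\delta \;=\; \{\Lambda(\widetilde\eta) - \Lambda(\eta_0)\} - \nabla\Lambda(\eta_0)'\delta \;\leq\; \{\Lambda(\widetilde\eta) - \Lambda(\eta_0)\} + \|\nabla\Lambda(\eta_0)\|_\infty \|\delta\|_1,
$$
and using the $\ell_1$-bound on $\delta$ above, $\|\nabla\Lambda(\eta_0)\|_\infty\|\delta\|_1 \leq \sqrt{\hat s^*+s}\,\|\nabla\Lambda(\eta_0)\|_\infty\, \|\tilde x_i'\delta\|_{2,n}/\sqrt{\semin{\hat s^*+s}}$. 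Finally, relate $\|\tilde x_i'\delta\|_{2,n}$ back to the weighted prediction norm: since $w_i \leq 1$ and $\|\delta\|_0 \leq \hat s^* + s$, whenever the weighted norm is at most $\bar q_A$ we have $\|\tilde x_i'\delta\|_{2,n}^3 \leq \bar q_A \,\En[w_i|\tilde x_i'\delta|^3] \leq \|\sqrt{w_i}\tilde x_i'\delta\|_{2,n}^2\cdot\|\tilde x_i'\delta\|_{2,n}$ by the definition of $\bar q_A$, hence $\|\tilde x_i'\delta\|_{2,n} \leq \|\sqrt{w_i}\tilde x_i'\delta\|_{2,n}$ in that regime; alternatively I would carry $\|\tilde x_i'\delta\|_{2,n} \geq \|\sqrt{w_i}\tilde x_i'\delta\|_{2,n}$ the other way when needed (this mirrors the end of the Minoration proof).

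Combining: set $r = \|\sqrt{w_i}\tilde x_i'\delta\|_{2,n}$. The chain of inequalities yields $\tfrac13 r^2 \wedge \tfrac{\bar q_A}{3}r \leq \{\Lambda(\widetilde\eta)-\Lambda(\eta_0)\} + \sqrt{\hat s^*+s}\,\|\nabla\Lambda(\eta_0)\|_\infty\, r/\sqrt{\semin{\hat s^*+s}}$. Under the side conditions $\bar q_A/6 > \sqrt{\hat s^*+s}\,\|\nabla\Lambda(\eta_0)\|_\infty/\sqrt{\semin{\hat s^*+s}}$ and $\bar q_A/6 > \sqrt{\Lambda(\widetilde\eta)-\Lambda(\eta_0)}$, the minimum on the left is forced to be the quadratic term $\tfrac13 r^2$ (otherwise $r \geq \bar q_A$ leads to a contradiction: $\tfrac{\bar q_A}{3}r$ would be dominated by two terms each smaller than $\tfrac{\bar q_A}{6}r$). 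Solving the resulting quadratic inequality $\tfrac13 r^2 \leq B + Cr$ with $B = \Lambda(\widetilde\eta)-\Lambda(\eta_0)$ and $C = \sqrt{\hat s^*+s}\,\|\nabla\Lambda(\eta_0)\|_\infty/\sqrt{\semin{\hat s^*+s}}$ gives $r \leq \tfrac{3}{2}C + \sqrt{\tfrac94 C^2 + 3B} \leq 3C + \sqrt{3B}$, which is exactly the claimed bound $\|\sqrt{w_i}\tilde x_i'(\widetilde\eta-\eta_0)\|_{2,n} \leq \sqrt{3}\sqrt{\Lambda(\widetilde\eta)-\Lambda(\eta_0)} + 3\sqrt{\hat s^*+s}\|\nabla\Lambda(\eta_0)\|_\infty/\sqrt{\semin{\hat s^*+s}}$. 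The main obstacle is the careful bookkeeping of the "minimum is the quadratic term" dichotomy together with the self-concordance radius $\bar q_A$ — i.e., verifying that the side conditions genuinely rule out the linear branch — and making sure the passage between $\|\tilde x_i'\delta\|_{2,n}$ and $\|\sqrt{w_i}\tilde x_i'\delta\|_{2,n}$ is used consistently (only $w_i \leq 1$ is needed in the direction required here, so this is routine rather than delicate).
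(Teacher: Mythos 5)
Your proposal follows essentially the same route as the paper's proof: apply the Minoration Lemma with $A=\{\delta:\|\delta\|_0\leq\hat s^*+s\}$, upper-bound $\Lambda(\widetilde\eta)-\Lambda(\eta_0)-\nabla\Lambda(\eta_0)'\delta$ by $\{\Lambda(\widetilde\eta)-\Lambda(\eta_0)\}+\|\nabla\Lambda(\eta_0)\|_\infty\|\delta\|_1$, control $\|\delta\|_1$ via the sparsity of $\delta$ and $\semin{\hat s^*+s}$, use the two side conditions to dispose of the linear branch of the minimum, and solve the resulting quadratic inequality. Your final algebra ($r\leq\tfrac32C+\sqrt{\tfrac94C^2+3B}\leq 3C+\sqrt{3B}$) matches the paper's, and your contradiction argument for ruling out the linear branch is a harmless variant of the paper's direct bound $\tilde t_{2,n}\leq\sqrt{\Lambda(\widetilde\eta)-\Lambda(\eta_0)}$ on that branch.

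The one step that is wrong as you wrote it is the norm conversion. You claim that, in the relevant regime, $\|\tilde x_i'\delta\|_{2,n}\leq\|\sqrt{w_i}\tilde x_i'\delta\|_{2,n}$, deduced from the definition of $\bar q_A$. This inequality runs backwards: since $0\leq w_i\leq 1$ one always has $\|\sqrt{w_i}\tilde x_i'\delta\|_{2,n}\leq\|\tilde x_i'\delta\|_{2,n}$, and no manipulation of $\bar q_A$ (which only compares $\En[w_i|\tilde x_i'\delta|^2]$ with $\En[w_i|\tilde x_i'\delta|^3]$, both weighted) can reverse it. What is actually needed to land on the stated bound with $r=\|\sqrt{w_i}\tilde x_i'\delta\|_{2,n}$ on both sides is $\|\delta\|\leq\|\sqrt{w_i}\tilde x_i'\delta\|_{2,n}/\sqrt{\semin{\hat s^*+s}}$, i.e.\ a \emph{weighted} minimum sparse eigenvalue, or equivalently a lower bound on the ratio $\psi_{(s)}(m)$ from Lemma \ref{Lemma:RelatingWeighted} (e.g.\ $\min_i w_i$ bounded away from zero, as Condition L provides in the application). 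The paper's own proof makes this substitution silently — it writes $\|\tilde\delta\|_1\leq\tilde t_{2,n}\sqrt{\hat s^*+s}/\sqrt{\semin{\hat s^*+s}}$ with $\tilde t_{2,n}$ the weighted norm — so your instinct that something must be said here is right, but the particular justification you offer is false and should be replaced by an appeal to the weighted sparse eigenvalue (or to $\psi_{(s)}$). Everything else in your argument is sound.
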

\begin{proof}
Let $\tilde \delta=\tilde \eta - \eta_0$ and $\tilde t_{2,n} = \|\sqrt{w_i}\tilde x_i'\tilde \delta\|_{2,n}$.
By Lemma \ref{Lemma:Minoration} with $A = \{\delta \in \RR^p : \|\delta\|_0 \leq \hat s^*+s\}$, we have
$$ \begin{array}{rl}
\frac{1}{3}\tilde t_{2,n}^2 \wedge \left\{ \frac{\bar q_A}{3}\tilde t_{2,n}\right\} & \leq \Lambda(\tilde \eta) - \Lambda(\eta_0) -\nabla \Lambda(\eta_0)'\tilde\delta \\
& \leq \Lambda(\tilde \eta) - \Lambda(\eta_0) + \|\nabla \Lambda(\eta_0)\|_\infty \|\tilde\delta\|_1 \\
& \leq \max\{0,\Lambda(\tilde \eta) - \Lambda(\eta_0)\} + \tilde t_{2,n}\sqrt{\hat s^*+s}\|\nabla \Lambda(\eta_0)\|_\infty /\sqrt{\semin{\hat s^*+s}} \\
\end{array}$$
Provided that $\bar q_A/6>\sqrt{\hat s^*+s}\|\nabla \Lambda(\eta_0)\|_\infty /\sqrt{\semin{\hat s^*+s}}$ and $\bar q_A/6>\sqrt{\max\{0,\Lambda(\tilde \eta) - \Lambda(\eta_0)\}}$, if the minimum on the LHS is the linear term, we have $\tilde t_{2,n}\leq \sqrt{\max\{0,\Lambda(\tilde \eta) - \Lambda(\eta_0)\}}$ which implies the result. Otherwise, since for positive numbers $a^2 \leq b + ac$ implies $a\leq \sqrt{b} + c$, we have
$$ \tilde t_{2,n} \leq \sqrt{3}\sqrt{\max\{0,\Lambda(\tilde \eta) - \Lambda(\eta_0)\}} + 3\sqrt{\hat s^*+s}\|\nabla \Lambda(\eta_0)\|_\infty /\sqrt{\semin{\hat s^*+s}}.$$
\end{proof}


\section{Additional Monte Carlo}

\subsection{Monte Carlo for Approximately Sparse Models}

In this section we provide further simulations to illustrate the performance of the proposed methods. In particular we illustrate the performance of the method when applied to approximately sparse models. We consider a similar design to the one used in Section \ref{SecMC} of the main text, namely
$$ \Ep[y\mid d, x] = \G( d\alpha_0 + x'\{c_y\nu_y\} ), \  \ \ \  d = x'\{c_d\nu_d\} + v.  $$
However, the vectors $\nu_y$ and $\nu_d$ are set to \begin{equation}\label{DecayCoef}\begin{array}{l}\nu_{yj} = 1/j^2,
\nu_{dj} = 1/j^2,\end{array}\end{equation} so they are approximately sparse. Again we let $x = (1,z')'$ consists of an intercept and covariates $z \sim N(0,\Theta)$, and the error $v$ is i.i.d. as $N(0,1)$. The dimension $p$ of the covariates $x$ is $250$, and the sample size $n$ is $200$. The regressors are correlated with $\Theta_{ij} = \rho^{|i-j|}$ and $\rho = 0.5$. As before the coefficient $c_d$ is used to control the $R^2$ of the reduce form equation, $c_y$ is set similarly and in every repetition, we draw new errors $v_{i}$'s and controls $x_i$'s.
The figures display the results over $100$ different designs where $\alpha_0 = 0.5$ and the values of $c_y$ and $c_d$ are set to achieve $R^2=\{0,0.1,0.2,0.3,0.4,0.5,0.6,0.7,0.8,0.9\}$ for each equation. There were 1000 replications for each of the 100 designs.

Figure \ref{Fig:alpha05QuadraticDecay} reveals that the performance of the method for this approximately sparse design is very similar to the performance obtained with the sparse designs considered in Section \ref{SecMC}. Again the double selection estimator arise as a more reliable estimator.

\bibliographystyle{plain}
\bibliography{mybibSparseLogistic}

\begin{figure}[ht]
\includegraphics[width=\textwidth]{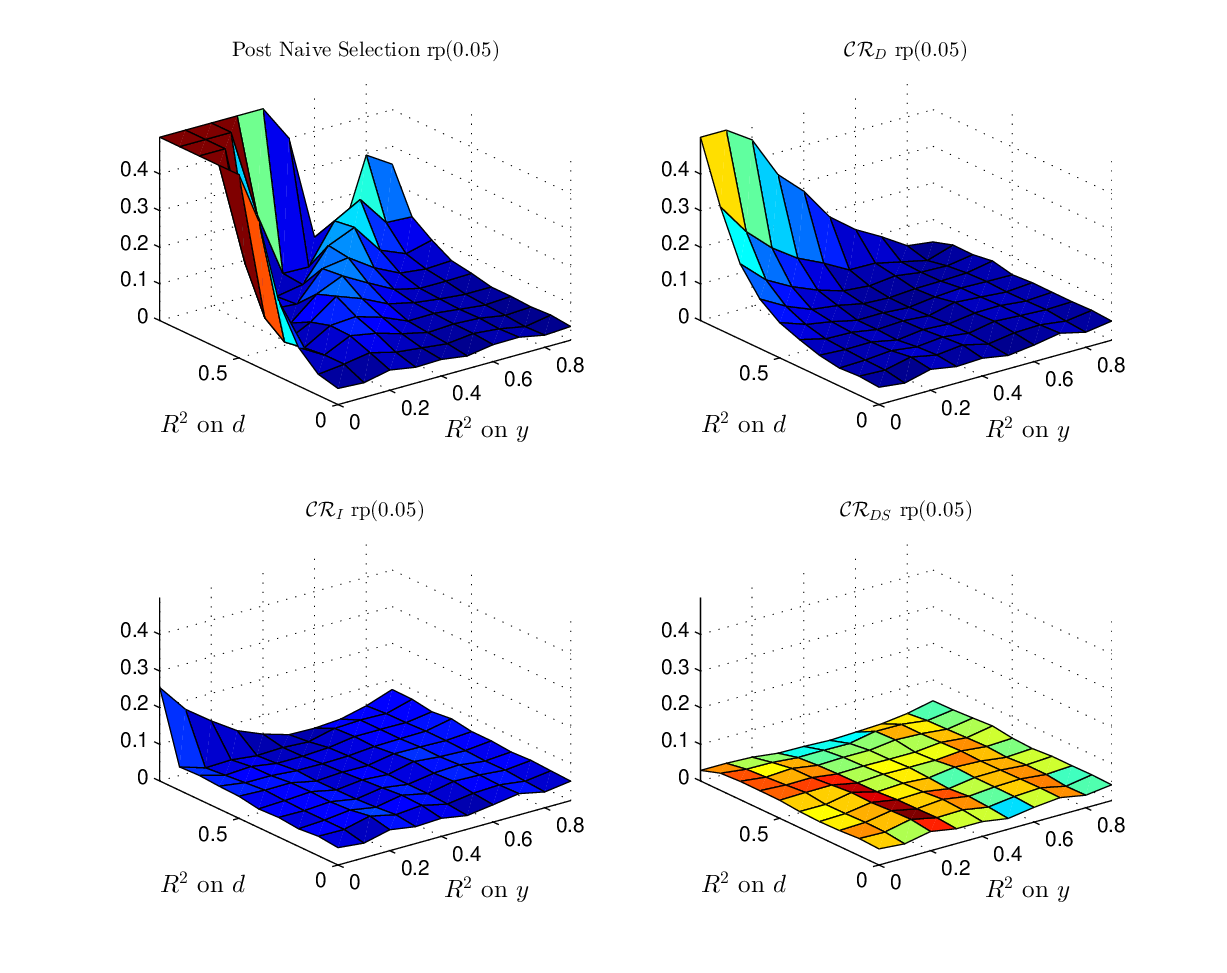}
\caption{\footnotesize For the approximately sparse model defined by (\ref{DecayCoef}), the figures display the rp(0.05) of the naive post selection estimator and the proposed confidence regions based on optimal instrument ($\CR_D$ and $\CR_I$) and double selection ($\CR_{DS}$). There are a total of 100 different designs with $\alpha_0=0.5$. The results are based on 1000 replications for each design.}\label{Fig:alpha05QuadraticDecay}
\end{figure}

\end{document}